\documentclass[11pt, letterpaper]{article}

\usepackage{graphicx}
\usepackage{epstopdf}
\usepackage{amsmath,amssymb,amsfonts}
\usepackage{setspace}
\usepackage{epstopdf}
\usepackage{color}
\usepackage[letterpaper,left=1in,right=1in,top=1in,bottom=1in]{geometry}
\usepackage{multirow}
\usepackage{longtable}
\usepackage{rotating}
\usepackage{mathrsfs}
\usepackage[ruled,vlined]{algorithm2e}
\usepackage{algpseudocode}
\usepackage{booktabs}
\usepackage{ntheorem}
\usepackage{arydshln}
\newtheorem{rmk}{Remark}
\newtheorem{assumption}{Assumption}

\newtheorem{lemma}{Lemma}
\newtheorem{proposition}{Proposition}

\newtheorem{theorem}{Theorem}

\newtheorem{definition}{Definition}
\newtheorem*{proof}{Proof}
\usepackage[definethebibliography]{easybib}
\citestyle{C}

\title{\Large Partition-based distributed extended Kalman filter for large-scale nonlinear processes with application to chemical and wastewater treatment processes}

\author{
\centerline{\normalsize Xiaojie Li$^{a}$, Adrian Wing-Keung Law$^{b,c}$, Xunyuan Yin$^{a,}$\thanks{Corresponding author: X. Yin. Tel: (+65) 6316 8746. Email: xunyuan.yin@ntu.edu.sg.}
}
\vspace{5mm}\\
\centerline{\small $^{a}$School of Chemistry, Chemical Engineering and Biotechnology, Nanyang Technological University,}\\
\centerline{\small 62 Nanyang Drive, 637459, Singapore}\\
\centerline{\small $^{b}$School of Civil and Environmental Engineering, Nanyang Technological University,}\\
\centerline{\small 50 Nanyang Avenue, 639798, Singapore}\\
\centerline{\small $^{c}$Environmental Process Modelling Centre, Nanyang Environment and Water Research Institute (NEWRI), }\\
\centerline{\small Nanyang Technological University,}\\
\centerline{\small 1 CleanTech Loop, CleanTech One, \#05-01, 637141, Singapore}
}
\allowdisplaybreaks
\begin{document}

\date{}
\maketitle
\doublespacing
%\onehalfspacing
%\setstretch{2}
%
\begin{abstract}
In this paper, we address a partition-based distributed state estimation problem for large-scale general nonlinear processes by proposing a Kalman-based approach. First, we formulate a linear full-information estimation design within a distributed framework as the basis for developing our approach. Second, the analytical solution to the local optimization problems associated with the formulated distributed full-information design is established, in the form of a recursive distributed Kalman filter algorithm. Then, the linear distributed Kalman filter is extended to the nonlinear context by incorporating successive linearization of nonlinear subsystem models, and the proposed distributed extended Kalman filter approach is formulated. We conduct rigorous analysis and prove the stability of the estimation error dynamics provided by the proposed method for general nonlinear processes consisting of interconnected subsystems. A chemical process example is used to illustrate the effectiveness of the proposed method and to justify the validity of the theoretical findings. In addition, the proposed method is applied to a wastewater treatment process for estimating the full state of the process with 145 state variables.
\end{abstract}
\noindent{\bf Keywords:}Distributed state estimation, partition-based framework, extended Kalman filter, nonlinear process
\section*{Introduction}
A large-scale industrial process typically exhibits significant nonlinearity, and consists of multiple physical components that are interconnected tightly with each other through material, energy, and information flows. Advanced control systems play a crucial role in monitoring and governing the operation of industrial processes for ensuring the operating safety, meeting the strict environmental regulations, and pursuing desired production objectives. However, due to the high nonlinearity, large scales, as well as the strong intrinsic interactions among different physical components, the development and commissioning of advanced control systems for large-scale complex processes have been a challenging task, and conventional centralized or decentralized paradigms may not suffice to manage these complexities \cite{christofides2013distributed, daoutidis2016sustainability,stewart2010cooperative}.

As an emerging framework, the distributed control architecture, in particular, the partition-based distributed framework holds the promise to handle plant-wide control problems for large-scale industrial processes, while offering a high level of fault tolerance, scalability, computational efficiency, and maintenance flexibility \cite{christofides2013distributed,daoutidis2016sustainability,yin2018forming}.
A distributed solution decomposes large processes into smaller subsystems. Accordingly, the complex plant-wide control-related problem is divided into smaller scalable sub-problems associated with the configured subsystems. Then, multiple decision-making agents that communicate with each other in real-time can be developed to collaboratively handle sub-problems in parallel. A complete distributed process control system shall involve distributed model predictive control and distributed state estimation, the latter of which provides real-time full-state information, which is essential for distributed control to execute algorithms to generate more-informed control decisions. Distributed model predictive control (DMPC) for large-scale processes has been extensively studied \cite{daoutidis2019decomposition,tang2018optimal,chen2020machine,tang2019distributed}. This work is focused on the partition-based distributed state estimation counterpart.

Distributed moving horizon estimation (DMHE) has been considered a promising solution to distributed state estimation problems, due to its ability to simultaneously address the nonlinearity of a process and various constraints in state and disturbance variables \cite{yin2018forming,farina2010distributed,farina2012distributed,zhang2013distributed,schneider2015convergence, yin2022event,pourkargar2019distributed}.
In Farina et al. \cite{farina2010distributed}, a consensus-based DMHE with constraints on state variables and noise was presented; this method was later extended to nonlinear constrained systems in Farina et al. \cite{farina2012distributed}. In another line of work, partition-based DMHE where the local estimators are developed based on decomposed subsystem models has been studied. In Schneider and Marquardt \cite{schneider2015convergence} and Yin and Huang \cite{yin2022event}, iterative DMHE algorithms were proposed for linear systems, with local estimators developed based on the models of the interacting subsystems. Partition-based DMHE methods were proposed for nonlinear process networks in Yin et al \cite{yin2018forming} and Zhang and Liu \cite{zhang2013distributed}. While DMHE offers the advantage of explicitly handling hard constraints, it does have limitations when applied to highly nonlinear processes. This is because even though the global estimation problem is partitioned into smaller sub-problems, each sub-problem typically remains a non-convex optimization problem. From an application standpoint, there is no guarantee that the subsystem states can be optimally reconstructed per the theoretical analysis, which may lead to unsatisfactory estimates of the process states. Furthermore, with poor initial guesses that deviate significantly from the initial conditions of the subsystems, it can be more likely that diverging estimation results will be produced by DMHE for nonlinear systems. To address these limitations associated with DMHE, a partition-based distributed extended Kalman filter design may be considered a viable alternative. Kalman-based recursive distributed estimation solutions can be more attractive when constraints are not significant and the computational efficiency of distributed estimation needs to be increased to ensure the timely availability of the estimates to the controllers. Additionally, it enables the solving of the distributed nonlinear MPC, which is crucial for facilitating the commissioning and implementation of distributed control systems for complex industrial processes.

Distributed Kalman Filter (DKF) can be classified into two categories according to the architecture: consensus-based DKF and partition-based DKF.
In consensus-based DKF, multiple estimators are designed based on a global model, and each estimator provides estimates of the states of the entire system based on local sensors. These estimates provided by the local estimators are then combined via a consensus protocol that provides a way for each subsystem to reach agreement \cite{olfati2007distributed,kim2016distributed,olfati2005distributed,wu2018distributed, marelli2021distributed}.
{\color{black}Consensus-based DKF has shown great potential in a variety of applications, including sensor networks \cite{olfati2007distributed, wu2018distributed} and multi-robot systems \cite{marin2019event}.}
As is completely different from consensus-based solutions, partition-based DKF divides the global model into smaller subsystem models, and designs local estimators based on corresponding subsystem models. Each local estimator provides estimates of the corresponding subsystem, which is a subset of the state variables. This method is particularly useful when the system scale is excessively large, or when the entire system consists of spatially distributed and physically interconnected operating units. In the literature, several attempts have been made to develop partition-based DKF methods for large-scale systems.
In Sun et al \cite{sun2016dynamic}, an initial attempt was made on partition-based DKF, whereby subsystem coupling was assumed to exist only in the output measurement equations, yet the future state of each subsystem was assumed to be dependent only on the current state of the local subsystem.
In Marelli et al \cite{marelli2018distributed}, subsystem interactions were incorporated into the state equations of each subsystem, and a DKF algorithm that can asymptotically converge to its centralized counterpart was proposed. However, the coupling between subsystems was assumed to only arise from the sensor measurements of the neighboring subsystems, while unmeasured states were excluded from the subsystem interactions; this has significantly limited the application of this method to more general interconnected systems.
In Zhang et al \cite{zhang2021distributed}, a DKF approach was proposed for an interconnected dynamical system with state interactions in the subsystem models. In this work, decoupling Kalman gains were used to minimize the impact of the estimation error caused by subsystem interactions. Meanwhile, it is worth mentioning that Zhang et al \cite{zhang2021distributed} relies on a directed acyclic graph assumption that requires the output measurement matrix for each subsystem is full column rank, which may be difficult to satisfy in process control applications. Consequently, this approach may not be suitable for being extended to nonlinear distributed state estimation problems.
Despite the prevalence of Kalman-based algorithms, to our knowledge, results on partition-based distributed extended Kalman filter algorithms for general nonlinear systems have been very limited. In Rashedi et al \cite{rashedi2017distributed}, a distributed adaptive high gain extended Kalman filter algorithm was proposed.
This design requires an invertible coordinate transformation of the original state of each subsystem to a canonical form. However, as the number of states of each subsystem grows, establishing this transformation becomes increasingly difficult, which has significantly limited the applicability of this algorithm to large-scale nonlinear processes.
A distributed EKF was developed for the estimation of a wastewater treatment process with 78 quality state variables in Zeng et al \cite{zeng2016distributed}, whereby each local estimator only leverages the sensor measurements of the local subsystem, and the stability of the estimation error dynamics remains unsolved.
Based on above consideration, we aim to address the limitations of the existing DEKF algorithms in this work.

In this paper, we propose a scalable and efficient partition-based distributed extended Kalman filter method for general nonlinear processes consisting of subsystems interconnected through state interactions. The objective is achieved by conducting the following key steps: 1) we formulate a distributed full-information estimation design based on the subsystem models of a general linear system; 2) inspired by the equivalence between unconstrained centralized full-information estimation and centralized Kalman filter, we obtain analytical solutions to the local estimators of the full-information estimation, and the analytical solutions are in the form of a distributed Kalman filter algorithm; 3) the nonlinear extension of the proposed distributed Kalman filter is conducted to form the proposed partition-based DEKF approach for general nonlinear processes. The stability of the proposed method is proven. The effectiveness and the advantages of the proposed method are illustrated via a numerical example, as well as two application examples related to chemical and wastewater treatment processes.

\section*{Preliminaries}
\subsection*{Notation}

${\rm{col}}\{x_1, x_2, \ldots, x_n\}$ represents a column vector consisting of $x_1, x_2, \ldots, x_n.$ %$\{x\}_{a}^{b}$ is a column vector $[x_{a}, x_{a+1},\ldots, x_{b}]$.
%For matrices $A_i$, $i=1,2,\ldots,n$,
$\text{diag}\left\{A_1,A_2,\ldots,A_n\right\}$ denotes a block diagonal matrix where the blocks on the main diagonal are matrices $A_i$, $i=1,\ldots, n$. $[A_{il}]$ represents a block matrix where $A_{il}$ is the submatrix in the $i$th row and the $l$th column.
{\color{black}$\lambda_{max}(A)$ denotes the maximal eigenvalue of matrix $A$. $\|A\|$ represents the Euclidean norm of real matrix $A$, which is computed as $\|A\|=\sqrt{\lambda_{max}(A^{\mathrm{T}}A)}$.}
$\|x\|^2$ and $\|x\|^2_{P}$ denote the Euclidean norm and the weighted Euclidean norm of a vector, which are computed as $\|x\|^{2}=x^{\mathrm{T}}x$ and $\|x\|^{2}_{P}=x^{\mathrm{T}}Px$, respectively.
%$\|x\|_{\infty}$ represents the essential supremum norm of $x$.
%For a random variable $x$ and $y$, we denote
%$\mathcal{N}(\mu, \sigma^{2})$ refers to Gaussian distribution with mean $\mu$ and standard deviation $\sigma$.
$\varepsilon(x,y)$ denotes the covariance between two variables $x$ and $y$. $x_{j|k}$ denotes the value of variable $x$ for time instant $j$ obtained at time instant $k$.
%${\rm{col}}(x_1, x_2, \ldots, x_n)$ is a column vector consisting of $x_1, x_2, \ldots, x_n$. $\{x\}_{a}^{b}$ represents a column vector $[x_{a}, x_{a+1},\ldots, x_{b}]$.
%The operators $\|\cdot\|$ and $\|\cdot\|_{P}$ represent Euclidean norm and the weighted Euclidean norm of a scalar or a vector, respectively; they are computed as $\|x\|^2=x^{\mathrm{T}}x$ and $\|x\|^2_{P}=x^{\mathrm{T}}Px$.
%For scalars $a_i$, $i=1,2,\ldots,n$, $\text{diag}\left\{a_1,a_2,\ldots,a_n\right\}$ is a diagonal matrix where the elements on the main diagonals are $a_i$, $i=1,\ldots,n$. %For matrices $A_i$, $i=1,\ldots,n$, $\text{diag}\left\{A_1,A_2,\ldots,A_n\right\}$ is a block diagonal matrix. $[A_{ij}]$ is a block matrix where $A_{ij}$ is the submatrix in the $i$th row and $j$th column.
Let $\mathbb{R}_{\geq0}:=[0,\infty)$;
A function $\gamma:\mathbb{R}_{\geq0}\rightarrow\mathbb{R}_{\geq0}$ is said to belong to class $\mathcal{K}$ if it is continuous, strictly increasing and $\gamma(0)=0$.
In addition, function $\gamma:\mathbb{R}_{\geq0}\rightarrow\mathbb{R}_{\geq0}$ belongs to $\mathcal{K}_{\infty}$, if it is a class $\mathcal{K}$ function and is radially unbounded.
%It is of class $\mathcal{K}_{\infty}$ if, in addition, it is proper, i.e., unbounded.
A function $\beta:\mathbb{R}_{\geq0}\times\mathbb{R}_{\geq0}\rightarrow\mathbb{R}_{\geq0}$ is said to belong to class $\mathcal{K}\mathcal{L}$, if for each fixed $t\geq0$,  $\beta(s, t)\in\mathcal{K}$, and for every fixed $s\geq0$, the function $\beta(s,t)$ is decreasing and satisfies $\beta(s,t)\rightarrow0$ as $t\rightarrow\infty$. %$\mathrm{Id}$ represents a identity function. For any given $\mathcal{K}$-function $\gamma_{1}$ and $\gamma_{2}$, $\gamma_{1}\circ\gamma_{2}<\mathrm{Id}$ if $\gamma_{1}\circ\gamma_{2}(x)<x$ for all $x>0$.

\subsection*{System description}\label{sec:2.2}
We consider a class of general nonlinear systems that consist of $n$ interconnected subsystems; the dynamics of each subsystem $i$, $i\in \mathbb{N}$, can be described by the following discrete-time state-space form:
\begin{subequations}\label{model_nonlinear}
  \begin{align}
    x^{i}_{k+1} & = f_{i}(x^{i}_{k}, X^{i}_{k})+w_{k}^{i} \label{model_nonlinear1} \\
    y_{k}^{i} & =h_{i}(x_{k}^{i})+v_{k}^{i}\label{model_nonlinear2}
  \end{align}
\end{subequations}
{\color{black}where $x^{i}_{k}\in\mathbb{R}^{n_{x^{i}}}$ is the state vector of $i$th subsystem; $X^{i}_{k}\in \mathbb{R}^{n_{X^{i}}}$ is a vector of the states of all the subsystems whose states have direct influence on the dynamics of subsystem $i$; $y^{i}_{k}\in\mathbb{R}^{n_{y^i}}$ is vector of sensor measurements for subsystem $i$; $w^{i}_{k}\in\mathbb{R}^{n_{x^i}}$ denotes the vector of state disturbances for subsystem $i$; $v_{k}^{i}\in\mathbb{R}^{n_{y^{i}}}$ is sensor measurement noise for subsystem $i$; $f_{i}$ is a vector-valued nonlinear function that characterizes the dependence of the future state of subsystem $i$ on the current state of the local subsystem and the current states of the interacting subsystems; $h_{i}$ is the function of sensor measurements $y^{i}_{k}$ for subsystem $i$, $i\in \mathbb{N}$. It is assumed that $\varepsilon(w_{k}^{i}, w_{k}^{j})=0$ and $\varepsilon(v_{k}^{i}, v_{k}^{j})=0$, $\forall i \neq j$.}

By creating augmented vectors $x_{k}={\rm{col}}\{x_{k}^1, x_{k}^2,\ldots, x_{k}^{n}\}\in\mathbb{R}^{n_x}$, $y_{k}={\rm{col}}\{y_{k}^1, y_{k}^2,\ldots$, $y_{k}^{n}\}\in\mathbb{R}^{n_{y}}$, $w_{k}={\rm{col}}\{w_{k}^1, w_{k}^2,\ldots, w_{k}^{n}\}\in\mathbb{R}^{n_{x}}$, and $v_{k}={\rm{col}}\{v_{k}^1, v_{k}^2,\ldots, v_{k}^{n}\}\in\mathbb{R}^{n_{y}}$, an aggregated model of
$n$ subsystems in the form of \eqref{model_nonlinear} can be obtained as follows:
\begin{subequations}\label{model_nonlinear_centralized}
\begin{align}
    x_{k+1} & = f(x_{k})+w_{k} \label{model_nonlinear_centralized1} \\
    y_{k}& =h(x_{k})+v_{k}\label{model_nonlinear_centralized2}
\end{align}
\end{subequations}
where $f(x_{k})= \mathrm{col}\big\{f_{1}(x^{1}_{k}, X^{1}_{k}), f_{2}(x^{2}_{k}, X^{2}_{k}),\ldots, f_{n}(x^{n}_{k}, X^{n}_{k})\big\}$ and  $h(x_{k})= \mathrm{col}\big\{h_{1}(x^{1}_{k}), h_{2}(x^{2}_{k}), \ldots, h_{n}(x^{n}_{k})\big\}$.
The state-space model in \eqref{model_nonlinear_centralized} characterizes the nonlinear dynamics of the entire large-scale process.

\subsection*{Problem formulation}
The objective of this work is to propose a partition-based distributed state estimation approach based on the subsystem models in the form of \eqref{model_nonlinear}, which can be used to estimate the state of the entire process \eqref{model_nonlinear_centralized} in real-time. To achieve this objective, we aim to propose a partition-based distributed Kalman filter algorithm, and then extend it to the nonlinear context to account for the estimation of the nonlinear process in \eqref{model_nonlinear_centralized}.

Based on the above consideration, we consider the linear counterpart of \eqref{model_nonlinear_centralized} which consists of $n$ linear subsystem models. The model of the $i$th linear subsystem, $i\in\mathbb{N}$, is described as follows:
\begin{subequations}\label{model}
  \begin{align}
   x_{k+1}^{i}&=A_{ii}x_{k}^{i}+\sum_{l\in\mathbb{N}\setminus\{i\}}A_{il}x_{k}^{l}+w_{k}^{i}\label{modeli1}\\
   y_{k}^{i}&=C_{ii}x_{k}^{i}+v_{k}^{i}\label{modeli2}
\end{align}
\end{subequations}
where $A_{ii}$, $A_{il}$, and $C_{ii}$, $i\in\mathbb{N}$, $l\in\mathbb{N}\setminus\{i\}$, are subsystem matrices and of compatible dimensions.

We will design local Kalman filter-based state estimators based on the subsystem models in \eqref{model}, and integrate them via information exchange to form a distributed Kalman filter algorithm, which will be further extended to form a nonlinear version of the proposed algorithm for partition-based distributed state estimation of the considered nonlinear process in \eqref{model_nonlinear}.
Note that by following a similar procedure as adopted in the ``System description" Section, the linear subsystem models in \eqref{model} can be concatenated to form a compact global model as follows:
\begin{subequations}\label{cmodel}
  \begin{align}
   x_{k+1} & =Ax_{k}+w_{k} \label{cmodel1}\\
   y_{k} & =Cx_{k}+v_{k} \label{cmodel2}
\end{align}
\end{subequations}
where $A=[A_{il}]$ and $C=[C_{il}]$ with $C_{il}=0$ if $i\neq l$, $\forall~i, l\in \mathbb{N}$.

\subsection*{Centralized full-information estimation}
We briefly review centralized full-information estimation, which is leveraged as the basis of developing the partition-based distributed Kalman filter algorithm.

At each time step $k$, the centralized full-information estimation for global linear system \eqref{cmodel} is formulated as the following batch least-squares problem \cite{findeisen1997moving}:
\begin{subequations}\label{cfie}
\begin{align}
&\min_{\hat{x}_{0|k},\hat{w}_{0|k}, \ldots, \hat{w}_{k-1|k}} \Phi_{k} =\frac{1}{2}\Big(\big\|\hat{x}_{0|k}-\bar{x}_{0}\big\|_{P_{0}^{-1}}^{2}+\sum_{j=0}^{k-1}\|\hat{w}_{j|k}\|^2_{Q^{-1}}+\sum_{j=0}^{k}\|\hat{v}_{j|k}\|^2_{R^{-1}}\Big)\label{Cfie1}\\
&\quad\quad\quad\quad\quad\quad\quad\text{s.t.}\quad  \hat{x}_{j+1|k}=A\hat{x}_{j|k}+\hat{w}_{j|k},~j=0,1,\ldots,k-1\label{Cfie2}\\
&\quad\quad\quad\quad\quad\quad\quad\quad\quad\quad\quad\, y_{j} = C\hat{x}_{j|k}+\hat{v}_{j|k},~j=0,1,\ldots,k
\end{align}
\end{subequations}
where $\hat{x}_{j|k}$, $j=0,\ldots, k-1$, is an estimate of the state $x_{j}$ at time instant $j$ calculated at time instant $k$; $\bar{x}_{0}$ is \emph{a priori} estimate of state $x_{0}$; $\hat{w}_{j|k}$ is an estimate of $w_{j}$ obtained at time instant $k$; $\hat{v}_{j|k}$ is an estimate of measurement noise $v_{j}$ computed at time instant $k$; {\color{black}$P_{0}$, $Q$, and $R$ are weighting matrices which are typically made positive definite and block diagonal.}
\section*{Distributed full-information estimation}
In this section, the centralized full-information estimation problem is decomposed into local full-information estimation problems corresponding to the subsystems in \eqref{model}.
\subsection*{Construction of local objective functions}
First, we partition the global objective function $\Phi_{k}$ \eqref{Cfie1} into $\Phi_{k}^{i}$, $i\in\mathbb{N}$, such that $\Phi_{k}=\sum_{i=1}^{n}\Phi_{k}^{i}$. Each partitioned objective function is with the following form:
\begin{equation}\label{eq:3}
  \Phi_{k}^{i} =\frac{1}{2}\Big(\big\|\hat{x}^{i}_{0|k}-\bar{x}^{i}_{0}\big\|_{P_{i,0}^{-1}}^{2}+\sum_{j=0}^{k-1}\|\hat{w}^{i}_{j|k}\|^2_{Q_{i}^{-1}}+\sum_{j=0}^{k}\|\hat{v}^{i}_{j|k}\|^2_{R_{i}^{-1}}\Big)
\end{equation}
where $\hat{x}^{i}_{0|k}$ is an estimate of state $x_{0}^{i}$ of the subsystem $i$ calculated at time instant $k$; $\bar{x}^{i}_{0}$ is an initial guess of the state of the $i$th subsystem; $\hat{w}^{i}_{j|k}$ is an estimate of disturbances $w^{i}_{j}$ of the $i$th subsystem obtained at time instant $k$; $\hat{v}^{i}_{j|k}$ is an estimate of measurement noise $v_{j}^{i}$ calculated at time instant $k$. \eqref{eq:3} for subsystem $i$ only contains sensor measurement information from the local subsystem. Meanwhile, the sensor measurements of the interacting subsystems typically contain information valuable for reconstructing the states of the local subsystem. Following the formulations of the local objective functions for distributed moving horizon estimation in Schneider and Marquardt \cite{schneider2015convergence} and Li et al \cite{li2023iterative}, the sensor measurement information of the interacting subsystems of the $i$th subsystem is integrated with the partitioned objective function $\Phi_{k}^{i}$ in \eqref{eq:3} to form an individual objective function for the local estimator for the $i$th subsystem as follows:
%It is noted that \eqref{eq:3} only takes advantages of the sensor measurements of subsystem $i$. The measured outputs of the interacting subsystems also contain valuable information which may be helpful for reconstructing the state of the current subsystem. Based on this consideration, following the design of the local objective functions in \cite{IFAC}, the sensor measurement information of the interacting subsystems of subsystem $i$ is added to the decomposed function $\Phi_{k}^{i}$ in \eqref{eq:3} to form an individual objective function for the estimator for the $i$th subsystem as follows:
 \begin{align}\label{eq:4}
 \bar{\Phi}_{k}^{i} &= \Phi_{k}^{i}+\frac{1}{2}\sum_{l\in\mathbb{N}\setminus\{i\}}\sum_{j=0}^{k}\|\hat{v}_{j|k}^{l}\|^{2}_{R_{l}^{-1}}\nonumber\\
 & = \frac{1}{2}\Big(\big\|\hat{x}^{i}_{0|k}-\bar{x}^{i}_{0}\big\|_{P_{i,0}^{-1}}^{2}+\sum_{j=0}^{k-1}\|\hat{w}^{i}_{j|k}\|^2_{Q_{i}^{-1}}+\sum_{j=0}^{k}\|\hat{v}^{i}_{j|k}\|^2_{R_{i}^{-1}}+\sum_{l\in\mathbb{N}\setminus\{i\}}\sum_{j=0}^{k}\|\hat{v}_{j|k}^{l}\|^{2}_{R_{l}^{-1}}\Big)\nonumber\\
 &=\frac{1}{2}\Big(\big\|\hat{x}_{0|k}^{i}-\bar{x}_{0}^{i}\big\|_{P^{-1}_{i,0}}^{2}+\sum_{j=0}^{k-1}\|\hat{w}^{i}_{j|k}\|^2_{Q_{i}^{-1}}+\sum_{j=0}^{k}\|\hat{\mathbf{v}}^{i}_{j|k}\|^2_{R^{-1}}\Big)
\end{align}
where $\mathbf{\hat{v}}_{j|k}^{i}=\mathrm{col}\{\hat{v}^{1}_{j|k}, \hat{v}^{2}_{j|k}, \ldots, \hat{v}^{n}_{j|k}\}$, which includes the estimates of measurement noise of the $i$th subsystem calculated at time instant $k$ and the estimates of other measurement noise from neighboring subsystems calculated by subsystem $i$ at time instant $k$; $R=\mathrm{diag}\{R_{1}, R_{2}, \ldots, R_{n}\}$.
\subsection*{Distributed full-information estimation formulation}
Based on the individual local objective functions in \eqref{eq:4}, a distributed full-information design can be formulated,
%Since the information about $\hat{x}_{j|k}^{l}$, $l\in\mathbb{N}\setminus\{i\}$, $j=0,1,\ldots,k$ is unavailable for the $i$th subsystem at time instant $k$, we use $\hat{x}_{j|k}^{l}$ instead of $\hat{x}_{j|k-1}^{l}$
 where the estimator for the $i$th subsystem solves a local optimization problem as follows:
\begin{subequations}\label{fie}
\begin{align}
&\quad\quad\quad\quad\quad\quad\min_{\hat{x}^{i}_{0|k},\hat{w}^{i}_{0|k}, \ldots, \hat{w}^{i}_{k-1|k}} \bar{\Phi}_{k}^{i} \label{fie1}\\
&\text{s.t.}~\hat{x}_{j+1|k}^{i}=A_{ii}\hat{x}^{i}_{j|k}+\sum_{l\in\mathbb{N}\setminus\{i\}}A_{il}\hat{x}_{j|k-1}^{l}+\hat{w}^{i}_{j|k},~ j= 0,\ldots,k-1 \label{fir2}\\   %noniterative, only receive prediction at last time instance from other subsystems
& \quad\quad\quad~\, y_{0}=C_{[:,i]}\hat{x}_{0|k}^{i}+\sum_{l\in\mathbb{N}\setminus\{i\}}C_{[:,l]}\hat{x}^{l}_{0|-1}+\hat{\mathbf{v}}_{0|k}^{i}\\
& \quad\quad\quad~\, y_{j}=
%C_{[:,i]}\hat{x}_{j|k}^{i}+\sum_{l\in\mathbb{N}\setminus\{i\}}C_{[:,l]}\hat{x}^{l}_{j|k}+\hat{\mathbf{v}}_{j|k}^{i}\\
C_{[:,i]}\hat{x}^{i}_{j|k}+\sum_{l\in\mathbb{N}\setminus\{i\}}C_{[:,l]}A_{li}\hat{x}^{i}_{j-1|k}+\sum_{l\in\mathbb{N}\setminus\{i\}}\sum_{m\in\mathbb{N}\setminus\{i\}}C_{[:,l]}A_{lm}\hat{x}^{m}_{j-1|j-1}+\hat{\mathbf{v}}_{j|k}^{i},~ j= 1,\ldots,k
%&\quad\quad\quad\,\,\,y_{j}^{i} = C_{ii}\hat{x}_{j|k}^{i}+\hat{v}_{j|k}^{i}, ~ j= 0,\ldots,k\\
%&\quad\quad\quad\,\,\,y_{j}^{l} = C_{ll}\hat{x}_{j|k-1}^{l}+\hat{v}_{j|k-1}^{l}, ~ j= 0,\ldots,k
\end{align}
%where
%\begin{align}\label{obj_p}
%\bar{\Phi}_{k}^{i} &=\frac{1}{2}\Big(\big\|\hat{x}_{0|k}^{i}-\bar{x}_{0}^{i}\big\|_{P^{-1}_{i,0}}^{2}+\sum_{j=0}^{k-1}\|\hat{w}^{i}_{j|k}\|^2_{Q_{i}^{-1}}+\sum_{j=0}^{k}\|\hat{v}^{i}_{j|k}\|^2_{R_{i}^{-1}}+\sum_{l\in\mathbb{N}\setminus\{i\}}\sum_{j=0}^{k}\|\hat{v}^{l}_{j|k-1}\|^2_{R_{l}^{-1}}\Big)\\
%&=\frac{1}{2}\Big(\big\|\hat{x}_{0|k}^{i}-\bar{x}_{0}^{i}\big\|_{P^{-1}_{i,0}}^{2}+\sum_{j=0}^{k-1}\|\hat{w}^{i}_{j|k}\|^2_{Q_{i}^{-1}}+\sum_{j=0}^{k}\|\hat{v}_{j|k,i}\|^2_{R^{-1}}\Big)\label{eq:6f}
%\end{align}
\end{subequations}
{\color{black}
The distributed full-information estimation design in \eqref{fie} involves multiple local estimators, each of which solves an optimization problem at every sampling instant as new sensor measurements are available.  Accordingly, each local estimator can provide an estimate of the state of the corresponding subsystem. It is worth mentioning that, within a full-information context, each local estimator solves an optimization problem with an estimation window based on the measurements from the initial time instant up to the current time instant. From a theoretical perspective \cite{allan2021robust, rawlings2012postface}, the above distributed full-information estimation has the potential to provide good estimates with stability guarantee for system \eqref{cmodel}. However, as the estimation window grows over time, the individual optimization problems for the subsystems will become intractable.}
In the next section, we propose a distributed Kalman filter algorithm, where each local estimator represents the analytical solution to \eqref{fie}. In this manner, the estimation performance of \eqref{fie} is preserved, and the intractability issue associated with online solving \eqref{fie} is bypassed.

% It is worth mentioning that, within a full-information context, each local estimator solves an optimization problem with an estimation window, utilizing measurements from the initial time instant up to the current time instant.
%The size of the estimation window will increases over time. Therefore,
%the complexity of the optimization problem associated with each local estimation increases over time. Consequently, as time $k$ further progresses, the optimization problems associated with the local estimators of distributed full-information estimation  will eventually become intractable.

%We note that from a theoretical perspective, the above distributed full-information estimation has the potential to provide good estimates with stability guarantee for system \eqref{cmodel} \cite{allan2021robust, rawlings2012postface}, whereas the individual optimization problems for the subsystems can become intractable as $k$ grows. In the next section, we propose a distributed Kalman filter algorithm, where each local estimator represents the analytical solution to \eqref{fie}. In this manner, the estimation performance of \eqref{fie} is preserved, and the intractability issue associated with online solving \eqref{fie} is bypassed.
\section*{Distributed Kalman filter algorithm}
%\begin{figure}[htbp]
%  \centering
%  \includegraphics[width=0.65\textwidth]{distributed Kalman filter.pdf}
%  \caption{A schematic diagram of proposed partition-based distributed (extended) Kalman filter scheme}\label{fig:dkf}
%\end{figure}
In this section, we propose a partition-based distributed Kalman filter algorithm by leveraging the distributed full-information design in \eqref{fie}. A schematic diagram of the distributed Kalman filter scheme is shown in Figure~\ref{fig:dkf}. This distributed scheme has a total of $n$ local Kalman filters (estimators) to account for the state estimation of the $n$ subsystems in the form of \eqref{model}, and the local filters (estimators) exchange subsystem sensor measurements at each sampling instant to coordinate their estimates of the subsystem states. Note that the proposed distributed Kalman filter (DKF) algorithm will be extended to account for nonlinear estimation, and the distributed architecture in Figure~\ref{fig:dkf} remains the same in the nonlinear context.

First, let us consider distributed full-information estimation problem \eqref{fie} at $k=N=1$ for estimating $\hat{x}_{1|1}^{i}$. Replacing $P_{i,0}^{-1}$ with $P_{i,0|-1}^{-1}$ and replacing $\bar{x}_{0}^{i}$ with $\hat{x}_{0|-1}^{i}$, the full-information optimization problem associated with the $i$th subsystem in \eqref{fie} can be rewritten as
\begin{subequations}\label{fie00}
\begin{align}
&\min_{\hat{x}^{i}_{0|1},~ \hat{w}_{0|1}^{i}} \bar{\Phi}_{1}^{i}=\frac{1}{2}\Big(\big\|\hat{x}_{0|1}^{i}-\hat{x}_{0|-1}^{i}\big\|_{P^{-1}_{i,0|-1}}^{2}+\|\hat{w}_{0|1}^{i}\|_{Q_{i}}^{2}+\|\hat{\mathbf{v}}_{0|1}^{i}\|^2_{R^{-1}}+\|\hat{\mathbf{v}}_{1|1}^{i}\|^2_{R^{-1}}\Big)\nonumber\\
&\quad\quad\text{s.t.}\quad y_{0} = C_{[:,i]}\hat{x}^{i}_{0|1}+\sum_{l\in\mathbb{N}\setminus\{i\}}C_{[:,l]}\hat{x}_{0|-1}^{l}+\hat{\mathbf{v}}_{0|1}^{i}\label{upd:con1}\\
&\quad\quad\quad~~\hat{x}_{1|1}^{i} = A_{ii}\hat{x}_{0|1}^{i}+\sum_{l\in\mathbb{N}\setminus\{i\}}A_{il}\hat{x}_{0|0}^{l}+\hat{w}_{0|1}^{i}\label{upd:con3}\\
%&\quad\quad\quad\quad~ y_{1} = C_{[:,i]}\hat{x}^{i}_{1|1}+\sum_{l\in\mathbb{N}\setminus\{i\}}C_{[:,l]}\hat{x}_{1|1}^{l}+\hat{\mathbf{v}}_{1|1}^{i}\label{upd:con3}\nonumber\\
& \quad\quad\quad\quad\quad~y_{1} =  C_{[:,i]}\hat{x}^{i}_{1|1}+\sum_{l\in\mathbb{N}\setminus\{i\}}C_{[:,l]}A_{li}\hat{x}^{i}_{0|1}+\sum_{l\in\mathbb{N}\setminus\{i\}}\sum_{j\in\mathbb{N}\setminus\{i\}}C_{[:,l]}A_{lj}\hat{x}^{j}_{0|0}+\hat{\mathbf{v}}_{1|1}^{i}\label{upd:con2}
\end{align}
{\color{black}where $\hat{x}_{0|1}^{i}$ is an estimate of state $x_{0}^{i}$ of the subsystem $i$ calculated at time instant $k=1$; $\hat{w}_{0|1}^{i}$ is an estimate of disturbances $w_{0}^{i}$ of the $i$th subsystem obtained at time instant $k=1$.}
%where \newline
%\begin{align}\label{obj_p:1}
%\Phi_{1}^{i} %&= \frac{1}{2}\Big(\big\|\hat{x}_{0|1}^{i}-\hat{x}_{0|-1}^{i}\big\|_{P^{i,-1}_{0|-1}}^{2}+\|\hat{w}_{0|1}^{i}\|_{Q_{i}}^{2}+\|\hat{v}^{i}_{0|1}\|^2_{R_{i}^{-1}}+\sum_{l\in\mathbb{N}\setminus\{i\}}\|\hat{v}^{l}_{0|1}\|^2_{R_{l}^{-1}}\\
%&\quad\quad+\|\hat{v}^{i}_{1|1}\|^2_{R_{i}^{-1}}+\sum_{l\in\mathbb{N}\setminus\{i\}}\|\hat{v}^{l}_{1|1}\|^2_{R_{l}^{-1}}\|\Big)\nonumber\\
%& = \frac{1}{2}\Big(\big\|\hat{x}_{0|1}^{i}-\hat{x}_{0|-1}^{i}\big\|_{P^{i,-1}_{0|-1}}^{2}+\|\hat{w}_{0|1}^{i}\|_{Q_{i}}^{2}+\|\hat{v}_{0|1,i}\|^2_{R^{-1}}+\|\hat{v}_{1|1,i}\|^2_{R^{-1}}\Big)
%\end{align}
\end{subequations}
The following solution can be obtained for sampling time instant $k=1$ by applying Lagrangian conditions:
\begin{equation}\label{upd:Lagrange:1}
\begin{aligned}
  &\quad\left[\begin{array}{cccccccc}
        P^{-1}_{i,0|-1}   & C_{[:,i]}^{\mathrm{T}} &  &A_{ii}^{\mathrm{T}}  &  &  & \sum\limits_{l\in\mathbb{N}\setminus\{i\}}A_{li}^{\mathrm{T}}C_{[:,l]}^{\mathrm{T}} &  \\
         C_{[:,i]}  &  & I &  &  &  &  &  \\
           & I & R^{-1} &  &  &  &  &  \\
         A_{ii}  &  &  &  & I & -I &  &  \\
           &  &  & I & Q_{i}^{-1} &  &  &  \\
           &  &  & -I &  &  & C_{[:,i]}^{\mathrm{T}} &  \\
          \sum\limits_{l\in\mathbb{N}\setminus\{i\}}C_{[:,l]}A_{li}&  &  &  &  & C_{[:,i]} &  & I \\
           &  &  &  &  &  & I & R^{-1}
        \end{array}\right]\left[\begin{array}{c}
                                  \hat{x}_{0|1}^{i} \\
                                  \lambda_{0}^{i} \\
                                   \hat{\mathbf{v}}_{0|1}^{i}\\
                                   \pi_{0}^{i}\\
                                   \hat{w}_{0|1}^{i}\\
                                   \hat{x}_{1|1}^{i}\\
                                   \lambda_{1}^{i}\\
                                   \hat{\mathbf{v}}_{1|1}^{i}
                                \end{array}\right]\\
                                &=\left[\begin{array}{c}
                                                           P_{i,0|-1}^{-1}\hat{x}_{0|-1}^{i} \\
                                                           y_{0}-\sum\limits_{l\in\mathbb{N}\setminus\{i\}}C_{[:,l]}\hat{x}_{0|-1}^{l} \\
                                                            0\\
                                                            -\sum\limits_{l\in\mathbb{N}\setminus\{i\}}A_{il}\hat{x}_{0|0}^{l}\\
                                                            0\\
                                                            0\\
                                                            y_{1}-\sum\limits_{l\in\mathbb{N}\setminus\{i\}}\sum\limits_{j\in\mathbb{N}\setminus\{i\}}C_{[:,l]}A_{lj}\hat{x}_{0|0}^{j}\\
                                                           0
                                                         \end{array}\right]
                                                         \end{aligned}
\end{equation}
\normalsize
where $\lambda_{0}^{i}$, $\pi_{0}^{i}$, and $\lambda_{1}^{i}$ are the vectors of Lagrange multipliers of subsystem $i$ associated with the equality constraints \eqref{upd:con1}, \eqref{upd:con3}, and \eqref{upd:con2}, respectively.
According to \eqref{upd:Lagrange:1}, it yields
\begin{subequations}\label{upd:eq32}
\begin{align}
  \hat{\mathbf{v}}_{0|1}^{i} & = -C_{[:,i]}\hat{x}_{0|1}^{i}+y_{0}-\sum_{l\in\mathbb{N}\setminus\{i\}}C_{[:,l]}\hat{x}_{0|0}^{l} \\
   \lambda_{0}^{i}& = -R^{-1}\hat{\mathbf{v}}_{0|1}^{i}
\end{align}
\end{subequations}
Denote
\begin{align}
   P_{i,0|0}^{-1}&=P_{i,0|-1}^{-1}+C_{[:,i]}^{\mathrm{T}}R^{-1}C_{[:,i]}  \\
  \hat{x}_{0|0}^{i} & =\hat{x}_{0|-1}^{i}+P_{i,0|0}C_{[:,i]}^{\mathrm{T}}R^{-1}\Big(y_{0}-\sum_{l\in\mathbb{N}\setminus\{i\}}C_{[:,l]}\hat{x}_{0|-1}^{l}-C_{[:,i]}\hat{x}_{0|-1}^{i}\Big)\nonumber
\end{align}
%$ P_{i,0|0}^{-1}:=P_{i,0|-1}^{-1}+C_{[:,i]}^{\mathrm{T}}R^{-1}C_{[:,i]}$ and $\hat{x}_{0|0}^{i} = \hat{x}_{0|-1}^{i}+P_{i,0|0}C_{[:,i]}^{\mathrm{T}}R^{-1}\big(y_{0}-\sum_{l\in\mathbb{N}\setminus\{i\}}C_{[:,l]}\hat{x}_{0|0}^{l}-C_{[:,i]}\hat{x}_{0|-1}^{i}\big)$.
Base on \eqref{upd:eq32}, it holds
\begin{equation}\label{upd:eq:33}
\begin{aligned}
 &\quad P_{i,0|-1}^{-1}\hat{x}_{0|-1}^{i}+C_{[:,i]}^{\mathrm{T}}R^{-1}\big(y_{0}-\sum_{l\in\mathbb{N}\setminus\{i\}}C_{[:,l]}\hat{x}_{0|0}^{l}\big)\\
 &=(P_{i,0|-1}^{-1}+C_{[:,i]}^{\mathrm{T}}R^{-1}C_{[:,i]})\hat{x}_{0|-1}^{i}+C_{[:,i]}^{\mathrm{T}}R^{-1}\big(y_{0}-\sum_{l\in\mathbb{N}\setminus\{i\}}C_{[:,l]}\hat{x}_{0|-1}^{l}-C_{[:,i]}\hat{x}_{0|-1}^{i}\big)\\
 & = P_{i,0|0}^{-1}\hat{x}_{0|0}^{i}
 \end{aligned}
\end{equation}
From \eqref{upd:eq32} and \eqref{upd:eq:33}, we can derive the following
\small
\begin{equation}\label{upd:eq:34}
  \begin{aligned}
& \left[\begin{array}{cccccc}
         P_{i,0|0}^{-1}  & A_{ii}^{\mathrm{T}}  &  &  & \sum\limits_{l\in\mathbb{N}\setminus\{i\}}A_{li}^{\mathrm{T}}C_{[:,l]}^{\mathrm{T}} &  \\
          A_{ii} &  & I & -I &  &  \\
           & I &Q_{i}^{-1}  &  &  &  \\
           & -I &  &  & C_{[:,i]}^{\mathrm{T}} &  \\
          \sum\limits_{l\in\mathbb{N}\setminus\{i\}}C_{[:,l]}A_{li} &  &  & C_{[:,i]} &  & I \\
           &  &  &  & I & R^{-1}
        \end{array}\right]\left[\begin{array}{c}
                                  \hat{x}_{0|1}^{i} \\
                                   \pi_{0}^{i}\\
                                   \hat{w}_{0|1}^{i}\\
                                   \hat{x}_{1|1}^{i}\\
                                   \lambda_{1}^{i}\\
                                   \hat{\mathbf{v}}_{1|1}^{i}
                                \end{array}\right]
                                =\left[\begin{array}{c}
                                                          P_{i,0|0}^{-1}\hat{x}_{0|0}^{i} \\
                                                            -\sum\limits_{l\in\mathbb{N}\setminus\{i\}}A_{il}\hat{x}_{0|0}^{l}\\
                                                            0\\
                                                            0\\
                                                            y_{1}-\sum\limits_{l\in\mathbb{N}\setminus\{i\}}\sum\limits_{j\in\mathbb{N}\setminus\{i\}}C_{[:,l]}A_{lj}\hat{x}_{0|0}^{j}\\
                                                           0
                                                         \end{array}\right]
\end{aligned}
\end{equation}
\normalsize
According to \eqref{upd:eq:34}, it is further obtained
\begin{subequations}\label{upd:eq:35}
  \begin{align}
  \pi_{0}^{i} &= C_{[:,i]}^{\mathrm{T}}\lambda_{1}^{i}\\
  \hat{w}_{0|1}^{i}&=-Q_{i}\pi_{0}^{i}\\
  A_{ii}\hat{x}_{0|1}^{i} &= -\hat{w}_{0|1}^{i}+\hat{x}_{1|1}^{i}-\sum_{l\in\mathbb{N}\setminus\{i\}}A_{il}\hat{x}_{0|0}^{l}
\end{align}
\end{subequations}
Hence, \eqref{upd:eq:34} can be transformed into:
\small
\begin{equation}\label{upd:eq:14}
  \left[\begin{array}{ccc}
         P_{i,0|0}^{-1}  & A_{[:,i]}^{\mathrm{T}}C^{\mathrm{T}} & 0 \\
         CA_{[:,i]}  &-C_{[:,i]}Q_{i}C_{[:,i]}^{\mathrm{T}}  &  I \\
         0  & I & R^{-1}
        \end{array}\right]\left[\begin{array}{c}
                                  \hat{x}_{0|1}^{i} \\
                                  \lambda_{1}^{i} \\
                                  \hat{\mathbf{v}}_{1|1}^{i}
                                \end{array}\right]=\left[\begin{array}{c}
                                                            P_{i,0|0}^{-1}\hat{x}_{0|0}^{i} \\
                                                           y_{1}-\sum\limits_{l\in\mathbb{N}\setminus\{i\}}C_{[:,l]}A_{il}\hat{x}_{0|0}^{l}-\sum\limits_{l\in\mathbb{N}\setminus\{i\}}C_{[:,i]}A_{il}\hat{x}_{0|0}^{l}\\
                                                           0
                                                         \end{array}\right]
\end{equation}
\normalsize
Solving \eqref{upd:eq:14}, we can derive
\begin{subequations}\label{upd:eq:36}
\begin{align}
  \lambda_{1}^{i} &=(CA_{[:,i]}P_{i,0|0}A_{[:,i]}^{\mathrm{T}}C^{\mathrm{T}}+C_{[:,i]}Q_{i}C_{[:,i]}^{\mathrm{T}}+R)^{-1}(CA_{[:,i]}\hat{x}_{0|0}^{i}+\sum_{l=1}\sum_{j\in\mathbb{N}\setminus\{i\}}C_{[:,l]}A_{lj}\hat{x}_{0|0}^{j}-y_{1})  \\
   \hat{x}_{0|1}^{i}&=-P_{i,0|0}A_{[:,i]}^{\mathrm{T}}C^{\mathrm{T}}\lambda_{1}^{i}+\hat{x}_{0|0}^{i}
\end{align}
\end{subequations}
Define
\begin{equation}\label{eq:x10}
  \hat{x}^{i}_{1|0}:=A_{ii}\hat{x}^{i}_{0|0}+\sum_{l\in \mathbb{N}\setminus\{i\}}A_{il}\hat{x}_{0|0}^{l}
\end{equation}
%$\hat{x}^{i}_{1|0}:=A_{ii}\hat{x}^{i}_{0|0}+\sum_{l\in \mathbb{N}\setminus\{i\}}A_{il}\hat{x}_{0|0}^{l}$.
By substituting \eqref{upd:eq:36} into \eqref{upd:eq:35}, it is further obtained that
\begin{equation}\label{upd:eq:37}
\begin{aligned}
  \hat{x}_{1|1}^{i}&=\hat{x}^{i}_{1|0}+(A_{ii}P_{i,0|0}A_{[:,i]}^{\mathrm{T}}C^{\mathrm{T}}+Q_{i}C^{\mathrm{T}}_{[:,i]})(CA_{[:,i]}P_{i,0|0}A_{[:,i]}^{\mathrm{T}}C^{\mathrm{T}}+C_{[:,i]}Q_{i}C_{[:,i]}^{\mathrm{T}}+R)^{-1}\\
  &\quad \times \Big(y_{1}-C_{[:,i]}\hat{x}_{1|0}^{i}-\sum_{l\in\mathbb{N}\setminus\{i\}}C_{[:,l]}\hat{x}_{1|0}^{l}\Big)
\end{aligned}
\end{equation}

Next, let us proceed to the next sampling instant $k=N=2$. The distributed full-information estimation \eqref{fie} becomes
\begin{subequations}\label{upd:fie00}
\begin{align}
&\min_{\hat{x}^{i}_{0|2},~ \hat{w}_{0|2}^{i},~\hat{w}_{1|2}^{i}} \bar{\Phi}_{2}^{i}=\frac{1}{2}\Big(\big\|\hat{x}_{0|2}^{i}-\hat{x}_{0|-1}^{i}\big\|_{P^{-1}_{i,0|-1}}^{2}+\|\hat{\mathbf{v}}_{0|2}^{i}\|^2_{R^{-1}}+\|\hat{w}_{0|2}^{i}\|_{Q_{i}^{-1}}^{2}+\|\hat{\mathbf{v}}_{1|2}^{i}\|^2_{R^{-1}}\nonumber\\
&\quad\quad\quad\quad\quad\quad\quad\quad+\|\hat{w}_{1|2}^{i}\|_{Q_{i}^{-1}}^{2}+\|\hat{\mathbf{v}}_{2|2}^{i}\|^2_{R^{-1}}\Big)\nonumber\\
&\quad\quad\text{s.t.}\quad y_{0} = C_{[:,i]}\hat{x}^{i}_{0|2}+\sum_{l\in\mathbb{N}\setminus\{i\}}C_{[:,l]}\hat{x}_{0|-1}^{l}+\hat{\mathbf{v}}_{0|2}^{i}\label{upd:con12}\\
&\quad\quad\quad~~\hat{x}_{1|2}^{i} = A_{ii}\hat{x}_{0|2}^{i}+\sum_{l\in\mathbb{N}\setminus\{i\}}A_{il}\hat{x}_{0|1}^{l}+\hat{w}_{0|2}^{i}\label{upd:con32}\\
&\quad\quad\quad\quad~ y_{1} = C_{[:,i]}\hat{x}^{i}_{1|2}+\sum_{l\in\mathbb{N}\setminus\{i\}}C_{[:,l]}A_{li}\hat{x}^{i}_{0|2}+\sum_{l\in\mathbb{N}\setminus\{i\}}\sum_{j\in\mathbb{N}\setminus\{i\}}C_{[:,l]}A_{lj}\hat{x}^{j}_{0|0}+\hat{\mathbf{v}}_{1|2}^{i}\label{upd:con22}\\
&\quad\quad\quad~~\hat{x}_{2|2}^{i} = A_{ii}\hat{x}_{1|2}^{i}+\sum_{l\in\mathbb{N}\setminus\{i\}}A_{il}\hat{x}_{1|1}^{l}+\hat{w}_{1|2}^{i}\label{upd:con42}\\
&\quad\quad\quad\quad~ y_{2} = C_{[:,i]}\hat{x}^{i}_{2|2}+\sum_{l\in\mathbb{N}\setminus\{i\}}C_{[:,l]}A_{li}\hat{x}^{i}_{1|2}+\sum_{l\in\mathbb{N}\setminus\{i\}}\sum_{j\in\mathbb{N}\setminus\{i\}}C_{[:,l]}A_{lj}\hat{x}^{j}_{1|1}+\hat{\mathbf{v}}_{2|2}^{i}\label{upd:con52}
\end{align}
\end{subequations}
\normalsize
By formulating Lagrange optimality conditions from \eqref{upd:fie00}, it is obtained that
\small
\begin{align}\label{upd:eq:38}
  &\left[\begin{array}{cccccccccc:ccc}
          P_{i,0|-1}^{-1} & C_{[:,i]}^{\mathrm{T}} &  & A_{ii}^{\mathrm{T}} &  & & \sum\limits_{l\in\mathbb{N}\setminus\{i\}}A_{li}^{\mathrm{T}}C_{[:,i]}^{\mathrm{T}} &  &  &\\
          C_{[:,i]} &  & I &  &  &  &  &  &  &  \\
           & I & R^{-1} &  &  &  &  &  &  &  \\
          A_{ii} &  &  &  & I & -I &  &  &  &   \\
           &  &  & I & Q^{-1}_{i} & &  &  &  &   \\
           &  &  & -I &  &  &  C_{[:,i]}^{\mathrm{T}}&  & A_{ii}^{\mathrm{T}} &   \\
          \sum\limits_{l\in\mathbb{N}\setminus\{i\}}C_{[:,i]}A_{li} &  &  &  &  &  C_{[:,i]}&  & I &  &   \\
           &  &  &  &  &  & I & R^{-1} &  &    \\
           &  &  &  &  & A_{ii} &  &  &  & I   \\
           &  &  &  &  &  &  &  & I & Q_{i}^{-1}  \\
           &  &  &  &  &  &  &  & -I &   \\
           &  &  &  &  &  \sum\limits_{l\in\mathbb{N}\setminus\{i\}}C_{[:,i]}A_{li} & &  &  &   \\
           &  &  &  &  &  &  &  &  &
        \end{array}\right.\nonumber\\
        &\left.\begin{array}{:cccc}
                  &  &  &  \\
                 &  &  & \\
                 &  &  & \\
                 &  &  & \\
                 &  &  & \\
                 &  &\sum\limits_{l\in\mathbb{N}\setminus\{i\}}A_{li}^{\mathrm{T}}C_{[:,i]}^{\mathrm{T}}  & \\
                 &  &  & \\
                 &  &  & \\
                 & -I &  & \\
                 &  &  & \\
                 &  & C_{[:,i]}^{\mathrm{T}} & \\
                 &  C_{[:,i]}&  & I \\
                 &  & I & R^{-1}
               \end{array}\right]\times\left[\begin{array}{c}
                                  \hat{x}_{0|2}^{i} \\
                                  \lambda_{0}^{i} \\
                                   \hat{\mathbf{v}}_{0|2}^{i}\\
                                   \pi_{0}^{i}\\
                                   \hat{w}_{0|2}^{i}\\
                                   \hat{x}_{1|2}^{i}\\
                                   \lambda_{1}^{i}\\
                                   \hat{\mathbf{v}}_{1|2}^{i}\\
                                  \pi_{1}^{i}\\
                                   \hat{w}_{1|2}^{i}\\
                                   \hat{x}_{2|2}^{i}\\
                                   \lambda_{2}^{i}\\
                                   \hat{\mathbf{v}}_{2|2}^{i}
                                \end{array}\right]
                                =\left[\begin{array}{c}
                                  P_{i,0|-1}^{-1}\hat{x}_{0|-1}^{i} \\
                                                           y_{0}-\sum\limits_{l\in\mathbb{N}\setminus\{i\}}C_{[:,l]}\hat{x}_{0|-1}^{l} \\
                                                            0\\
                                                            -\sum\limits_{l\in\mathbb{N}\setminus\{i\}}A_{il}\hat{x}_{0|1}^{l}\\
                                                            0\\
                                                            0\\
                                                            y_{1}-\sum\limits_{l\in\mathbb{N}\setminus\{i\}}\sum\limits_{j\in\mathbb{N}\setminus\{i\}}C_{[:,l]}A_{lj}\hat{x}_{0|0}^{j}\\
                                                           0\\
                                 -\sum\limits_{l\in\mathbb{N}\setminus\{i\}}A_{il}\hat{x}_{1|1}^{l}\\
                                                            0\\
                                                            0\\
                                                            y_{2}-\sum\limits_{l\in\mathbb{N}\setminus\{i\}}\sum\limits_{j\in\mathbb{N}\setminus\{i\}}C_{[:,l]}A_{lj}\hat{x}_{1|1}^{j}\\
                                                           0
                                \end{array}\right]
\end{align}
\normalsize
where $\lambda_{0}^{i}$,  $\pi_{0}^{i}$,  $\lambda_{1}^{i}$, $\pi_{1}^{i}$ and $\lambda_{2}^{i}$ are the vectors of Lagrange multipliers of subsystem $i$ associated with the equality constraints \eqref{upd:con12}, \eqref{upd:con32}, \eqref{upd:con22}, \eqref{upd:con42}, and \eqref{upd:con52}, respectively. Following the derivations of \eqref{upd:eq32} adopted for the previous time instant $k=1$,
%$\lambda_{0}^{i}$ and $\hat{\mathbf{v}}_{0|1}^{i}$ can be excluded, and
\eqref{upd:eq:38} leads to:
\footnotesize
\begin{align}\label{upd:eq:38_2}
  &\left[\begin{array}{ccccccccccccc}
          P_{i,0|0}^{-1} & A_{ii}^{\mathrm{T}} &  & & \sum\limits_{l\in\mathbb{N}\setminus\{i\}}A_{li}^{\mathrm{T}}C_{[:,i]}^{\mathrm{T}} &  &  &  &  &  &  \\
          A_{ii}  &  & I & -I &  &  &  &  &  &  &  \\
           &  I & Q^{-1}_{i} & &  &  &  &  &  &  &  \\
           &  -I &  &  &  C_{[:,i]}^{\mathrm{T}}&  & A_{ii}^{\mathrm{T}} &  &  &\sum\limits_{l\in\mathbb{N}\setminus\{i\}}A_{li}^{\mathrm{T}}C_{[:,i]}^{\mathrm{T}}  & \\
          \sum\limits_{l\in\mathbb{N}\setminus\{i\}}C_{[:,i]}A_{li} &   &  &  C_{[:,i]}&  & I &  &  &  &  & \\
           &   &  &  & I & R^{-1} &  &  &  &  &  \\
           &    &  & A_{ii} &  &  &  & I & -I &  &  \\
           &    &  &  &  &  & I & Q_{i}^{-1} &  &  &  \\
           &    &  &  &  &  & -I &  &  & C_{[:,i]}^{\mathrm{T}} &  \\
           &    &  &  \sum\limits_{l\in\mathbb{N}\setminus\{i\}}C_{[:,i]}A_{li} & &  &  &  &  C_{[:,i]}&  & I \\
           &    &  &  &  &  &  &  &  & I & R^{-1}
        \end{array}\right]\nonumber\\
        &\times\left[\begin{array}{c}
                                  \hat{x}_{0|2}^{i} \\
                                   \pi_{0}^{i}\\
                                   \hat{w}_{0|2}^{i}\\
                                   \hat{x}_{1|2}^{i}\\
                                   \lambda_{1}^{i}\\
                                   \hat{\mathbf{v}}_{1|2}^{i}\\
                                  \pi_{1}^{i}\\
                                   \hat{w}_{1|2}^{i}\\
                                   \hat{x}_{2|2}^{i}\\
                                   \lambda_{2}^{i}\\
                                   \hat{\mathbf{v}}_{2|2}^{i}
                                \end{array}\right]
                                =\left[\begin{array}{c}
                                  P_{i,0|0}^{-1}\hat{x}_{0|0}^{i} \\
                                                            -\sum\limits_{l\in\mathbb{N}\setminus\{i\}}A_{il}\hat{x}_{0|1}^{l}\\
                                                            0\\
                                                            0\\
                                                            y_{1}-\sum\limits_{l\in\mathbb{N}\setminus\{i\}}\sum\limits_{j\in\mathbb{N}\setminus\{i\}}C_{[:,l]}A_{lj}\hat{x}_{0|0}^{j}\\
                                                           0\\
                                 -\sum\limits_{l\in\mathbb{N}\setminus\{i\}}A_{il}\hat{x}_{1|1}^{l}\\
                                                            0\\
                                                            0\\
                                                            y_{2}-\sum\limits_{l\in\mathbb{N}\setminus\{i\}}\sum\limits_{j\in\mathbb{N}\setminus\{i\}}C_{[:,l]}A_{lj}\hat{x}_{1|1}^{j}\\
                                                           0
                                \end{array}\right]
\end{align}
\normalsize
According to \eqref{upd:eq:38}, it is further derived that
\begin{subequations}\label{upd:eq:39}
  \begin{align}
  \hat{x}_{1|2}^{i} &=A_{ii}\hat{x}_{0|2}^{i}+\sum_{l\in\mathbb{N}\setminus\{i\}}A_{il}\hat{x}_{0|1}^{l}+\hat{w}_{0|2}^{i}\label{upd:eq:39_1} \\
     \hat{w}_{0|2}^{i}&=-Q_{i}\pi_{0}^{i}\label{upd:eq:39_2}  \\
     \pi_{0}^{i} & =C_{[:,i]}^{\mathrm{T}}\lambda_{1}^{i}+A_{ii}^{\mathrm{T}}\pi_{1}^{i}+\sum_{l\in\mathbb{N}\setminus\{i\}}A_{li}^{\mathrm{T}}C_{[:,l]}^{\mathrm{T}}\lambda_{2}^{i}\label{upd:eq:39_3}\\
     \lambda_{1}^{i}&=-R^{-1}\hat{\mathbf{v}}_{1|2}^{i}\label{upd:eq:39_4}
  \end{align}
\end{subequations}
and
\begin{subequations}\label{upd:eq:40}
  \begin{align}
    \hat{x}_{0|2}^{i}+P_{i,0|0}A_{ii}^{\mathrm{T}}\pi_{0}^{i}+P_{i,0|0}\sum_{l\in\mathbb{N}\setminus\{i\}}A_{li}^{\mathrm{T}}C_{[:,i]}^{\mathrm{T}}\lambda_{1}^{i}&=\hat{x}_{0|0}^{i}\label{upd:eq:40_1}\\
    \sum_{l\in\mathbb{N}\setminus\{i\}}C_{[:,i]}A_{li}\hat{x}_{0|2}^{i}+ C_{[:,i]}\hat{x}_{1|2}^{i}+\hat{\mathbf{v}}_{1|2}^{i}&=  y_{1}-\sum_{l\in\mathbb{N}\setminus\{i\}}\sum_{j\in\mathbb{N}\setminus\{i\}}C_{[:,l]}A_{lj}\hat{x}_{0|0}^{j}\label{upd:eq:40_2}
  \end{align}
\end{subequations}
By substituting \eqref{upd:eq:39_2} and \eqref{upd:eq:39_3} into \eqref{upd:eq:39_1}, we can derive the following
\begin{equation}\label{upd:eq:41}
  \hat{x}_{1|2}^{i}=A_{ii}\hat{x}_{0|2}^{i}+\sum_{l\in\mathbb{N}\setminus\{i\}}A_{il}\hat{x}_{0|1}^{l}-Q_{i}C_{[:,i]}^{\mathrm{T}}\lambda_{1}^{i}-Q_{i}A_{ii}^{\mathrm{T}}\pi_{1}^{i}-Q_{i}\sum_{l\in\mathbb{N}\setminus\{i\}}A_{li}^{\mathrm{T}}C_{[:,l]}^{\mathrm{T}}\lambda_{2}^{i}
\end{equation}
%Substituting \eqref{upd:eq:39_3} into \eqref{upd:eq:40_1} and substituting \eqref{upd:eq:39_4} into \eqref{upd:eq:40_2} yields:
Considering \eqref{upd:eq:39_3} and \eqref{upd:eq:41}, \eqref{upd:eq:40_1} and \eqref{upd:eq:40_2} are equivalent to the equations
\begin{equation}\label{upd:eq:42_1}
  \hat{x}_{0|2}^{i}=\hat{x}_{0|0}^{i}-P_{i,0|0}A_{[:,i]}^{\mathrm{T}}C^{\mathrm{T}}\lambda_{1}^{i}-P_{i,0|0}(A_{ii}^{\mathrm{T}})^{2}\pi_{1}^{i}-P_{i,0|0}A_{ii}^{\mathrm{T}}\sum_{l\in\mathbb{N}\setminus\{i\}}A_{li}^{\mathrm{T}}C_{[:,l]}^{\mathrm{T}}\lambda_{2}^{i}
\end{equation}
and
  \begin{align}\label{upd:eq:42_2}
  &\quad CA_{[:,i]}\hat{x}_{0|2}^{i}-C_{[:,i]}Q_{i}C_{[:,i]}^{\mathrm{T}}\lambda_{1}^{i}-C_{[:,i]}Q_{i}A_{ii}^{\mathrm{T}}\pi_{1}^{i}-C_{[:,i]}Q_{i}\sum_{l\in\mathbb{N}\setminus\{i\}}A_{li}^{\mathrm{T}}C_{[:,l]}^{\mathrm{T}}\lambda_{2}^{i}+\hat{\mathbf{v}}_{1|2}^{i}\nonumber\\
  &=  y_{1}-\sum_{l\in\mathbb{N}}\sum_{j\in\mathbb{N}\setminus\{i\}}C_{[:,l]}A_{lj}\hat{x}_{0|0}^{j}
  \end{align}
Define $M_{i}=CA_{[:,i]}P_{i,0|0}A_{[:,i]}^{\mathrm{T}}C^{\mathrm{T}}+C_{[:,i]}Q_{i}C_{[:,i]}^{\mathrm{T}}$ and $Z_{i}=CA_{[:,i]}P_{i,0|0}A_{ii}^{\mathrm{T}}+C_{[:,i]}Q_{i}$.
Left multiplying both sides of equation \eqref{upd:eq:42_1} with $CA_{[:,i]}$ and considering \eqref{upd:eq:42_2} gives
\begin{equation}\label{upd:eq:43}
\begin{aligned}
  &\quad -M_{i}\lambda_{1}^{i}-Z_{i}A_{ii}^{\mathrm{T}}\pi_{1}^{i}-Z_{i}\sum_{l\in\mathbb{N}\setminus\{i\}}A_{li}^{\mathrm{T}}C_{[:,l]}^{\mathrm{T}}\lambda_{2}^{i}+\hat{\mathbf{v}}_{1|2}^{i}\\
  &=y_{1}-\sum_{l\in\mathbb{N}}\sum_{j\in\mathbb{N}\setminus\{i\}}C_{[:,l]}A_{lj}\hat{x}_{0|0}^{j}-CA_{[:,i]}\hat{x}_{0|0}^{i}
\end{aligned}
\end{equation}
%where $M^{i}=CA_{[:,i]}P_{i,0|0}A_{[:,i]}^{\mathrm{T}}C^{\mathrm{T}}+C_{[:,i]}Q_{i}C_{[:,i]}^{\mathrm{T}}$ and $H^{i}=CA_{[:,i]}P_{i,0|0}A_{ii}^{\mathrm{T}}+C_{[:,i]}Q_{i}$.
%By jointly taking into account \eqref{upd:eq:39_4} as well as \eqref{upd:eq:43} and eliminating $\lambda_{1}^{i}$, we can obtain
%\begin{equation}\label{upd:eq:44}
%\begin{aligned}
%  &\quad (M^{i}+R)R^{-1}\hat{\mathbf{v}}_{1|2}^{i}-H^{i}A_{ii}^{\mathrm{T}}\pi_{1}^{i}-H^{i}\sum_{l\in\mathbb{N}\setminus\{i\}}A_{li}^{\mathrm{T}}C_{[:,l]}^{\mathrm{T}}\lambda_{2}^{i}\\
%  &=y_{1}-\sum_{l\in\mathbb{N}\setminus\{i\}}\sum_{j\in\mathbb{N}\setminus\{i\}}C_{[:,l]}A_{lj}\hat{x}_{0|2}^{j}-CA_{[:,i]}\hat{x}_{0|0}^{i}-C_{[:,i]}\sum_{l\in\mathbb{N}\setminus\{i\}}A_{il}\hat{x}_{0|2}^{l}\\
%  &=y_{1}-C_{[:,i]}\hat{x}_{1|2}^{i}-\sum_{l\in\mathbb{N}\setminus\{i\}}C_{[:,l]}\hat{x}_{1|2}^{i}
%\end{aligned}
%\end{equation}
%Since $M^{i}+R$ is positive-definite, \eqref{upd:eq:44} can be rewritten as
%\begin{equation}\label{upd:eq:45}
%\begin{aligned}
% \hat{\mathbf{v}}_{1|2}^{i}& =R(M^{i}+R)^{-1}H^{i}A_{ii}^{\mathrm{T}}\pi_{1}^{i}+R(M^{i}+R)^{-1}H^{i}\sum_{l\in\mathbb{N}\setminus\{i\}}A_{li}^{\mathrm{T}}C_{[:,l]}^{\mathrm{T}}\lambda_{2}^{i}\\
%  &\quad+R(M^{i}+R)^{-1}\Big(y_{1}-C_{[:,i]}\hat{x}_{1|2}^{i}-\sum_{l\in\mathbb{N}\setminus\{i\}}C_{[:,l]}\hat{x}_{1|2}^{i}\Big)
%\end{aligned}
%\end{equation}
By taking into account \eqref{upd:eq:39_4} and \eqref{upd:eq:43}, we can obtain
%Similarly, eliminating $\hat{\mathbf{v}}_{1|2}^{i}$ from \eqref{upd:eq:39_4} and \eqref{upd:eq:43} gives
\begin{equation}\label{upd:eq:44}
\begin{aligned}
  &\quad -(M_{i}+R)\lambda_{1}^{i}-Z_{i}A_{ii}^{\mathrm{T}}\pi_{1}^{i}-Z_{i}\sum_{l\in\mathbb{N}\setminus\{i\}}A_{li}^{\mathrm{T}}C_{[:,l]}^{\mathrm{T}}\lambda_{2}^{i}\\
  &=y_{1}-\sum_{l\in\mathbb{N}}\sum_{j\in\mathbb{N}\setminus\{i\}}C_{[:,l]}A_{lj}\hat{x}_{0|0}^{j}-CA_{[:,i]}\hat{x}_{0|0}^{i}\\
  &=y_{1}-C_{[:,i]}\hat{x}_{1|0}^{i}-\sum_{l\in\mathbb{N}\setminus\{i\}}C_{[:,l]}\hat{x}_{1|0}^{l}
\end{aligned}
\end{equation}
Since $M_{i}+R$ is positive-definite, by left-multiplying $(M_{i}+R)^{-1}$ on the both side of \eqref{upd:eq:44}, it holds that:
\begin{equation}\label{upd:eq:46}
\begin{aligned}
  \lambda_{1}^{i}&=-(M_{i}+R)^{-1}Z_{i}A_{ii}^{\mathrm{T}}\pi_{1}^{i}-(M_{i}+R)^{-1}Z_{i}\sum_{l\in\mathbb{N}\setminus\{i\}}A_{li}^{\mathrm{T}}C_{[:,l]}^{\mathrm{T}}\lambda_{2}^{i}\\
  &\quad-(M_{i}+R)^{-1}\Big(y_{1}-C_{[:,i]}\hat{x}_{1|0}^{i}-\sum_{l\in\mathbb{N}\setminus\{i\}}C_{[:,l]}\hat{x}_{1|0}^{l}\Big)
\end{aligned}
\end{equation}
Substituting \eqref{upd:eq:42_1} into \eqref{upd:eq:41} yields
\begin{equation}\label{upd:eq:47}
  \begin{aligned}
  \hat{x}_{1|2}^{i}&=-A_{ii}P_{i,0|0}A_{[:,i]}^{\mathrm{T}}C^{\mathrm{T}}\lambda_{1}^{i}-A_{ii}P_{i,0|0}(A_{ii}^{\mathrm{T}})^{2}\pi_{1}^{i}-A_{ii}P_{i,0|0}A_{ii}^{\mathrm{T}}\sum_{l\in\mathbb{N}\setminus\{i\}}A_{li}^{\mathrm{T}}C_{[:,l]}^{\mathrm{T}}\lambda_{2}^{i}+A_{ii}\hat{x}_{0|0}^{i}\\
  &\quad+\sum_{l\in\mathbb{N}\setminus\{i\}}A_{il}\hat{x}_{0|1}^{l}-Q_{i}C_{[:,i]}^{\mathrm{T}}\lambda_{1}^{i}-Q_{i}A_{ii}^{\mathrm{T}}\pi_{1}^{i}-Q_{i}\sum_{l\in\mathbb{N}\setminus\{i\}}A_{li}^{\mathrm{T}}C_{[:,l]}^{\mathrm{T}}\lambda_{2}^{i}\\
  %&=-(A_{ii}P_{i,0|0}A_{[:,i]}^{\mathrm{T}}C^{\mathrm{T}}+Q_{i}C_{[:,i]}^{\mathrm{T}})\lambda_{1}^{i}-(A_{ii}P_{i,0|0}A_{ii}^{\mathrm{T}}+Q_{i})A_{ii}^{\mathrm{T}}\pi_{1}^{i}+A_{ii}\hat{x}_{0|0}^{i}+\sum_{l\in\mathbb{N}\setminus\{i\}}A_{il}\hat{x}_{0|1}^{l}\\
%  &\quad -(A_{ii}P_{i,0|0}A_{ii}^{\mathrm{T}}+Q_{i})\sum_{l\in\mathbb{N}\setminus\{i\}}A_{li}^{\mathrm{T}}C_{[:,l]}^{\mathrm{T}}\lambda_{2}^{i}\\
  &=-Z_{i}\lambda_{1}^{i}-(A_{ii}P_{i,0|0}A_{ii}^{\mathrm{T}}+Q_{i})A_{ii}^{\mathrm{T}}\pi_{1}^{i}-(A_{ii}P_{i,0|0}A_{ii}^{\mathrm{T}}+Q_{i})\sum_{l\in\mathbb{N}\setminus\{i\}}A_{li}^{\mathrm{T}}C_{[:,l]}^{\mathrm{T}}\lambda_{2}^{i}+\hat{x}_{1|0}^{i}
  \end{aligned}
\end{equation}
By substituting \eqref{upd:eq:46} into \eqref{upd:eq:47}, we can further obtain that
\begin{equation}\label{upd:eq:48}
  \begin{aligned}
  \hat{x}_{1|2}^{i}
 % &=(H^{i})^{\mathrm{T}}(M^{i}+R)^{-1}H^{i}A_{ii}^{\mathrm{T}}\pi_{1}^{i}+(H^{i})^{\mathrm{T}}(M^{i}+R)^{-1}H^{i}\sum_{l\in\mathbb{N}\setminus\{i\}}A_{li}^{\mathrm{T}}C_{[:,l]}^{\mathrm{T}}\lambda_{2}^{i}\\
%  &\quad-(H^{i})^{\mathrm{T}}(M^{i}+R)^{-1}(y_{1}-C_{[:,i]}\hat{x}_{1|0}^{i}-\sum_{l\in\mathbb{N}\setminus\{i\}}C_{[:,l]}\hat{x}_{1|0}^{l})\\
%  &\quad-(A_{ii}P_{i,0|0}A_{ii}^{\mathrm{T}}+Q_{i})A_{ii}^{\mathrm{T}}\pi_{1}^{i}-(A_{ii}P_{i,0|0}A_{ii}^{\mathrm{T}}+Q_{i})\sum_{l\in\mathbb{N}\setminus\{i\}}A_{li}^{\mathrm{T}}C_{[:,l]}^{\mathrm{T}}\lambda_{2}^{i}+\hat{x}_{1|0}^{i}\\
  &=\hat{x}_{1|0}^{i}+\Big(Z_{i}^{\mathrm{T}}(M_{i}+R)^{-1}Z_{i}-(A_{ii}P_{i,0|0}A_{ii}^{\mathrm{T}}+Q_{i})\Big)A_{ii}^{\mathrm{T}}\pi_{1}^{i}\\
  &\quad+\Big(Z_{i}^{\mathrm{T}}(M_{i}+R)^{-1}Z_{i}-(A_{ii}P_{i,0|0}A_{ii}^{\mathrm{T}}+Q_{i})\Big)\sum_{l\in\mathbb{N}\setminus\{i\}}A_{li}^{\mathrm{T}}C_{[:,l]}^{\mathrm{T}}\lambda_{2}^{i}\\
  &\quad+Z_{i}^{\mathrm{T}}(M_{i}+R)^{-1}\Big(y_{1}-C_{[:,i]}\hat{x}_{1|0}^{i}-\sum_{l\in\mathbb{N}\setminus\{i\}}C_{[:,l]}\hat{x}_{1|0}^{l}\Big)
  \end{aligned}
\end{equation}
From \eqref{upd:eq:37}, we have $\hat{x}_{1|1}^{i} = \hat{x}_{1|0}^{i}+(Z_{i})^{\mathrm{T}}(M_{i}+R)^{-1}(y_{1}-C_{[:,i]}\hat{x}_{1|0}^{i}-\sum_{l\in\mathbb{N}\setminus\{i\}}C_{[:,l]}\hat{x}_{1|0}^{i})$. Hence, \eqref{upd:eq:48} can be rewritten as
\begin{equation}\label{upd:eq:49}
\begin{aligned}
  \hat{x}_{1|2}^{i}&=\hat{x}_{1|1}^{i}+\Big(Z_{i}^{\mathrm{T}}(M_{i}+R)^{-1}Z_{i}-(A_{ii}P_{i,0|0}A_{ii}^{\mathrm{T}}+Q_{i})\Big)A_{ii}^{\mathrm{T}}\pi_{1}^{i}\\
  &\quad+\Big(Z_{i}^{\mathrm{T}}(M^{i}+R)^{-1}Z_{i}-(A_{ii}P_{i,0|0}A_{ii}^{\mathrm{T}}+Q_{i})\Big)\sum_{l\in\mathbb{N}\setminus\{i\}}A_{li}^{\mathrm{T}}C_{[:,l]}^{\mathrm{T}}\lambda_{2}^{i}\\
\end{aligned}
\end{equation}
Define $P_{i,1|1}:=-Z_{i}^{\mathrm{T}}(M_{i}+R)^{-1}Z_{i}+A_{ii}P_{i,0|0}A_{ii}^{\mathrm{T}}+Q_{i}$, and \eqref{upd:eq:49} can be expressed as:
\begin{equation}\label{upd:eq:49_2}
  P_{i,1|1}^{-1}\hat{x}_{1|2}^{i}+A_{ii}^{\mathrm{T}}\pi_{1}^{i}+\sum_{l\in\mathbb{N}\setminus\{i\}}A_{li}^{\mathrm{T}}C_{[:,l]}^{\mathrm{T}}\lambda_{2}^{i}=P_{i,1|1}^{-1}\hat{x}_{1|1}^{i}\\
\end{equation}
Therefore, \eqref{upd:eq:38} is equivalent to
\small
\begin{equation}\label{upd:eq:50}
  \begin{aligned}
& \left[\begin{array}{cccccc}
         P_{i,1|1}^{-1}  & A_{ii}^{\mathrm{T}}  &  &  & \sum\limits_{l\in\mathbb{N}\setminus\{i\}}A_{li}^{\mathrm{T}}C_{[:,l]}^{\mathrm{T}} &  \\
          A_{ii} &  & I & -I &  &  \\
           & I &Q_{i}^{-1}  &  &  &  \\
           & -I &  &  & C_{[:,i]}^{\mathrm{T}} &  \\
          \sum\limits_{l\in\mathbb{N}\setminus\{i\}}C_{[:,l]}A_{li} &  &  & C_{[:,i]} &  & I \\
           &  &  &  & I & R^{-1}
        \end{array}\right]\left[\begin{array}{c}
                                  \hat{x}_{1|2}^{i} \\
                                   \pi_{1}^{i}\\
                                   \hat{w}_{1|2}^{i}\\
                                   \hat{x}_{2|2}^{i}\\
                                   \lambda_{2}^{i}\\
                                   \hat{\mathbf{v}}_{2|2}^{i}
                                \end{array}\right]
                                =\left[\begin{array}{c}
                                                          P_{i,1|1}^{-1}\hat{x}_{1|1}^{i} \\
                                                            -\sum\limits_{l\in\mathbb{N}\setminus\{i\}}A_{il}\hat{x}_{1|1}^{l}\\
                                                            0\\
                                                            0\\
                                                            y_{2}-\sum\limits_{l\in\mathbb{N}\setminus\{i\}}\sum\limits_{j\in\mathbb{N}\setminus\{i\}}C_{[:,l]}A_{lj}\hat{x}_{1|1}^{j}\\
                                                           0
                                                         \end{array}\right]
\end{aligned}
\end{equation}
\normalsize
which has a similar structure as \eqref{upd:eq:34}. Similarly, if we define $\hat{x}^{i}_{2|1}:=A_{ii}\hat{x}^{i}_{1|1}+\sum_{l\in \mathbb{N}\setminus\{i\}}A_{il}\hat{x}_{1|1}^{l}$, it can be obtained that
\begin{equation}\label{upd:eq:51}
\begin{aligned}
  \hat{x}_{2|2}^{i}&=\hat{x}^{i}_{2|1}+(A_{ii}P_{i,1|1}A_{[:,i]}^{\mathrm{T}}C^{\mathrm{T}}+Q_{i}C^{\mathrm{T}}_{[:,i]})(CA_{[:,i]}P_{i,1|1}A_{[:,i]}^{\mathrm{T}}C^{\mathrm{T}}+C_{[:,i]}Q_{i}C_{[:,i]}^{\mathrm{T}}+R)^{-1}\\
  &\quad \times \Big(y_{2}-C_{[:,i]}\hat{x}_{2|1}^{i}-\sum_{l\in\mathbb{N}\setminus\{i\}}C_{[:,l]}\hat{x}_{2|1}^{l}\Big)
\end{aligned}
\end{equation}

{\color{black}In the following, we can iteratively apply the same methodology to derive the analytical solution for $k=3$, $k=4$, and subsequent time instants. The induction method allows us to derive the analytical solution for the optimization problem in a recursive fashion.
Specifically, at sampling instant $k$,
the recursive expression of the Kalman filter for the $i$th subsystem of the proposed distributed scheme is presented as follows:}

\begin{subequations}\label{upd:dkf}
\begin{itemize}
  \item[] \textit{Prediction:}
  \begin{equation}\label{upd:pre:x}
    \hat{x}_{k|k-1}^{i}=A_{ii}\hat{x}_{k-1|k-1}^{i}+\sum_{l\in\mathbb{N}\setminus\{i\}}A_{il}\hat{x}_{k-1|k-1}^{l}
  \end{equation}
  \item[]\textit{Update:}
  \begin{equation}\label{upd:upd:x}
    \hat{x}_{k|k}^{i}=\hat{x}_{k|k-1}^{i}+L_{i,k}\big(y_{k}-C_{[:,i]}\hat{x}_{k|k-1}^{i}-\sum_{l\in\mathbb{N}\setminus\{i\}}C_{[:,l]}\hat{x}_{k|k-1}^{l}\big)
  \end{equation}
  \item[] \textit{Kalman filter gain:}
  \begin{equation}
    L_{i,k}=(CA_{[:,i]}P_{i,k-1|k-1}A_{ii}^{\mathrm{T}}+C_{[:,i]}Q_{i})^{\mathrm{T}}(CA_{[:,i]}P_{i,k-1|k-1}A_{[:,i]}^{\mathrm{T}}C^{\mathrm{T}}+C_{[:,i]}Q_{i}C_{[:,i]}^{\mathrm{T}}+R)^{-1}\label{upd:eq:26d}
  \end{equation}
    \item[] \textit{Covariance matrix:}
  \begin{align}
    %P_{i,k|k}&=-(CA_{[:,i]}P_{i,k-1|k-1}A_{ii}^{\mathrm{T}}+C_{[:,i]}Q_{i})^{\mathrm{T}}(CA_{[:,i]}P_{i,k-1|k-1}A_{[:,i]}^{\mathrm{T}}C^{\mathrm{T}}+C_{[:,i]}Q_{i}C_{[:,i]}^{\mathrm{T}}+R)^{-1}\nonumber\\
%&\quad\times(CA_{[:,i]}P_{i,k-1|k-1}A_{ii}^{\mathrm{T}}+C_{[:,i]}Q_{i})+(A_{ii}P_{i,k-1|k-1}A_{ii}^{\mathrm{T}}+Q_{i})\label{upd:p}
P_{i,k|k}&=-L_{i,k}(CA_{[:,i]}P_{i,k-1|k-1}A_{ii}^{\mathrm{T}}+C_{[:,i]}Q_{i})+(A_{ii}P_{i,k-1|k-1}A_{ii}^{\mathrm{T}}+Q_{i})\label{upd:p}
  \end{align}
\end{itemize}
\end{subequations}

{\color{black}The recursive expression of the proposed DKF approach is derived by leveraging the distributed full-information estimation formulation in \eqref{fie}.
Therefore, from a theoretical point of view, the estimates provided by the proposed DKF and the distributed full-information estimation are identical, in an unconstrained setting.}
\section*{Proposed distributed extended Kalman filter}
%The centralized extended Kalman filter has gained considerable interest due to its effectiveness in handling nonlinearity and it is easy for implementation. It provides an approximation of the optimal estimates for a nonlinear system by linearizing it around the last state estimates and applying the centralized Kalman filter to obtain the current state estimates.
%In this section, we propose a distributed extended Kalman filter (DEKF) for nonlinear systems based on the proposed DKF algorithm.
As has been discussed, our objective is to propose a distributed estimation method based on the subsystem models in the form of \eqref{model_nonlinear}, to estimate the full-state of the nonlinear process in \eqref{model_nonlinear_centralized}. In this section, the proposed linear DKF algorithm in \eqref{upd:dkf} is extended to formulate a distributed extended Kalman filter scheme for nonlinear state estimation within a distributed framework.

Instead of using linear subsystem models, in this nonlinear design, the nonlinear subsystem models in the form of \eqref{model_nonlinear} are utilized. By incorporating successive linearization of each subsystem model in \eqref{model_nonlinear} at every new sampling instant $k$, a distributed extended Kalman filter method is formulated based on \eqref{upd:dkf}. The expression of each local filter of the distributed scheme for subsystem $i$, $i\in\mathbb N$, is presented as follows:
%The proposed DEKF method can be summarized in the following recursive expression which is similar to the structure of the proposed DKF algorithm in \eqref{dkf}:
\begin{subequations}\label{edkf}
\begin{itemize}
\item[] \textit{Prediction:}
\begin{equation}\label{ekf1}
  \hat{x}_{k|k-1}^{i}= f_{i}(\hat{x}^{i}_{k-1|k-1}, \hat{X}^{i}_{k-1|k-1})
\end{equation}
\item[] \textit{Update:}
\begin{equation}\label{ekf3}
  \hat{x}_{k|k}^{i}=\hat{x}_{k|k-1}^{i}+L_{i,k}(y_{k}-h(\hat{x}_{k|k-1}))
\end{equation}
\item[]  \textit{Kalman filter gain:}
  \begin{align}
    L_{i,k}&=(C_{k}A_{[:,i],k-1}P_{i,k-1|k-1}A_{ii,k-1}^{\mathrm{T}}+C_{[:,i],k}Q_{i})^{\mathrm{T}}\nonumber\\
&\quad\times(C_{k}A_{[:,i],k-1}P_{i,k-1|k-1}A_{[:,i],k-1}^{\mathrm{T}}C_{k}^{\mathrm{T}}+C_{[:,i],k}Q_{i}C_{[:,i],k}^{\mathrm{T}}+R)^{-1}\label{ekf4}
  \end{align}
  \item[]  \textit{Covariance matrix:}
  \begin{equation}
   P_{i,k|k}=-L_{i,k}(C_{k}A_{[:,i],k-1}P_{i,k-1|k-1}A_{ii,k-1}^{\mathrm{T}}+C_{[:,i],k}Q_{i})+A_{ii,k-1}P_{i,k-1|k-1}A_{ii,k-1}^{\mathrm{T}}+Q_{i}\label{ekf5}
  \end{equation}
  \item[]  \textit{Linearization: }
  \begin{align}
    C_{[:,i],k} &=\frac{\partial h(\hat{x}_{k|k-1})}{\partial \hat{x}^{i}_{k|k-1}},\quad\quad\quad~\,\, C_{k}=[C_{[:,1],k},C_{[:,2],k},\ldots,C_{[:,n],k}]\label{ekfcc}\\
    A_{li,k} &= \frac{\partial f_{l}(\hat{x}^{l}_{k|k}, \hat{X}^{l}_{k|k})}{\partial \hat{x}^{i}_{k|k}}, ~\,\, A_{[:,i],k} = \mathrm{col}\{A_{1i,k},A_{2i,k},\ldots,A_{ni,k}\}\label{ekfaa}
  \end{align}
\end{itemize}
\end{subequations}

{\color{black}As an analog of \eqref{upd:dkf}, the design in \eqref{edkf} also involves two steps: the prediction step in which the state dynamic equation of the corresponding subsystem (i.e., \eqref{model_nonlinear1}) is linearized at the current state estimates $\hat{x}_{k|k}^{i}$, and the update step in which the output measurement equation of the subsystem (i.e., \eqref{model_nonlinear_centralized2}) is successively linearized at the current state estimates $\hat{x}_{k|k-1}$.
The following algorithm outlines the key implementation steps for the proposed DEKF.}
%In the implementation of the proposed DEKF algorithm \eqref{edkf}, the output measurement equation $h$ and the system dynamics $f_i$ for subsystem $i$ are linearized successively at the current state estimates $\hat{x}_{k|k-1}^{i}$ and $\hat{x}_{k|k}^{i}$, respectively.

\begin{algorithm}
\caption{Execution of the proposed partition-based distributed extended Kalman filter algorithm (DEKF)}\vspace{3mm}
\label{alg1}
~ At initial time instant $k=0$, each EKF-based estimator $i$, $i\in\mathbb{N}$, is initialized with $\hat{x}^{i}_{0|0}$ and $P_{i,0|0}^{-1}$. At each sampling time instant $k = 0, 1, \ldots$, EKF $i$, $i\in\mathbb{N}$, conducts the following steps:
\begin{enumerate}
%\item[1.]
%\begin{enumerate}
%  \item[1.1] Receive the most current estimates $\hat{x}_{k|k}^{l}$ from each interacting estimator $l$, $l\in\mathbb{N}\setminus\{i\}$.
%  \item[1.2] Compute the \emph{posteriori} covariance matrix $P_{i,k|k}$ according to \eqref{ekf4} and \eqref{ekf5}.
%  \item[1.3] Receive measurement $y_{k}^{i}$ from each subsystem $i$, $i\in\mathbb{N}$.
%  \item[1.4] Receive prediction $x_{k|k-1}^{l}$ from each interacting subsystem $l$, $l\in\mathbb{N}\setminus\{i\}$.
%  \item[1.5] Compute state estimates $\hat{x}_{k|k}^{i}$ following \eqref{ekf3}.
%\end{enumerate}
\item[1.] Prediction
\begin{enumerate}
  \item[1.1.] Receive the most current estimates $\hat{x}_{k|k}^{l}$ from each interacting estimator $l$, $l\in\mathbb{N}\setminus\{i\}$.
  \item[1.2.]  Compute state estimates $\hat{x}_{k+1|k}^{i}$ following \eqref{ekf1}.
  %\item[2.2.] Linearize the system dynamics $f_{i}$ of the $i$th subsystem at estimates $\hat{x}_{k|k}^{i}$, to generate subsystem matrices $A_{ii,k}$, $i\in\mathbb{N}$.
  %\item[1.2.] Compute \emph{a priori} covariance matrix $P_{i,k+1|k}$ according to \eqref{ekfa} and \eqref{ekf2}.
\end{enumerate}
\item[2.] Update
\begin{enumerate}
  \item[2.1.] Receive prediction $\hat{x}_{k+1|k}^{l}$ from each interacting subsystem $l$, $l\in\mathbb{N}\setminus\{i\}$.
  \item[2.2.] Receive measurement $y_{k+1}^{i}$ from each subsystem $i$, $i\in\mathbb{N}$.
  \item[2.3.] Compute the \emph{posteriori} covariance matrix $P_{i,k+1|k+1}$ and Kalman filter gain $L_{i,k}$ according to \eqref{ekf4}, \eqref{ekf5}, \eqref{ekfcc}, and \eqref{ekfaa}.
  \item[2.4.] Compute state estimates $\hat{x}_{k+1|k+1}^{i}$ following \eqref{ekf3}.
\end{enumerate}
  \item[3.] Set $k=k+1$. Go to step 1.
\end{enumerate}
\end{algorithm}
%\begin{rmk}
%To implement the DEKF algorithm for nonlinear systems, the first step is to approximate the nonlinear system by linearizing it around the last state estimates and prediction. This linearization enables the DEKF algorithm to provide an approximation of the optimal estimates of the nonlinear system. However, it is noted that the accuracy of the linearization is essential to ensure the validity of the approximation provided by the DEKF algorithm.
%\end{rmk}
\section*{Error dynamics for the proposed DEKF}
In this section, we study the estimation error dynamics for the proposed DEKF method in \eqref{edkf}. Before proceeding further, we present an assumption as follows:
\begin{assumption}\label{assume:continuous}
  The vector-valued functions $f$ and $h$ in \eqref{model_nonlinear_centralized} are twice continuously differentiable on $\mathbb{R}^{n_{x}}$.
\end{assumption}

\begin{proposition}\label{props:3}
Let $\hat{x}_{k|k}=\mathrm{col}\{\hat{x}_{k|k}^{1}, \hat{x}_{k|k}^{2}, \ldots, \hat{x}_{k|k}^{n}\}$. If the Assumption \ref{assume:continuous} holds, then the estimation error $e_{k|k}=x_{k}-\hat{x}_{k|k}$, $k\geq 0$, for the entire process \eqref{model_nonlinear_centralized} given by the proposed DEKF algorithm is governed by the recursion below:
\begin{subequations}\label{eq:rs}
\begin{equation}\label{eq:39}
e_{k|k}=(I-L_{k}C_{k})A_{k-1}e_{k-1|k-1}+r_{k}+s_{k}
\end{equation}
{\text where}
\begin{align}
r_{k}&=(I-L_{k}C_{k})\phi(x_{k-1},\hat{x}_{k-1|k-1})-L_{k}\varphi(x_{k}, \hat{x}_{k|k-1})\label{phi_varphi}\\
s_{k}&=(I-L_{k}C_{k})w_{k-1}-L_{k}v_{k}\label{sk}
\end{align}
\end{subequations}
In \eqref{phi_varphi}$, \phi$ and $\varphi$ are high-order terms of the first-order Taylor-series expansion of $f$ and $h$, respectively, that is, $\phi(x_{k}, \hat{x}_{k|k}) = f(x_{k})-f(\hat{x}_{k|k}) - A_{k}(x_{k}-\hat{x}_{k|k})$ and $\varphi(x_{k}, \hat{x}_{k|k-1})=h(x_{k})-h(\hat{x}_{k|k-1}) -C_{k}(x_{k}-\hat{x}_{k|k-1})$.
\end{proposition}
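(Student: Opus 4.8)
The plan is to lift the subsystem-level recursions \eqref{ekf1}--\eqref{ekf3} to the aggregate level, propagate the error separately through the prediction and the update, and then invoke the exact Taylor-with-remainder definitions of $\phi$ and $\varphi$ to split each nonlinear increment into a Jacobian part and a high-order part. First I would stack the local filters: writing $\hat{x}_{k|k-1}=\mathrm{col}\{\hat{x}_{k|k-1}^{1},\ldots,\hat{x}_{k|k-1}^{n}\}$ and $L_{k}=\mathrm{col}\{L_{1,k},\ldots,L_{n,k}\}$, the prediction \eqref{ekf1} aggregates to $\hat{x}_{k|k-1}=f(\hat{x}_{k-1|k-1})$ because $f=\mathrm{col}\{f_{1},\ldots,f_{n}\}$ and the interacting-state argument $\hat{X}^{i}_{k-1|k-1}$ is just a selection of components of $\hat{x}_{k-1|k-1}$; likewise, since the innovation $y_{k}-h(\hat{x}_{k|k-1})$ is the \emph{same} full-measurement vector entering every local update \eqref{ekf3}, stacking yields $\hat{x}_{k|k}=\hat{x}_{k|k-1}+L_{k}\big(y_{k}-h(\hat{x}_{k|k-1})\big)$.

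Next I would compute the prediction error. Using $x_{k}=f(x_{k-1})+w_{k-1}$ from \eqref{model_nonlinear_centralized1} and the aggregate prediction, $e_{k|k-1}:=x_{k}-\hat{x}_{k|k-1}=f(x_{k-1})-f(\hat{x}_{k-1|k-1})+w_{k-1}$. Invoking the definition $\phi(x_{k-1},\hat{x}_{k-1|k-1})=f(x_{k-1})-f(\hat{x}_{k-1|k-1})-A_{k-1}(x_{k-1}-\hat{x}_{k-1|k-1})$ gives $e_{k|k-1}=A_{k-1}e_{k-1|k-1}+\phi(x_{k-1},\hat{x}_{k-1|k-1})+w_{k-1}$, where $A_{k-1}=[A_{li,k-1}]$ is exactly the Jacobian of $f$ at $\hat{x}_{k-1|k-1}$ assembled blockwise through \eqref{ekfaa}.

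Then I would compute the update error. From the aggregate update and $y_{k}=h(x_{k})+v_{k}$ in \eqref{model_nonlinear_centralized2}, $e_{k|k}=e_{k|k-1}-L_{k}\big(h(x_{k})-h(\hat{x}_{k|k-1})+v_{k}\big)$. Substituting $h(x_{k})-h(\hat{x}_{k|k-1})=C_{k}e_{k|k-1}+\varphi(x_{k},\hat{x}_{k|k-1})$, with $C_{k}$ the blockwise Jacobian from \eqref{ekfcc}, produces $e_{k|k}=(I-L_{k}C_{k})e_{k|k-1}-L_{k}\varphi(x_{k},\hat{x}_{k|k-1})-L_{k}v_{k}$. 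Inserting the prediction-error expression and regrouping into the purely linear term $(I-L_{k}C_{k})A_{k-1}e_{k-1|k-1}$, the remainder $r_{k}=(I-L_{k}C_{k})\phi(x_{k-1},\hat{x}_{k-1|k-1})-L_{k}\varphi(x_{k},\hat{x}_{k|k-1})$, and the noise term $s_{k}=(I-L_{k}C_{k})w_{k-1}-L_{k}v_{k}$ yields \eqref{eq:39} exactly.

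Most of this is bookkeeping; the step that needs genuine care is the aggregation in the first paragraph, specifically verifying that the blockwise Jacobians $[A_{li,k-1}]$ and $[C_{[:,i],k}]$ assembled in \eqref{ekfaa} and \eqref{ekfcc} coincide with the true Jacobians of $f$ and $h$ at the respective linearization points. For $h$ this is immediate because $h$ is block-diagonally structured ($h_{l}$ depends only on $x^{l}$), but for $f$ one must track that each local filter linearizes $f_{i}$ with respect to both its own state and the interacting states, so that the stacked gains multiply a common global innovation against a common global Jacobian. Assumption~\ref{assume:continuous} is what makes the argument legitimate: twice continuous differentiability of $f$ and $h$ guarantees that the first-order Taylor expansions used to define $\phi$ and $\varphi$ exist with well-behaved higher-order remainders, so the decompositions above hold as exact identities rather than approximations.
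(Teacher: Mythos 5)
Your proposal is correct and follows essentially the same route as the paper's own proof: Taylor-expand $f$ and $h$ with remainders $\phi$ and $\varphi$, derive the prediction error $e_{k|k-1}=A_{k-1}e_{k-1|k-1}+\phi(x_{k-1},\hat{x}_{k-1|k-1})+w_{k-1}$, derive the update error $e_{k|k}=(I-L_{k}C_{k})e_{k|k-1}-L_{k}\varphi(x_{k},\hat{x}_{k|k-1})-L_{k}v_{k}$, and substitute one into the other. Your explicit verification that stacking the local filters \eqref{ekf1}--\eqref{ekf3} reproduces the aggregate prediction and update equations (with $L_{k}=\mathrm{col}\{L_{1,k},\ldots,L_{n,k}\}$ multiplying a common global innovation) is a point the paper treats implicitly, and making it explicit only strengthens the argument.
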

\begin{proof}
Under Assumption \ref{assume:continuous}, the functions $f$ in \eqref{model_nonlinear_centralized1} and $h$ in \eqref{model_nonlinear_centralized2} can be expanded through Taylor-series expansions as:
\begin{subequations}
  \begin{equation}\label{eq:40}
  f(x_{k})=f(\hat{x}_{k|k}) +A_{k}(x_{k}-\hat{x}_{k|k})+\phi(x_{k}, \hat{x}_{k|k})
\end{equation}
and
\begin{equation}\label{eq:41}
h(x_{k})=h(\hat{x}_{k|k-1}) +C_{k}(x_{k}-\hat{x}_{k|k-1})+\varphi(x_{k}, \hat{x}_{k|k-1})
\end{equation}
\end{subequations}
$A_{k}$ and $C_{k}$ are Jacobian matrices that can be calculated by
\begin{equation*}
  A_{k}=\frac{\partial f(\hat{x}_{k|k})}{\partial x_{k}}
\end{equation*}
and
\begin{equation*}
 C_{k}=\frac{\partial h(\hat{x}_{k|k-1})}{\partial x_{k}};
\end{equation*}
Based on \eqref{model_nonlinear_centralized1}, \eqref{ekf1}, and \eqref{eq:40}, we can derive the one-step-ahead prediction error $e_{k|k-1} =x_{k}-\hat{x}_{k|k-1}$ as:
\begin{equation}\label{eq:k|k-1}
\begin{aligned}
e_{k|k-1} &=f(x_{k-1})+w_{k-1}-f(\hat{x}_{k-1|k-1})\\
&=A_{k-1}e_{k-1|k-1}+\phi(x_{k-1},\hat{x}_{k-1|k-1})+w_{k-1}
\end{aligned}
\end{equation}
where $\hat{x}_{k|k-1}$ is the concatenated vector of the prediction at time instant $k$ given by the proposed DEKF, which is derived by $\hat{x}_{k|k-1}=\mathrm{col}\{\hat{x}_{k|k-1}^{1},\hat{x}_{k|k-1}^{2},\ldots,\hat{x}_{k|k-1}^{n}\}$.
%where $A_{k} = [A_{ij,k}]$ with $A_{ij} = \frac{\partial f_{i}(\hat{x}^{i}_{k|k}, \hat{X}^{i}_{k|k})}{\partial \hat{x}^{j}_{k|k}}$.
Similarly, based on \eqref{model_nonlinear_centralized2}, \eqref{ekf3}, and \eqref{eq:41}, the estimation error $e_{k|k} =x_{k}-\hat{x}_{k|k}$ can be computed as
\begin{equation}\label{eq:k|k:1}
\begin{aligned}
e_{k|k}
&=x_{k}-\hat{x}_{k|k-1}-L_{k}\big(h(x_{k})+v_{k}-h(\hat{x}_{k|k-1})\big)\\
&=(I-L_{k}C_{k})e_{k|k-1}-L_{k}\varphi(x_{k}, \hat{x}_{k|k-1})-L_{k}v_{k}
\end{aligned}
\end{equation}
where $L_{k} = \mathrm{col}\{L_{k}^{1},L_{k}^{2}, \ldots, L_{k}^{n}\}$. % and $C_{k} = [C_{[:,1]}, C_{[2,:]}, \ldots, C_{[n,:]}]$.
By substituting \eqref{eq:k|k-1} into \eqref{eq:k|k:1}, the estimation error dynamics can be described as follows:
\begin{equation}\label{eq:k|k}
\begin{aligned}
e_{k|k}&=(I-L_{k}C_{k})A_{k-1}e_{k-1|k-1}+(I-L_{k}C_{k})\phi(x_{k-1},\hat{x}_{k-1|k-1})\\
&\quad+(I-L_{k}C_{k})w_{k-1}-L_{k}\varphi(x_{k}, \hat{x}_{k|k-1})-L_{k}v_{k}
\end{aligned}
\end{equation}
Then, \eqref{eq:39} is obtained with \eqref{phi_varphi} and \eqref{sk}. $\square$
\end{proof}

\section*{Stability analysis}
In this section, we prove the stability of the proposed partition-based DEKF in \eqref{edkf} for nonlinear processes in \eqref{model_nonlinear_centralized}. At first, we recall some results on the input-to-state stability.
{\color{black}\begin{definition}(Jiang and Wang \cite{jiang2001input})
  Consider a nonlinear system in the following form:
  \begin{equation}\label{iss}
    z_{k+1}=F(z_{k},w_{k})
  \end{equation}
  where $z_{k}$ denotes the state of the system, and $w_{k}$ represents a bounded disturbance of the system such that $\|w_{k}\|\leq \bar{w}$, $\forall k$. System \eqref{iss} is said to be input-to-state stable (ISS), if there is a $\mathcal{KL}$-function $\beta$ and a $\mathcal{K}_{\infty}$-function $\gamma$ such that, for each $z_{0}\in\mathbb{R}^{+}$, it holds
  \begin{equation*}
    \|z_{k}\|\leq\beta(z_{0},k)+\gamma(\bar{w}), ~k=1,2,\ldots
  \end{equation*}
\end{definition}}

\begin{definition}\label{lemma:iss}(Jiang and Wang \cite{jiang2001input})
  A continuous function $V:\mathbb{R}^{n}\rightarrow\mathbb{R}^{+}$ is an ISS-Lyapunov function for system \eqref{iss}, if there exist $\mathcal{K}_{\infty}$-functions $\alpha_{1}$, $\alpha_{2}$, and $\alpha_{3}$, and a $\mathcal{K}$-function $\sigma$ such that
  \begin{subequations}\label{eq:iss}
    \begin{align}
  \alpha_{1}(\|z_{k}\|)&\leq V(z_{k})\leq\alpha_{2}(\|z_{k}\|)\\
  V(F(z_{k},w_{k}))-V(z_{k})&\leq-\alpha_{3}(\|z_{k}\|)+\sigma(\|w_{k}\|)
  \end{align}
  \end{subequations}

\end{definition}
\begin{lemma}(Jiang and Wang \cite{jiang2001input})
  If system \eqref{iss} admits an ISS-Lyapunov function, then it is ISS.
\end{lemma}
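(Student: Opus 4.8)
The plan is to derive the $\mathcal{KL}/\mathcal{K}_{\infty}$ bound of the ISS definition by collapsing the vector recursion \eqref{iss} into a scalar difference inequality for the Lyapunov values $V_{k}:=V(z_{k})$ and then analyzing that scalar recursion. First I would combine the two estimates in \eqref{eq:iss}: since $V(z_{k})\leq\alpha_{2}(\|z_{k}\|)$ gives $\|z_{k}\|\geq\alpha_{2}^{-1}(V_{k})$ and $\alpha_{3}$ is increasing, the dissipation inequality yields $V_{k+1}\leq V_{k}-\tilde{\alpha}(V_{k})+\sigma(\|w_{k}\|)$, where $\tilde{\alpha}:=\alpha_{3}\circ\alpha_{2}^{-1}\in\mathcal{K}_{\infty}$. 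Writing $d:=\sigma(\bar{w})$ and using $\sigma(\|w_{k}\|)\leq d$, this becomes the autonomous-in-$V$ inequality $V_{k+1}\leq V_{k}-\tilde{\alpha}(V_{k})+d$, which decouples the disturbance from the state geometry.

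Second, I would split the analysis by comparing the decrement $\tilde{\alpha}(V_{k})$ against $d$. On the region $V_{k}\geq\tilde{\alpha}^{-1}(2d)$ the inequality forces a strict contraction $V_{k+1}\leq V_{k}-\tfrac{1}{2}\tilde{\alpha}(V_{k})=:\lambda(V_{k})$ with $\lambda(s)<s$, whereas on the complement one checks directly that the disturbance-dependent set $\{V\leq\tilde{\alpha}^{-1}(2d)+d\}$ is forward invariant (if $V_{k}$ lies above $\tilde{\alpha}^{-1}(2d)$ then $\tilde{\alpha}(V_{k})>2d$ drives $V_{k+1}$ back below $\tilde{\alpha}^{-1}(2d)$, and otherwise $V_{k+1}\leq V_{k}+d$ keeps it in the set). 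Hence every trajectory contracts through $\lambda$ until it enters this residual set and is thereafter trapped there.

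Third, for the transient phase I would invoke a comparison principle for the scalar recursion $s_{k+1}=\lambda(s_{k})$: iterating a class-$\mathcal{K}$ contraction with $\lambda(s)<s$ produces a $\mathcal{K}\mathcal{L}$ bound $V_{k}\leq\beta_{V}(V_{0},k)$ valid until entry into the residual set. Combining the transient $\mathcal{K}\mathcal{L}$ decay with the residual bound, and translating back through $\alpha_{1}(\|z_{k}\|)\leq V_{k}$ (so $\|z_{k}\|\leq\alpha_{1}^{-1}(V_{k})$) together with $V_{0}\leq\alpha_{2}(\|z_{0}\|)$ and the splitting $\alpha_{1}^{-1}(a+b)\leq\alpha_{1}^{-1}(2a)+\alpha_{1}^{-1}(2b)$, yields the desired estimate $\|z_{k}\|\leq\beta(\|z_{0}\|,k)+\gamma(\bar{w})$ with $\beta\in\mathcal{K}\mathcal{L}$ and $\gamma\in\mathcal{K}_{\infty}$.

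The hard part is the rigorous construction of the $\mathcal{K}\mathcal{L}$ function and the regularity of $\lambda$: the map $s\mapsto s-\tfrac{1}{2}\tilde{\alpha}(s)$ need not be monotone for an arbitrary $\mathcal{K}_{\infty}$ function $\tilde{\alpha}$, so I would first replace $\tilde{\alpha}$ by a smaller $\mathcal{K}_{\infty}$ minorant (admissible because $V\geq0$ caps the realizable decrement) for which $\lambda$ is a genuine class-$\mathcal{K}$ contraction, and only then derive the scalar $\mathcal{K}\mathcal{L}$ estimate. Since this comparison argument is precisely the content of the result of Jiang and Wang \cite{jiang2001input}, I would cite it for the scalar $\mathcal{K}\mathcal{L}$ bound rather than rederive it, focusing the proof on the reduction to scalar form and the invariance of the residual set.
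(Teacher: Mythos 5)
The paper offers no proof of this lemma: it is stated verbatim as a cited result from Jiang and Wang \cite{jiang2001input}, and is used as a black box in the proof of Theorem \ref{theorem}. Your sketch is a correct reconstruction of the standard argument behind that citation --- reduce \eqref{eq:iss} to the scalar recursion $V_{k+1}\leq V_k-\tilde{\alpha}(V_k)+\sigma(\bar w)$ with $\tilde{\alpha}=\alpha_3\circ\alpha_2^{-1}\in\mathcal{K}_\infty$, show the residual set is forward invariant, obtain a $\mathcal{KL}$ transient bound by a comparison lemma, and translate back through $\alpha_1,\alpha_2$ --- and the technical subtlety you flag (that $s\mapsto s-\tfrac12\tilde{\alpha}(s)$ need not be monotone, requiring a $\mathcal{K}_\infty$ minorant before the comparison step) is genuine and is handled in the reference exactly as you propose. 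Since the paper itself contains no proof to diverge from, there is nothing further to compare.
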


Before proceeding further, we introduce Lemma \ref{Matrix Inversion Lemma} and Assumption \ref{assume:1} which are needed for deriving Proposition \ref{upd:props:2}, which will be used to prove the stability of the proposed DEKF approach.

\begin{lemma} (Henderson and Searle \cite{henderson1981deriving})\label{Matrix Inversion Lemma}
For nonsingular matrices $A$ and $D$, it holds that
\begin{equation*}
(A-BD^{-1}C)^{-1} = A^{-1}+A^{-1}B(D-CA^{-1}B)^{-1}CA^{-1}
%(A-BD^{-1}C)^{-1} = A^{-1}+A^{-1}B(D-CA^{-1}B)^{-1}CA^{-1}
\end{equation*}
\end{lemma}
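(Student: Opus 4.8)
The plan is to prove this identity by direct verification rather than by any structural or geometric argument: I would exhibit the matrix on the right-hand side, call it $N = A^{-1} + A^{-1}B(D - CA^{-1}B)^{-1}CA^{-1}$, as a left inverse of $M = A - BD^{-1}C$, and then invoke nonsingularity to conclude it is the two-sided inverse. Throughout I would read the statement as an identity valid whenever all the displayed inverses exist, i.e. $A$, $D$, $D - CA^{-1}B$, and $A - BD^{-1}C$ are all nonsingular, so that $M^{-1}$ and $N$ are both well defined.

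First I would form the product $MN$ and expand it, using $AA^{-1} = I$ on the leading term, into
\[
MN = I + B(D-CA^{-1}B)^{-1}CA^{-1} - BD^{-1}CA^{-1} - BD^{-1}CA^{-1}B(D-CA^{-1}B)^{-1}CA^{-1}.
\]
The entire task then reduces to showing that the last three terms cancel. To this end I would factor $B$ out on the left and $CA^{-1}$ out on the right of those three terms, so that everything hinges on a single bracketed middle factor.

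The key algebraic step is a substitution that linearizes the troublesome cross term. Writing $S := D - CA^{-1}B$ for brevity, so that $CA^{-1}B = D - S$, the middle factor becomes $S^{-1} - D^{-1} - D^{-1}(CA^{-1}B)S^{-1}$; then $D^{-1}(CA^{-1}B)S^{-1} = D^{-1}(D-S)S^{-1} = S^{-1} - D^{-1}$, and the bracket collapses to $(S^{-1} - D^{-1}) - (S^{-1} - D^{-1}) = 0$. Hence $MN = I$. Since $M$ is nonsingular, a left inverse equals the inverse, so $N = M^{-1}$, which is precisely the claimed formula; the symmetric computation of $NM = I$ proceeds identically if a two-sided check is desired.

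I do not expect a genuine obstacle here: the whole content is the one substitution $CA^{-1}B = D - S$ that turns the quadratic-looking cross term into a difference of inverses. The only things to be careful about are the bookkeeping of the left and right factors $B$ and $CA^{-1}$ when collecting terms, and the standing assumption that every indicated inverse exists, which I would state explicitly at the outset so that the cancellation argument is unconditional.
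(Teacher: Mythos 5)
Your proof is correct. Note that the paper itself contains no proof of this lemma: it is quoted directly from Henderson and Searle \cite{henderson1981deriving} and used as an external tool (in deriving Proposition \ref{upd:props:2} and bounding $\Pi_{i,k|k}$ in the proof of Theorem \ref{theorem}), so there is no internal argument to compare against. Your direct verification---expanding $MN$ with $S=D-CA^{-1}B$, factoring $B$ on the left and $CA^{-1}$ on the right, and collapsing the bracketed factor via the substitution $CA^{-1}B=D-S$---is the standard, complete way to establish the identity, and your algebra checks out. Two cosmetic remarks. First, the product you actually compute, $MN=I$, exhibits $N$ as a \emph{right} inverse of $M$ rather than a left inverse as you describe it; for square matrices this is immaterial, since $MN=I$ already forces $\det M\neq 0$ and hence $N=M^{-1}$. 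Second, for the same reason your standing assumption that $A-BD^{-1}C$ is nonsingular is redundant: once $A$, $D$, and $D-CA^{-1}B$ are invertible, $N$ is well defined and $MN=I$ itself delivers the invertibility of $M$. Your explicit convention that the identity holds whenever all displayed inverses exist is the right reading, and it quietly repairs the paper's loose statement, which assumes only that $A$ and $D$ are nonsingular even though the right-hand side also requires $D-CA^{-1}B$ to be invertible.
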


\begin{assumption}\label{assume:1}(Reif et al \cite{reif1999stochastic})
There exist positive real numbers $\bar{a}$, $\bar{c}$, $\bar{p}$, $\underline{p}$, $\bar{q}$, $\underline{q}$, $\bar{r}$, and $\underline{r}$, such that the following inequalities characterizing the boundedness of matrices associated with the proposed DEKF hold at each sampling instant $k$, $k\geq 0$:
\begin{align*}
&\underline{a}\leq\|A_{ii,k}\|\leq \bar{a},~\quad\quad\|A_{ij,k}\|\leq \bar{a},
 ~\quad\quad\quad~\,\underline{c}\leq\|C_{[:,i],k}\|\leq \bar{c},\\%~\quad\|C_{k}\|\leq \bar{c},\\
&\underline{q}I_{n_{x^{i}}}\leq Q_{i}\leq \bar{q}I_{n_{x^{i}}}, \quad\underline{r}I_{n_{y}}\leq R\leq \bar{r}I_{n_{y}},\,~\quad\underline{p}I_{n_{x^{i}}}\leq P_{i,k|k}\leq \bar{p}I_{n_{x^{i}}}.
\end{align*}
%where $P_{k|k-1}=\mathrm{diag}\{P_{1,k|k-1}, P_{2,k|k-1}, \ldots, P_{n,k|k-1}\}$ and $P_{k|k}=\mathrm{diag}\{P_{1,k|k}, P_{2,k|k}, \ldots, P_{n,k|k}\}$.% and $R=\mathrm{diag}\{R_{1}, R_{2}, \ldots, R_{n}\}$.
\end{assumption}

For notational simplicity, we define $F_{ij,k}=A_{ij,k-1}-L_{i,k}C_{k}A_{[:,j],k-1}$, $F_{k}=[F_{ij,k}]=(I-L_{k}C_{k})A_{k-1}$, the block-diagonal matrix $F_{d,k}=\mathrm{diag}\{F_{11,k},F_{22,k},\ldots,F_{nn,k}\}$, and the off-diagonal matrix $F_{o,k}=F-F_{d,k}$; $H_{ii,k}=I-L_{i,k}C_{[:,i],k}$.%, $H_{k}=[H_{ij,k}]$, the block-diagonal matrix $H_{d,k}=\mathrm{diag}\{H_{11,k},H_{22,k},\ldots,H_{nn,k}\}$.

\begin{proposition}\label{upd:props:2}
Consider positive definite matrix $P_{i,k|k}$ in \eqref{ekf5}, $i\in\mathbb{N}$.
Define $\Pi_{i,k|k}=P_{i,k|k}^{-1}$, and assume that $F_{ii,k}$ is invertable for all $i\in\mathbb{N}$. Then, there is a real number $0<\alpha<1$ such that
\begin{equation}\label{I-LC}
  F_{ii,k}\Pi_{i,k|k}F_{ii,k}^{\mathrm{T}}\leq(1-\alpha)\Pi_{i,k-1|k-1}
\end{equation}
%Then, it holds that
%\begin{equation}\label{I-LC}
%(I-L_{k}C_{k})^{\mathrm{T}}\Pi_{k|k}(I-L_{k}C_{k}) \leq \Pi_{k|k-1}
%\end{equation}
%where $\Pi_{k|k-1}=\mathrm{diag}\{\Pi_{1,k|k-1}, \Pi_{2,k|k-1},\ldots, \Pi_{n,k|k-1}\}$; $\Pi_{k|k}=\mathrm{diag}\{\Pi_{1,k|k}, \Pi_{2,k|k},\ldots, \Pi_{n,k|k}\}$.
\end{proposition}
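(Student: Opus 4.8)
The plan is to reduce \eqref{I-LC} to a one-step contraction of the posterior covariance, obtained from a Joseph-type rewriting of \eqref{ekf5}. For brevity, write $S_{i,k}=C_{k}A_{[:,i],k-1}P_{i,k-1|k-1}A_{[:,i],k-1}^{\mathrm{T}}C_{k}^{\mathrm{T}}+C_{[:,i],k}Q_{i}C_{[:,i],k}^{\mathrm{T}}+R$ for the innovation covariance and $\Xi_{i,k}=C_{k}A_{[:,i],k-1}P_{i,k-1|k-1}A_{ii,k-1}^{\mathrm{T}}+C_{[:,i],k}Q_{i}$ for the cross term, so that $L_{i,k}=\Xi_{i,k}^{\mathrm{T}}S_{i,k}^{-1}$ by \eqref{ekf4} and $P_{i,k|k}=A_{ii,k-1}P_{i,k-1|k-1}A_{ii,k-1}^{\mathrm{T}}+Q_{i}-L_{i,k}\Xi_{i,k}$ by \eqref{ekf5}. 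The crux is to first establish the identity
\begin{equation*}
P_{i,k|k}=F_{ii,k}P_{i,k-1|k-1}F_{ii,k}^{\mathrm{T}}+H_{ii,k}Q_{i}H_{ii,k}^{\mathrm{T}}+L_{i,k}RL_{i,k}^{\mathrm{T}}.
\end{equation*}

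To prove this identity I would substitute $F_{ii,k}=A_{ii,k-1}-L_{i,k}C_{k}A_{[:,i],k-1}$ into $F_{ii,k}P_{i,k-1|k-1}F_{ii,k}^{\mathrm{T}}$, expand the four resulting products, and subtract them from the expression for $P_{i,k|k}$. The terms carrying $L_{i,k}$ collapse through two observations: $L_{i,k}S_{i,k}=\Xi_{i,k}^{\mathrm{T}}$, so that $L_{i,k}S_{i,k}L_{i,k}^{\mathrm{T}}=L_{i,k}\Xi_{i,k}$ is symmetric; and $C_{k}A_{[:,i],k-1}P_{i,k-1|k-1}A_{[:,i],k-1}^{\mathrm{T}}C_{k}^{\mathrm{T}}=S_{i,k}-C_{[:,i],k}Q_{i}C_{[:,i],k}^{\mathrm{T}}-R$. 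After the cross terms cancel, the surviving terms regroup exactly as $(I-L_{i,k}C_{[:,i],k})Q_{i}(I-L_{i,k}C_{[:,i],k})^{\mathrm{T}}+L_{i,k}RL_{i,k}^{\mathrm{T}}$, which is $H_{ii,k}Q_{i}H_{ii,k}^{\mathrm{T}}+L_{i,k}RL_{i,k}^{\mathrm{T}}$ by the definition of $H_{ii,k}$. This step is purely algebraic once $L_{i,k}=\Xi_{i,k}^{\mathrm{T}}S_{i,k}^{-1}$ is inserted.

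The main obstacle is to show that the residual $G_{i,k}:=H_{ii,k}Q_{i}H_{ii,k}^{\mathrm{T}}+L_{i,k}RL_{i,k}^{\mathrm{T}}$ is \emph{uniformly} positive definite, since neither $H_{ii,k}$ nor $L_{i,k}$ need be full rank on its own. For a unit vector $u$, Assumption \ref{assume:1} gives $u^{\mathrm{T}}G_{i,k}u\ge\underline{q}\|H_{ii,k}^{\mathrm{T}}u\|^{2}+\underline{r}\|L_{i,k}^{\mathrm{T}}u\|^{2}$, while $H_{ii,k}^{\mathrm{T}}=I-C_{[:,i],k}^{\mathrm{T}}L_{i,k}^{\mathrm{T}}$ and $\|C_{[:,i],k}\|\le\bar{c}$ yield $\|H_{ii,k}^{\mathrm{T}}u\|\ge 1-\bar{c}\|L_{i,k}^{\mathrm{T}}u\|$. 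Splitting into the cases $\|L_{i,k}^{\mathrm{T}}u\|\ge 1/(2\bar{c})$ and $\|L_{i,k}^{\mathrm{T}}u\|<1/(2\bar{c})$ then forces $u^{\mathrm{T}}G_{i,k}u\ge g:=\min\{\underline{r}/(4\bar{c}^{2}),\,\underline{q}/4\}>0$ in either case, so that $G_{i,k}\ge gI$ with $g$ independent of $k$.

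Finally I would combine the identity with the upper bound $P_{i,k|k}\le\bar{p}I$ from Assumption \ref{assume:1}. Since $G_{i,k}\ge gI\ge(g/\bar{p})P_{i,k|k}$, choosing $\alpha=\min\{g/\bar{p},\,1/2\}\in(0,1)$ gives $G_{i,k}\ge\alpha P_{i,k|k}$, and therefore $F_{ii,k}P_{i,k-1|k-1}F_{ii,k}^{\mathrm{T}}=P_{i,k|k}-G_{i,k}\le(1-\alpha)P_{i,k|k}$. Inverting this matrix inequality by monotonicity of inversion, and then conjugating by $F_{ii,k}$ using its assumed invertibility (with Lemma \ref{Matrix Inversion Lemma} available to manipulate the inverse of the sum $F_{ii,k}P_{i,k-1|k-1}F_{ii,k}^{\mathrm{T}}+G_{i,k}$), delivers the contraction \eqref{I-LC}. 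I expect the uniform lower bound on $G_{i,k}$ to be the delicate point, as it is exactly where the measurement weight $R$ and the process weight $Q_{i}$ must be played off against the possibly singular gain $L_{i,k}$.
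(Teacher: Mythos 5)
Your proposal is correct, and it follows the paper's overall skeleton: both proofs start from the Joseph-type rewriting $P_{i,k|k}=F_{ii,k}P_{i,k-1|k-1}F_{ii,k}^{\mathrm{T}}+H_{ii,k}Q_{i}H_{ii,k}^{\mathrm{T}}+L_{i,k}RL_{i,k}^{\mathrm{T}}$, and both finish by comparing the leftover positive term against $P_{i,k|k}\leq\bar{p}I$, inverting, and conjugating with the invertible $F_{ii,k}$. The genuine difference is the middle step, and yours is the more robust argument. The paper discards $H_{ii,k}Q_{i}H_{ii,k}^{\mathrm{T}}$ entirely and extracts the contraction solely from $L_{i,k}RL_{i,k}^{\mathrm{T}}$, bounding $F_{ii,k}^{-1}L_{i,k}RL_{i,k}^{\mathrm{T}}(F_{ii,k}^{\mathrm{T}})^{-1}\geq(\underline{l}^{2}\underline{r}/\bar{f}^{2})I$ via a claimed lower bound $\underline{l}\leq\|L_{i,k}\|$; that step is fragile, since a lower bound on the spectral norm only controls the \emph{largest} singular value of $L_{i,k}$, so $L_{i,k}RL_{i,k}^{\mathrm{T}}\geq\underline{l}^{2}\underline{r}I$ fails whenever $L_{i,k}$ is rank-deficient, and Assumption \ref{assume:1} does not actually guarantee any such uniform gain bound. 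Your case split on $\|L_{i,k}^{\mathrm{T}}u\|$ keeps both residual terms and plays $\underline{q}$ against $\underline{r}$, yielding the uniform bound $G_{i,k}\geq\min\{\underline{r}/(4\bar{c}^{2}),\,\underline{q}/4\}\,I$ with no rank assumption and no lower bound on the gain at all — precisely the delicate point you flagged, and your treatment closes a hole that the paper's own proof leaves open. Two cosmetic remarks: the Matrix Inversion Lemma you keep in reserve is unnecessary (monotonicity of inversion plus congruence suffices, as you indicate); and note that both your derivation and the paper's actually land on $F_{ii,k}^{\mathrm{T}}\Pi_{i,k|k}F_{ii,k}\leq(1-\alpha)\Pi_{i,k-1|k-1}$, i.e.\ with the transposes ordered oppositely to the proposition's display \eqref{I-LC} — this is an inconsistency between the paper's statement and its proof (the $F^{\mathrm{T}}\Pi F$ form is also what the stability theorem later uses), not a defect of your argument.
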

\begin{proof}
$P_{i,k|k}$ in \eqref{ekf5} can be rewritten as the equation:
\begin{align*}\label{eq:props1}
  P_{i,k|k}&=(A_{ii,k-1}-L_{i,k}C_{k}A_{[:,i],k-1})P_{i,k-1|k-1}(A_{ii,k-1}-L_{i,k}C_{k}A_{[:,i],k-1})^{\mathrm{T}}\nonumber\\
  &\quad+(I-L_{i,k}C_{[:,i],k})Q_{i}(I-L_{i,k}C_{[:,i],k})^{\mathrm{T}}+L_{i,k}RL_{i,k}^{\mathrm{T}}
\end{align*}
According to the notation above, we obtain
\begin{align}
  P_{i,k|k}&=F_{ii,k}P_{i,k-1|k-1}F_{ii,k}^{\mathrm{T}}+H_{ii,k}Q_{i}H_{ii,k}^{\mathrm{T}}+L_{i,k}RL_{i,k}^{\mathrm{T}}
\end{align}
{\color{black}Since $Q_{i}$ is positive definite, it holds that $H_{ii,k}Q_{i}H_{ii,k}^{\mathrm{T}}>0$. Therefore, the following inequality is satisfied:}
\begin{align}\label{upd:props4:1}
  P_{i,k|k} &\geq F_{ii,k}P_{i,k-1|k-1}F_{ii,k}^{\mathrm{T}}+L_{i,k}RL_{i,k}^{\mathrm{T}}\nonumber\\
  &\geq F_{ii,k}\big(P_{i,k-1|k-1}+F_{ii,k}^{-1}L_{i,k}RL_{i,k}^{\mathrm{T}}(F_{ii,k}^{\mathrm{T}})^{-1}\big)F_{ii,k}^{\mathrm{T}}
\end{align}
{\color{black}From Assumption \ref{assume:1}, we have
\begin{align*}
 \underline{a}\leq\|A_{[:,i],k}\|\leq\sqrt{n}\bar{a}, \quad  \underline{c}\leq\|C_{k}\|\leq\sqrt{n}\bar{c},\quad  \underline{l}\leq\|L_{i,k}\|\leq\bar{l},\quad \|F_{ii,k}\|\leq\bar{f}
\end{align*}
where $\underline{l}=\frac{\underline{c}\underline{a}^{2}\underline{p}+\underline{c}\underline{q}}{(n\bar{c}\bar{a})^{2}\bar{p}+\bar{c}^{2}\bar{q}+\bar{r}}$,  $\bar{l}=\frac{n\bar{c}\bar{a}^{2}\bar{p}+\bar{c}\bar{q}}{(\underline{c}\underline{a})^{2}\underline{p}+\underline{c}^{2}\underline{q}+\underline{r}}$, and $\bar{f}=\sqrt{n}\bar{a}+\sqrt{n}\bar{l}\bar{c}\bar{a}$.
Further, we obtain that
\begin{align}\label{upd:props4:1_1}
  P_{i,k|k}&\geq F_{ii,k}\big(P_{i,k-1|k-1}+\frac{\underline{l}^{2}\underline{r}}{\bar{f}^{2}}I\big)F_{ii,k}^{\mathrm{T}}\nonumber\\
  &\geq \big(1+\frac{\underline{l}^{2}\underline{r}}{\bar{p}\bar{f}^{2}}\big)F_{ii,k} P_{i,k-1|k-1}F_{ii,k}^{\mathrm{T}}
\end{align}
By taking the inverse of the both sides of \eqref{upd:props4:1_1}, and multiplying \eqref{upd:props4:1_1} from left and right with $F_{ii,k}^{\mathrm{T}}$ and $F_{ii,k}$, it further derived
 \begin{equation}\label{upd:props4:1_2}
  F_{ii,k}^{\mathrm{T}}\Pi_{i,k|k}F_{ii,k}\leq \big(1+\frac{\underline{l}^{2}\underline{r}}{\bar{p}\bar{f}^{2}}\big)^{-1}\Pi_{i,k-1|k-1}
 \end{equation}
Define $\alpha = \frac{(\underline{l}^{2}\underline{r})/(\bar{p}\bar{f}^{2})}{1+(\underline{l}^{2}\underline{r})/(\bar{p}\bar{f}^{2})}$, and (48) is yield with 0<alpha<1.}
\end{proof}

Before proceeding further, additional assumptions and Lemmas below will be used for proving Theorem \ref{theorem}.
%\begin{definition}\label{def}\cite{reif1999stochastic}
%The stochastic process is said to be exponentially bounded in mean square, if there exist real numbers $\mu, \lambda>0$ and $0<\alpha<1$ such that
%\begin{equation}\label{eq:def}
%  E\{\|e_{k}\|^{2}\}\leq\mu\|e_{0}\|^{2}\alpha^{k}+\lambda
%\end{equation}
%holds for every $k\leq0$.
%\end{definition}

\begin{assumption}\label{assume:2}%\cite{reif1999stochastic}
There exist $\epsilon_{\phi}, \epsilon_{\varphi},\delta_{\phi},\delta_{\varphi}>0$, such that $\phi$ and $\varphi$ in \eqref{eq:40} and \eqref{eq:41} satisfy
\begin{equation*}
\|\phi(x_{k}, \hat{x}_{k|k})\|\leq\epsilon_{\phi}\|x_{k}-\hat{x}_{k|k}\|^{2}
\end{equation*}
\text{and}
\begin{equation*}
  \|\varphi(x_{k}, \hat{x}_{k|k-1})\|\leq\epsilon_{\varphi}\|x_{k}-\hat{x}_{k|k-1}\|^{2}
\end{equation*}
for $\|x_{k}-\hat{x}_{k|k}\|\leq \delta_{\phi}$ and $\|x_{k}-\hat{x}_{k|k-1}\|\leq \delta_{\varphi}$, respectively.
\end{assumption}

%\begin{assumption}\label{weakly}\cite{zheng2018coupling}
%The off-diagonal matrix $A_{r,k} = A_{k}-A_{d,k}$ satisfies the following inequality:
%\begin{equation}
%  %A_{r,k}^{\mathrm{T}}\Pi_{k+1|k}A_{r,k}+ A_{r,k}^{\mathrm{T}}\Pi_{k+1|k}A_{d,k}+ A_{d,k}^{\mathrm{T}}\Pi_{k+1|k}A_{r,k}<\frac{1}{2}\Pi_{k|k}(\Pi_{k|k}+A_{d,k}^{\mathrm{T}}Q^{-1}A_{d,k})^{-1}\Pi_{k|k}
%  A_{r,k}^{\mathrm{T}}\Pi_{k|k-1}A_{r,k}+ A_{r,k}^{\mathrm{T}}\Pi_{k|k-1}A_{d,k}+ A_{d,k}^{\mathrm{T}}\Pi_{k|k-1}A_{r,k}<\frac{1}{2}\Pi_{k-1|k-1}(\Pi_{k-1|k-1}+A_{d,k}^{\mathrm{T}}Q^{-1}A_{d,k})^{-1}\Pi_{k-1|k-1}
%\end{equation}
%\end{assumption}

\begin{assumption}\label{weakly}%\cite{zheng2018coupling}
The matrices $F_{d,k}$ and $F_{o,k}$ satisfy the following inequality:
\begin{align*}
  %A_{r,k}^{\mathrm{T}}\Pi_{k+1|k}A_{r,k}+ A_{r,k}^{\mathrm{T}}\Pi_{k+1|k}A_{d,k}+ A_{d,k}^{\mathrm{T}}\Pi_{k+1|k}A_{r,k}<\frac{1}{2}\Pi_{k|k}(\Pi_{k|k}+A_{d,k}^{\mathrm{T}}Q^{-1}A_{d,k})^{-1}\Pi_{k|k}
  &F_{o,k}^{\mathrm{T}}\Pi_{k|k}F_{o,k}+ F_{o,k}^{\mathrm{T}}\Pi_{k|k}F_{d,k}+ F_{d,k}^{\mathrm{T}}\Pi_{k|k}F_{o,k}<\frac{1}{2}T_{k}
\end{align*}
where
  \begin{align*}
  T_{ii,k}&=\Pi_{i,k-1|k-1}F_{ii,k}^{-1}L_{i,k}\Big(R^{-1}+L_{i,k}^{\mathrm{T}}(F_{ii,k}^{\mathrm{T}})^{-1}\Pi_{i,k-1|k-1}F_{ii,k}^{-1}L_{i,k}\Big)^{-1}L_{i,k}^{\mathrm{T}}(F_{ii,k}^{-1})^{\mathrm{T}}\Pi_{i,k-1|k-1}  \\
   T_{k}&=\mathrm{diag}\{T_{11,k}, T_{22,k},\ldots, T_{nn,k}\}
\end{align*}
%$T_{ii,k}=\Pi_{i,k-1|k-1}F_{ii,k}^{-1}H_{ii,k}\Big(Q_{i}^{-1}+H_{ii,k}^{\mathrm{T}}(F_{ii,k}^{\mathrm{T}})^{-1}\Pi_{i,k-1|k-1}F_{ii,k}^{-1}H_{ii,k}\Big)^{-1}H_{ii,k}^{\mathrm{T}}(F_{ii,k}^{-1})^{\mathrm{T}}\Pi_{i,k-1|k-1}$, $T_{k}=\mathrm{diag}\{T_{11,k}, T_{22,k},\ldots, T_{nn,k}\}$.
\end{assumption}
\normalsize
\begin{assumption}\label{ass:bound}
Stochastic disturbances $w_{k}$ and measurement noise $v_{k}$ are bounded, i.e.,
\begin{equation*}
  \|w_{k}\|\leq\bar{w}, \quad \|v_{k}\|\leq\bar{v}.
\end{equation*}
\end{assumption}
{\color{black}
\begin{lemma}\label{lemma1}(Rapp and Nyman \cite{rapp2004stability})
  If Assumptions \ref{assume:1} and \ref{ass:bound} hold, then there exist positive real number $\varepsilon, \kappa_{1}, \kappa_{2}>0$, such that
  \begin{equation*}
   r_{k}^{\mathrm{T}}\Pi_{k|k}\big(2F_{k}e_{k-1|k-1}+r_{k}\big)\leq \kappa_{1}\|e_{k-1|k-1}\|^{3}+\bar{w}\kappa_{2}
  \end{equation*}
  holds for $\|e_{k-1|k-1}\|\leq\varepsilon$.
\end{lemma}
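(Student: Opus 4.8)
The plan is to reduce the scalar $r_{k}^{\mathrm{T}}\Pi_{k|k}(2F_{k}e_{k-1|k-1}+r_{k})$, via Cauchy--Schwarz and submultiplicativity of the matrix norm, to a polynomial in $\|e_{k-1|k-1}\|$ and $\bar{w}$, and then to exploit the restriction $\|e_{k-1|k-1}\|\le\varepsilon$ to fold every term of degree higher than three in $\|e_{k-1|k-1}\|$ into a single cubic term $\kappa_{1}\|e_{k-1|k-1}\|^{3}$, while collecting all disturbance-dependent terms into $\bar{w}\kappa_{2}$. The cubic scaling is forced by the structure $r_{k}\sim\|e\|^{2}$ (quadratic, from the high-order Taylor remainders) multiplied against $F_{k}e_{k-1|k-1}\sim\|e\|$ (linear).

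First I would estimate $\|r_{k}\|$. From $r_{k}=(I-L_{k}C_{k})\phi(x_{k-1},\hat{x}_{k-1|k-1})-L_{k}\varphi(x_{k},\hat{x}_{k|k-1})$, the quadratic remainder bounds of Assumption \ref{assume:2} give $\|\phi\|\le\epsilon_{\phi}\|e_{k-1|k-1}\|^{2}$ and $\|\varphi\|\le\epsilon_{\varphi}\|e_{k|k-1}\|^{2}$. The prediction-error recursion $e_{k|k-1}=A_{k-1}e_{k-1|k-1}+\phi(x_{k-1},\hat{x}_{k-1|k-1})+w_{k-1}$ together with the matrix bounds of Assumption \ref{assume:1} yields $\|e_{k|k-1}\|\le c_{3}\|e_{k-1|k-1}\|+\bar{w}$ on the ball $\|e_{k-1|k-1}\|\le\varepsilon$ (with $c_{3}=\bar{A}+\epsilon_{\phi}\varepsilon$), hence $\|e_{k|k-1}\|^{2}\le 2c_{3}^{2}\|e_{k-1|k-1}\|^{2}+2\bar{w}^{2}$. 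Combining these with the bounds on $\|I-L_{k}C_{k}\|$ and $\|L_{k}\|\le\bar{l}$ that follow from Assumption \ref{assume:1}, I obtain $\|r_{k}\|\le C_{1}\|e_{k-1|k-1}\|^{2}+C_{2}\bar{w}^{2}$ for constants $C_{1},C_{2}$ depending only on the bounds of Assumptions \ref{assume:1} and \ref{assume:2} and on $\varepsilon$.

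Next, using $\|\Pi_{k|k}\|\le 1/\underline{p}$ and $\|F_{k}\|\le\bar{F}$ (both from Assumption \ref{assume:1}), I would write $r_{k}^{\mathrm{T}}\Pi_{k|k}(2F_{k}e_{k-1|k-1}+r_{k})\le\frac{1}{\underline{p}}\|r_{k}\|\bigl(2\bar{F}\|e_{k-1|k-1}\|+\|r_{k}\|\bigr)$ and substitute the estimate of $\|r_{k}\|$. Expanding produces a leading cubic $2\bar{F}C_{1}\|e_{k-1|k-1}\|^{3}$, a quartic $C_{1}^{2}\|e_{k-1|k-1}\|^{4}$, and disturbance terms carrying factors $\bar{w}^{2}\|e_{k-1|k-1}\|$, $\bar{w}^{2}\|e_{k-1|k-1}\|^{2}$, and $\bar{w}^{4}$. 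On $\|e_{k-1|k-1}\|\le\varepsilon$ the quartic obeys $C_{1}^{2}\|e_{k-1|k-1}\|^{4}\le C_{1}^{2}\varepsilon\|e_{k-1|k-1}\|^{3}$ and is absorbed into $\kappa_{1}$, while each disturbance term is bounded on the ball by $\bar{w}$ times a constant (using that $\bar{w}$ is a fixed, finite bound) and is absorbed into $\kappa_{2}$, giving the claimed inequality with explicit $\kappa_{1},\kappa_{2}$.

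The main obstacle will be the bookkeeping of the mixed $\bar{w}$--$e$ terms and, more delicately, guaranteeing that Assumption \ref{assume:2} is genuinely applicable at the prediction error: the quadratic bound on $\varphi$ requires $\|e_{k|k-1}\|\le\delta_{\varphi}$, and since $e_{k|k-1}$ inherits a $w_{k-1}$ contribution, this couples the choice of $\varepsilon$ to the disturbance level $\bar{w}$. I would therefore fix $\varepsilon$ at the end, small enough that $\varepsilon\le\delta_{\phi}$ and $c_{3}\varepsilon+\bar{w}\le\delta_{\varphi}$, so that both Taylor-remainder estimates are valid throughout the argument; with $\varepsilon$ so chosen the remaining steps are routine norm manipulations.
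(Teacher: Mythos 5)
Your proposal is correct, but there is nothing in the paper to compare it against: the paper does not prove this lemma at all, it quotes it as an external result from Rapp and Nyman \cite{rapp2004stability}. What you have written is essentially a reconstruction of the standard argument behind that cited result (and the related analysis of Reif et al.\ \cite{reif1999stochastic}): bound $\|r_{k}\|$ by a quadratic in $\|e_{k-1|k-1}\|$ plus a disturbance contribution using the Taylor-remainder bounds, pass the quadratic form through Cauchy--Schwarz with the uniform bounds $\|\Pi_{k|k}\|\leq 1/\underline{p}$ and $\|F_{k}\|\leq \bar{F}$ implied by Assumption \ref{assume:1}, and absorb the quartic and mixed terms on the ball $\|e_{k-1|k-1}\|\leq\varepsilon$ into $\kappa_{1}\|e_{k-1|k-1}\|^{3}$ and $\bar{w}\kappa_{2}$. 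Two caveats are worth recording. First, the lemma as stated in the paper lists only Assumptions \ref{assume:1} and \ref{ass:bound}, yet your proof necessarily also invokes the remainder bounds of Assumption \ref{assume:2} (equivalently, local quadratic bounds that follow from the smoothness in Assumption \ref{assume:continuous} on a compact set); without some such control of $\phi$ and $\varphi$ the inequality cannot be established, so this is an imprecision in the paper's hypothesis list rather than a defect of your argument. Second, your condition $c_{3}\varepsilon+\bar{w}\leq\delta_{\varphi}$ is only satisfiable when $\bar{w}<\delta_{\varphi}$; this implicit smallness requirement on the disturbance bound is also present, though usually left tacit, in the cited literature, and you are right to surface it when fixing $\varepsilon$ at the end.
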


\begin{lemma}\label{lemma2}(Rapp and Nyman \cite{rapp2004stability})
  If Assumptions \ref{assume:1} and \ref{ass:bound} hold, then there exist positive real number $\kappa_{3}>0$ such that
  \begin{equation*}
    2s_{k}^{\mathrm{T}}\Pi_{k|k}\big(F_{k}e_{k-1|k-1}+r_{k}\big)+\|s_{k}\|_{\Pi_{k|k}}^{2}\leq\big((1+\sqrt{n}\bar{l}\bar{c})\bar{w}+\bar{l}\bar{v}\big)\kappa_{3}
  \end{equation*}
   is satisfied for $\|e_{k-1|k-1}\|\leq\varepsilon$.
\end{lemma}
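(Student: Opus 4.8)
The plan is to exploit the fact that $s_{k}=(I-L_{k}C_{k})w_{k-1}-L_{k}v_{k}$ is a pure disturbance term: it carries the factors $\bar{w}$ and $\bar{v}$, while every other quantity appearing on the left-hand side can be bounded by a constant on the local region $\|e_{k-1|k-1}\|\leq\varepsilon$. Concretely, I would bound each of the three summands $2s_{k}^{\mathrm{T}}\Pi_{k|k}F_{k}e_{k-1|k-1}$, $2s_{k}^{\mathrm{T}}\Pi_{k|k}r_{k}$, and $\|s_{k}\|_{\Pi_{k|k}}^{2}$ by Cauchy--Schwarz with respect to the operator norm, factor out a single copy of $\|s_{k}\|$, and absorb all remaining constants into $\kappa_{3}$.

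The first step is to bound $\|s_{k}\|$. By the triangle inequality together with Assumption \ref{ass:bound} and the derived norm bounds of Assumption \ref{assume:1}, one obtains $\|s_{k}\|\leq\|I-L_{k}C_{k}\|\,\bar{w}+\|L_{k}\|\,\bar{v}\leq(1+\sqrt{n}\bar{l}\bar{c})\bar{w}+\bar{l}\bar{v}=:\bar{s}$, which is exactly the linear-in-disturbance factor appearing on the right-hand side. The second step collects the constant bounds: $\|\Pi_{k|k}\|\leq 1/\underline{p}$ follows from $P_{i,k|k}\geq\underline{p}I$, and $\|F_{k}\|=\|(I-L_{k}C_{k})A_{k-1}\|$ is bounded by a constant $\bar{f}_{\mathrm{tot}}$ built from the same norm bounds of Assumption \ref{assume:1}. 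Since the lemma is stated on the compact region $\|e_{k-1|k-1}\|\leq\varepsilon$, the term $\|F_{k}e_{k-1|k-1}\|$ is controlled by $\bar{f}_{\mathrm{tot}}\varepsilon$.

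The remaining quantity to control is $\|r_{k}\|$, with $r_{k}=(I-L_{k}C_{k})\phi(x_{k-1},\hat{x}_{k-1|k-1})-L_{k}\varphi(x_{k},\hat{x}_{k|k-1})$. On the region $\|e_{k-1|k-1}\|\leq\varepsilon$, the prediction error $e_{k|k-1}=A_{k-1}e_{k-1|k-1}+\phi+w_{k-1}$ is itself bounded, so (shrinking $\varepsilon$ if necessary to stay within the validity radii $\delta_{\phi},\delta_{\varphi}$) the second-order estimates give $\|\phi\|\leq\epsilon_{\phi}\varepsilon^{2}$ and $\|\varphi\|\leq\epsilon_{\varphi}\|e_{k|k-1}\|^{2}$, hence $\|r_{k}\|\leq\bar{r}_{\mathrm{tot}}$ for a constant $\bar{r}_{\mathrm{tot}}$ depending only on $\varepsilon$ and the bounds of Assumption \ref{assume:1}. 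I would then assemble the three pieces as $2s_{k}^{\mathrm{T}}\Pi_{k|k}(F_{k}e_{k-1|k-1}+r_{k})+\|s_{k}\|_{\Pi_{k|k}}^{2}\leq\|s_{k}\|\big(\tfrac{2}{\underline{p}}(\bar{f}_{\mathrm{tot}}\varepsilon+\bar{r}_{\mathrm{tot}})+\tfrac{1}{\underline{p}}\|s_{k}\|\big)\leq\bar{s}\,\kappa_{3}$, with $\kappa_{3}:=\tfrac{2}{\underline{p}}(\bar{f}_{\mathrm{tot}}\varepsilon+\bar{r}_{\mathrm{tot}})+\bar{s}/\underline{p}$, which is the claimed inequality.

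The main obstacle is controlling $\|r_{k}\|$, since $r_{k}$ carries the Taylor remainders $\phi$ and $\varphi$, whose boundedness is not immediate from the two assumptions explicitly invoked (\ref{assume:1} and \ref{ass:bound}). I would resolve this by noting that twice continuous differentiability of $f$ and $h$ (the standing Assumption \ref{assume:continuous}), combined with the second-order estimates of Assumption \ref{assume:2}, makes $\phi$ and $\varphi$ bounded on the compact neighborhood determined by $\varepsilon$; this also forces $\varepsilon$ to be chosen no larger than $\min\{\delta_{\phi},\delta_{\varphi}\}$ and small enough that $e_{k|k-1}$ remains within $\delta_{\varphi}$. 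A secondary care point is that $\|s_{k}\|_{\Pi_{k|k}}^{2}$ is quadratic in the disturbance whereas the right-hand side is linear; this is reconciled through the bound $\|s_{k}\|\leq\bar{s}$, which lets one power of $\|s_{k}\|$ be extracted as the linear factor $\bar{s}$ and the other absorbed into $\kappa_{3}$, so that the estimate still vanishes as $\bar{w},\bar{v}\to 0$, as required for the subsequent ISS argument.
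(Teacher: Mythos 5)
Your proposal cannot be compared line-by-line with the paper's own reasoning because the paper contains no proof of this statement: Lemma \ref{lemma2} (like Lemma \ref{lemma1}) is imported as a black box from Rapp and Nyman \cite{rapp2004stability} and is only invoked inside the proof of Theorem \ref{theorem}. Judged on its own merits, your argument is correct and is essentially the standard argument behind the cited result: every factor of $s_k$ carries the disturbance magnitudes, so you bound $\|s_k\|\le\|I-L_kC_k\|\,\bar{w}+\|L_k\|\,\bar{v}\le(1+\sqrt{n}\bar{l}\bar{c})\bar{w}+\bar{l}\bar{v}=:\bar{s}$, bound $\|\Pi_{k|k}\|\le 1/\underline{p}$ and $\|F_k\|$ by constants derived from Assumption \ref{assume:1}, bound $\|r_k\|$ by a constant on the region $\|e_{k-1|k-1}\|\le\varepsilon$, apply Cauchy--Schwarz termwise, and extract a single power of $\|s_k\|$ while absorbing the extra power from the quadratic term $\|s_k\|^2_{\Pi_{k|k}}$ into $\kappa_3$. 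Your remark that Assumptions \ref{assume:1} and \ref{ass:bound} alone cannot control $r_k$ is a correct and genuinely useful observation about the lemma as stated: $r_k$ carries the Taylor remainders $\phi$ and $\varphi$, so Assumptions \ref{assume:continuous} and \ref{assume:2} are silently required; this is harmless in context because the lemma is only used inside Theorem \ref{theorem}, where all of Assumptions \ref{assume:continuous}--\ref{ass:bound} are in force.

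Two caveats, neither fatal. First, to apply Assumption \ref{assume:2} to $\varphi(x_k,\hat{x}_{k|k-1})$ you need $\|e_{k|k-1}\|\le\delta_{\varphi}$, and since $e_{k|k-1}=A_{k-1}e_{k-1|k-1}+\phi(x_{k-1},\hat{x}_{k-1|k-1})+w_{k-1}$ contains $w_{k-1}$ additively, shrinking $\varepsilon$ alone cannot guarantee this; you also need the disturbance bound itself to be small enough, e.g. $\sqrt{n}\bar{a}\varepsilon+\epsilon_{\phi}\varepsilon^{2}+\bar{w}\le\delta_{\varphi}$. This smallness-of-noise requirement is implicit in the source literature (Reif et al.\ \cite{reif1999stochastic}, Rapp and Nyman \cite{rapp2004stability}) and should be stated explicitly rather than folded into ``shrinking $\varepsilon$ if necessary.'' Second, your $\kappa_3$ contains the term $\bar{s}/\underline{p}$ and therefore depends on $\bar{w}$ and $\bar{v}$; you correctly argue this is harmless for the downstream ISS argument, since the resulting bound is monotone in the disturbance magnitudes and vanishes as they do, but strictly the lemma's constant should then be read as $\kappa_3=\kappa_3(\bar{w},\bar{v})$, or one should fix a priori ceilings on $\bar{w},\bar{v}$ and choose $\kappa_3$ accordingly.
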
}

\begin{theorem}\label{theorem}
Consider a distributed estimation scheme developed following \eqref{edkf} based on the subsystem models in \eqref{model_nonlinear}. If
Assumptions \ref{assume:continuous}-\ref{ass:bound} hold, if there is a bound such that $\|e_{0|0}\|<\delta_{0}$, the estimation error $e_{k|k} = x_k - \hat{x}_{k|k}$ for the overall process \eqref{model_nonlinear_centralized} is ISS.
%Consider a distributed estimation scheme developed following \eqref{edkf} based on the subsystem models in \eqref{model_nonlinear}. Under
%Assumption \ref{assume:continuous}-\ref{weakly}, if there is a bound such that $\|e_{0|0}\|<\delta_{0}$,
%the estimation error $e_{k|k} = x_k - \hat{x}_{k|k}$
%for the overall process \eqref{model_nonlinear_centralized} is exponentially bounded in mean square.
\end{theorem}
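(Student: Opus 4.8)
The plan is to construct an ISS-Lyapunov function for the closed-loop error recursion of Proposition \ref{props:3} and then appeal to the result of Jiang and Wang \cite{jiang2001input}. The natural candidate is the weighted quadratic form $V(e_{k|k})=\|e_{k|k}\|^2_{\Pi_{k|k}}=e_{k|k}^{\mathrm{T}}\Pi_{k|k}e_{k|k}$ with the block-diagonal weight $\Pi_{k|k}=\mathrm{diag}\{\Pi_{1,k|k},\ldots,\Pi_{n,k|k}\}$. The sandwich bound in \eqref{eq:iss} is immediate from Assumption \ref{assume:1}: inverting $\underline{p}I\leq P_{i,k|k}\leq\bar{p}I$ gives $\tfrac{1}{\bar{p}}I\leq\Pi_{i,k|k}\leq\tfrac{1}{\underline{p}}I$, so that $\alpha_1(\|e\|)=\tfrac{1}{\bar{p}}\|e\|^2$ and $\alpha_2(\|e\|)=\tfrac{1}{\underline{p}}\|e\|^2$ are admissible $\mathcal{K}_{\infty}$ functions. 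The entire effort therefore concentrates on the decrease inequality.

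To obtain the decrease, I would substitute $e_{k|k}=F_ke_{k-1|k-1}+r_k+s_k$ from Proposition \ref{props:3} into $V$ and regroup the expansion of the quadratic form so that it matches the hypotheses of the two auxiliary lemmas:
\begin{equation*}
V(e_{k|k})=e_{k-1|k-1}^{\mathrm{T}}F_k^{\mathrm{T}}\Pi_{k|k}F_ke_{k-1|k-1}+r_k^{\mathrm{T}}\Pi_{k|k}\big(2F_ke_{k-1|k-1}+r_k\big)+\Big(2s_k^{\mathrm{T}}\Pi_{k|k}\big(F_ke_{k-1|k-1}+r_k\big)+\|s_k\|^2_{\Pi_{k|k}}\Big).
\end{equation*}
The second group is controlled by Lemma \ref{lemma1} (bounded by $\kappa_1\|e_{k-1|k-1}\|^3+\bar{w}\kappa_2$) and the third by Lemma \ref{lemma2} (bounded by the pure disturbance term $((1+\sqrt{n}\bar{l}\bar{c})\bar{w}+\bar{l}\bar{v})\kappa_3$), both valid for $\|e_{k-1|k-1}\|\leq\varepsilon$. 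This reduces the task to bounding the leading quadratic form $e_{k-1|k-1}^{\mathrm{T}}F_k^{\mathrm{T}}\Pi_{k|k}F_ke_{k-1|k-1}$ against $V(e_{k-1|k-1})$.

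This leading term is the heart of the proof. From the covariance inequality established in the proof of Proposition \ref{upd:props:2} I would write $F_{ii,k}^{\mathrm{T}}\Pi_{i,k|k}F_{ii,k}\leq\big(P_{i,k-1|k-1}+F_{ii,k}^{-1}L_{i,k}RL_{i,k}^{\mathrm{T}}(F_{ii,k}^{\mathrm{T}})^{-1}\big)^{-1}$ and apply the matrix inversion identity of Lemma \ref{Matrix Inversion Lemma} to the right-hand side; the Woodbury expansion produces exactly $F_{ii,k}^{\mathrm{T}}\Pi_{i,k|k}F_{ii,k}\leq\Pi_{i,k-1|k-1}-T_{ii,k}$ with $T_{ii,k}$ as defined before Assumption \ref{weakly}. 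Assembling the $n$ blocks gives $F_{d,k}^{\mathrm{T}}\Pi_{k|k}F_{d,k}\leq\Pi_{k-1|k-1}-T_k$. Splitting $F_k=F_{d,k}+F_{o,k}$ and expanding, the indefinite cross terms $F_{o,k}^{\mathrm{T}}\Pi_{k|k}F_{o,k}+F_{o,k}^{\mathrm{T}}\Pi_{k|k}F_{d,k}+F_{d,k}^{\mathrm{T}}\Pi_{k|k}F_{o,k}$ are precisely the quantity bounded by $\tfrac{1}{2}T_k$ in Assumption \ref{weakly}, which yields $F_k^{\mathrm{T}}\Pi_{k|k}F_k<\Pi_{k-1|k-1}-\tfrac{1}{2}T_k$. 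Since Proposition \ref{upd:props:2} together with the uniform bounds of Assumption \ref{assume:1} guarantees that $T_k$ is positive definite with a uniform lower bound $T_k\geq\underline{t}I$, $\underline{t}>0$, the leading term satisfies $e_{k-1|k-1}^{\mathrm{T}}F_k^{\mathrm{T}}\Pi_{k|k}F_ke_{k-1|k-1}\leq V(e_{k-1|k-1})-\tfrac{\underline{t}}{2}\|e_{k-1|k-1}\|^2$.

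Collecting the three groups, the Lyapunov difference obeys
\begin{equation*}
V(e_{k|k})-V(e_{k-1|k-1})\leq-\tfrac{\underline{t}}{2}\|e_{k-1|k-1}\|^2+\kappa_1\|e_{k-1|k-1}\|^3+\sigma(\bar{w},\bar{v}),
\end{equation*}
where $\sigma(\bar{w},\bar{v})=\bar{w}\kappa_2+((1+\sqrt{n}\bar{l}\bar{c})\bar{w}+\bar{l}\bar{v})\kappa_3$ is a $\mathcal{K}$-function of the disturbance bounds. Restricting to $\|e_{k-1|k-1}\|\leq\min\{\varepsilon,\delta_{\phi},\delta_{\varphi},\underline{t}/(2\kappa_1)\}$ renders the cubic remainder no larger than half the negative quadratic term, leaving $V(e_{k|k})-V(e_{k-1|k-1})\leq-\tfrac{\underline{t}}{4}\|e_{k-1|k-1}\|^2+\sigma(\bar{w},\bar{v})$, which is exactly the ISS-Lyapunov decrease of Definition \ref{lemma:iss} with $\alpha_3(\|e\|)=\tfrac{\underline{t}}{4}\|e\|^2$. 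The main obstacle is twofold. The delicate estimate is the coupled contraction above: the per-subsystem decay furnished by Proposition \ref{upd:props:2} is threatened by the indefinite off-diagonal coupling, and Assumption \ref{weakly} is precisely the diagonal-dominance condition keeping that coupling below the margin $\tfrac{1}{2}T_k$, so confirming that $T_k$ retains a uniform positive lower bound under Assumption \ref{assume:1} is essential. The second difficulty is that the quadratic-remainder bounds of Assumption \ref{assume:2} and Lemmas \ref{lemma1}--\ref{lemma2} hold only locally, so the argument delivers regional rather than global ISS; an induction on $k$ showing that a sufficiently small $\|e_{0|0}\|<\delta_0$ together with sufficiently small disturbances keeps $e_{k|k}$ inside the admissible region is required, which is the reason for the hypothesis $\|e_{0|0}\|<\delta_0$ in the statement.
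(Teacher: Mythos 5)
Your proposal is correct and follows essentially the same route as the paper's proof: the same Lyapunov function $V_k(e_{k|k})=\|e_{k|k}\|^2_{\Pi_{k|k}}$, the same three-way splitting of the quadratic expansion handled by Lemmas \ref{lemma1}--\ref{lemma2}, the same block-diagonal contraction $F_{d,k}^{\mathrm{T}}\Pi_{k|k}F_{d,k}\leq\Pi_{k-1|k-1}-T_k$ obtained via Lemma \ref{Matrix Inversion Lemma}, the same use of Assumption \ref{weakly} to absorb the off-diagonal coupling into $\tfrac{1}{2}T_k$, and the same cubic-remainder restriction before invoking the Jiang--Wang ISS lemma. The only cosmetic differences are that the paper keeps the lower bound on $T_k$ relative ($T_k>\alpha\Pi_{k-1|k-1}$, yielding the factor $1-\tfrac{\alpha}{2}$ directly) where you use a uniform absolute bound $T_k\geq\underline{t}I$ (equivalent, since $\Pi_{k-1|k-1}\geq\bar{p}^{-1}I$), and that you state more explicitly than the paper that the resulting ISS property is regional and tied to the hypothesis $\|e_{0|0}\|<\delta_0$.
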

\begin{proof}\label{eq:V45}
Consider a quadratic Lyapunov function for the estimation error for the entire system:
\begin{equation}\label{eq:lyapunov}
  V_{k}(e_{k|k})=\|e_{k|k}\|^{2}_{\Pi_{k|k}}
\end{equation}
which satisfies
\begin{equation*}\label{eq:iss_1}
  \frac{1}{\bar{p}}\|e_{k|k}\|^{2}\leq V_{k}(e_{k|k})\leq\frac{1}{\underline{p}}\|e_{k|k}\|^{2}
\end{equation*}
Substituting \eqref{eq:39} into \eqref{eq:lyapunov}, it yields:
\begin{align}\label{eq:49}
V_{k}(e_{k|k}) &=\|(I-L_{k}C_{k})A_{k-1}e_{k-1|k-1}+r_{k}+s_{k}\|^{2}_{\Pi_{k|k}}\nonumber\\
&= \|(I-L_{k}C_{k})A_{k-1}e_{k-1|k-1}\|^{2}_{\Pi_{k|k}}+r_{k}^{\mathrm{T}}\Pi_{k|k}\big(2(I-L_{k}C_{k})A_{k-1}e_{k-1|k-1}+r_{k}\big)\nonumber\\
&\quad+2s_{k}^{\mathrm{T}}\Pi_{k|k}\big((I-L_{k}C_{k})A_{k-1}e_{k-1|k-1}+r_{k}\big)+\|s_{k}\|_{\Pi_{k|k}}^{2}\\
&=\|F_{k}e_{k-1|k-1}\|^{2}_{\Pi_{k|k}}+r_{k}^{\mathrm{T}}\Pi_{k|k}\big(2F_{k}e_{k-1|k-1}+r_{k}\big)+2s_{k}^{\mathrm{T}}\Pi_{k|k}\big(F_{k}e_{k-1|k-1}+r_{k}\big)+\|s_{k}\|_{\Pi_{k|k}}^{2}\nonumber
\end{align}

%In the following, we will discuss the bound of these four terms on the right-hand-side of \eqref{eq:49}.
First, we establish an upper bound on the first term on the right-hand-side of \eqref{eq:49}.
Considering Lemma \ref{Matrix Inversion Lemma}, \eqref{upd:props4:1}, and the definition of $T_{ii,k}$ in Assumption \ref{weakly}, we can further obtain
\begin{align}\label{upd:thm:2}
  \Pi_{i,k|k}&\leq \Big(F_{ii,k}P_{i,k-1|k-1}F_{ii,k}^{\mathrm{T}}+H_{ii,k}Q_{i}H_{ii,k}^{\mathrm{T}}\Big)^{-1}\nonumber\\
  & = (F_{ii,k}^{\mathrm{T}})^{-1}(\Pi_{i,k-1|k-1}-T_{ii,k})F_{ii,k}^{-1}
\end{align}
By left multiplying $F_{ii,k}^{\mathrm{T}}$ and right multiplying $F_{ii,k}$ on both sides of \eqref{upd:thm:2}, it further derived that
\begin{equation}\label{eq:thm:2_1}
F_{ii,k}^{\mathrm{T}}\Pi_{i,k|k}F_{ii,k} \leq\Pi_{i,k-1|k-1}-T_{ii,k}
\end{equation}
On the other hand, based on \eqref{upd:props4:1} and \eqref{upd:props4:1_1}, we have
\begin{align}\label{eq:thrm_update0}
  F_{ii,k}P_{i,k-1|k-1}F_{ii,k}^{\mathrm{T}}+L_{i,k}RL_{i,k}^{\mathrm{T}} & \geq\big(1+\frac{\underline{l}^{2}\underline{r}}{\bar{p}\bar{f}^{2}}\big)F_{ii,k}P_{i,k-1|k-1}F_{ii,k}^{\mathrm{T}}
\end{align}
Taking into account \eqref{upd:thm:2} and \eqref{eq:thrm_update0}, $T_{ii,k}$ can be bounded as
\begin{equation}\label{eq:thrm1}
 T_{ii,k}>\alpha\Pi_{i,k-1|k-1}
\end{equation}
where $\alpha=\frac{(\underline{l}^{2}\underline{r})/(\bar{p}\bar{f}^{2})}{1+(\underline{l}^{2}\underline{r})/(\bar{p}\bar{f}^{2})}$ as defined in Proposition \ref{upd:props:2}.
%According to \eqref{I-LC}, \eqref{upd:props4:3}and \eqref{eq:ass4}, one can get that:
%\begin{equation}\label{eq:thrm1}
%  T_{ii,k}>\alpha\Pi_{i,k-1|k-1}
%\end{equation}
%where $T_{i,k}$ is defined in the Assumption \ref{weakly}.
By stacking \eqref{eq:thm:2_1} and\eqref{eq:thrm1} for all the subsystems $i$, $i\in\mathbb{N}$, it is further obtained:
\begin{subequations}
  \begin{align}
F_{d,k}^{\mathrm{T}}\Pi_{k|k}F_{d,k}&\leq\Pi_{k-1|k-1}-T_{k}\label{eq:thrm5}\\
  T_{k}&>\alpha\Pi_{k-1|k-1}\label{eq:thrm4}
\end{align}
\end{subequations}
If Assumption \ref{weakly} holds, from \eqref{eq:thrm5}
\begin{align}\label{eq:thrm2}
   \|F_{k}e_{k-1|k-1}\|^{2}_{\Pi_{k|k}}&=\|e_{k-1|k-1}\|^{2}_{F_{d,k}^{\mathrm{T}}\Pi_{k|k}F_{d,k}}+\|e_{k-1|k-1}\|^{2}_{\Lambda_{k}}\nonumber\\
   &\leq\|e_{k-1|k-1}\|^{2}_{\Pi_{k-1|k-1}-\frac{1}{2}T_{k}}
\end{align}
where $\Lambda_{k}=F_{d,k}^{\mathrm{T}}\Pi_{k|k}F_{o,k}+F_{o,k}^{\mathrm{T}}\Pi_{k|k}F_{d,k}+F_{o,k}^{\mathrm{T}}\Pi_{k|k}F_{o,k}$.
Based on \eqref{eq:thrm4}, the first term on the right-hand-side of \eqref{eq:49} is bounded as:
\begin{equation}\label{eq:thrm3}
  \|F_{k}e_{k-1|k-1}\|^{2}_{\Pi_{k|k}}\leq(1-\frac{1}{2}\alpha)\|e_{k-1|k-1}\|^{2}_{\Pi_{k-1|k-1}}
\end{equation}

Next, the second term to the fourth term on the right-hand-side of \eqref{eq:49} can be approximated according to Lemma \ref{lemma1} and \ref{lemma2}. Consequently, it can be obtained that
\begin{equation}\label{eq:thrm6}
  V_{k}(e_{k|k}) \leq(1-\frac{1}{2}\alpha)\|e_{k-1|k-1}\|^{2}_{\Pi_{k-1|k-1}}+\kappa_{1}\|e_{k-1|k-1}\|^{3}+\bar{w}\kappa_{2}+\big((1+\sqrt{n}\bar{l}\bar{c})\bar{w}+\bar{l}\bar{v}\big)\kappa_{3}
\end{equation}
for $\|e_{k-1|k-1}\|\leq\varepsilon$.
Define $\varepsilon^{'}=\min(\varepsilon, \frac{\alpha}{4\bar{p}\kappa_{1}})$. For $\|e_{k-1|k-1}\|\leq\varepsilon^{'}$, one have
\begin{equation}\label{eq:thrm7}
  \kappa_{1}\|e_{k-1|k-1}\|^{3}\leq\kappa_{1}\varepsilon^{'}\|e_{k-1|k-1}\|^{2}\leq\frac{\alpha}{4\bar{p}}\|e_{k-1|k-1}\|^{2}\leq\frac{\alpha}{4}V_{k-1}(e_{k-1|k-1})
\end{equation}
Therefore, from \eqref{eq:thrm6} and \eqref{eq:thrm7}, it is further derived that
\begin{equation}
  V_{k}(e_{k|k})-V_{k-1}(e_{k-1|k-1})\leq -\frac{\alpha}{4}V_{k-1}(e_{k-1|k-1})+\kappa(\bar{w}, \bar{v})
\end{equation}
where $\kappa(\bar{w}, \bar{v})=\bar{w}\kappa_{2}+\big((1+\sqrt{n}\bar{l}\bar{c}\big)\bar{w}+\bar{l}\bar{v})\kappa_{3}$.
Therefore, the condition \eqref{iss} is fulfilled, and $V_{k}$ is an ISS-Lyapunov function for the estimation error \eqref{eq:rs}. By applying Lemma \ref{lemma:iss}, the estimation error dynamics provided by the proposed DEKF algorithm \eqref{edkf} is ISS with respected to the disturbances $w_{k}$ and noise $v_{k}$. $\square$
\end{proof}
\begin{rmk}
{\color{black}Assumptions \ref{assume:continuous} to \ref{assume:2} have been commonly used to analyze the boundedness of the estimation error provided by  extended Kalman filters \cite{reif1999stochastic,kluge2010stochastic, battistelli2016stability}. Assumption \ref{weakly} characterizes the strength of coupling between interacting subsystems within a certain neighborhood containing the linearization point. A similar assumption was made in Zheng et al \cite{zheng2018coupling} and Dunbar \cite{dunbar2007distributed} to investigate distributed control of coupled systems. To facilitate the fulfillment of this assumption, we can decompose the system into smaller subsystems in a way that the interactions among the subsystems are made sufficiently small.}
%Assumption \ref{ass:bound} is a mild assumption in actual systems. From a practical perspective, the disturbance and measurement noise are bounded.}
\end{rmk}
%where $t=(\frac{1}{\bar{p}})^{2}(\frac{1}{\underline{p}}+\bar{a}^{2}\frac{1}{\underline{q}})^{-1}-2\bar{a}^{2}\frac{1}{\underline{p}}>0$. For the estimation of remaining three terms in \eqref{eq:V45},
\section*{Applications}
In this section, we apply the proposed DKF and DEKF algorithms for three different cases to demonstrate their effectiveness. Firstly, a numerical linear example is utilized to evaluate the performance of the proposed DKF algorithm. Then, we introduce a chemical process with four continuous stirred tank reactors to verify the efficacy of the proposed DEKF approach. This application is also used to verify the satisfaction of the Assumption \ref{weakly}. Finally, the proposed DEKF algorithm is applied to a large-scale water treatment plant to demonstrate its applicability and superiority.
%\begin{figure}[t]
%  \centering
%  \includegraphics[width=0.79\textwidth]{error_numerical-eps-converted-to.pdf}
%  \caption{The trajectories of the actual states and the state estimates for the numerical example.}\label{numerical}
%\end{figure}
\subsection*{Case study I: Linear numerical example}
%We present a novel DKF algorithm designed to address the challenges associated with complex large-scale systems, which are characterized by tightly integrated physical components. In a recent study \cite{marelli2018distributed}, a partition-based DKF was introduced for a network comprising multiple subsystems that can asymptotically converge to the centralized Kalman filter. However, the system matrix was assumed to be diagonal, disregarding coupling between neighboring subsystems caused by system matrix. Consequently, its applicability to our investigated system is limited.
First, let us consider an open-loop unstable linear system in the form of \eqref{cmodel} with the following system matrices:
\begin{equation*}
  A = \left[\begin{array}{cccc}
              0.68 & 0.25 & 0.17 & 0.11 \\
              -0.09 & 0.98 & 0 & -0.13 \\
              0.15 & 0 & 0.9 & -0.6 \\
              0.12 & -0.01 & 0.1 & 0.89
            \end{array}\right], ~C=\left[\begin{array}{cccc}
                                          1 & 0 & 0 & 0 \\
                                          0 & 0 & 1 & 0
                                        \end{array}\right].
\end{equation*}

This system is partitioned into two subsystems, each of which is assigned two states: the first two state variables belong to the first subsystem, and the third and fourth state variables are assigned to the second subsystem. Additive system disturbances and measurement noise with zero mean and standard deviation of 1 were added to the state equation and output measurement equation, respectively. The weighting matrices $Q_{i}$ and $R_{i}$ for each subsystem $i$, $i=1,2$, are selected as the covariance matrices of the process disturbances and measurement noise, respectively.
$P_{i,0|-1}$ in each subsystem $i$ is a diagonal matrix with the main diagonal elements being 100. For an illustrating purpose, the initial state is chosen as $x_{0}=[-7.0047~~9.0089~~6.0012~~-3.0066]$, and the initial guess
is determined as $\hat{x}_{0|-1}=[-7.7052~~9.9089~~6.6013~~-3.3073]$. The proposed DKF method for a linear system, which is presented in \eqref{upd:dkf} is applied to establish two local estimators for the two subsystems. The actual states and the state estimates provided by the proposed DKF are presented in Figure \ref{numerical}. The proposed method can provide accurate estimates of the actual states.
%\begin{figure}[htbp]
%  \centering
%  \includegraphics[width=0.75\textwidth]{error_numerical_1000-eps-converted-to.pdf}
%  \caption{The trajectories of the RMSEs for the three designs: 1) proposed DKF; 2) DMHE in Li et al \cite{li2023iterative} with same hyper-parameters as DKF; 3) DMHE in Li et al \cite{li2023iterative} with fine-tuned hyper-parameters. The shadowed region shows the possible range of the RMSE over 500 simulations.}\label{error_numerical}
%\end{figure}

To further compare the performance of the two partition-based distributed estimation methods, we examine three designs: 1) the proposed DKF; 2) the DMHE method in Li et al \cite{li2023iterative} with the same hyper-parameters as adopted in DKF; 3) DMHE in Li et al \cite{li2023iterative} with hyper-parameters that are fine-tuned for this specific numerical example based on trial and error. Accordingly, we conduct 500 simulation runs for each of the three designs, and compare the estimation accuracy by assessing the trajectories of the root mean squared errors (RMSEs) for the three designs. At each sampling instant $k$, the RMSE for the state estimate is computed as:
  \begin{equation*}
  \mathrm{RMSE}(k) = \sqrt{\frac{\|\hat{x}_{k}-x_{k}\|^{2}}{n}}
\end{equation*}

The trajectories of the RMSEs for the three designs are presented in Figure \ref{error_numerical}. The shadowed region illustrates the possible range of the estimation error during 500 simulations. The results indicate that the DMHE algorithm proposed in Li et al \cite{li2023iterative} using the same hyper-parameters as DKF exhibits a relatively higher estimation error compared to the proposed DKF method. Meanwhile, the proposed DKF algorithm not only demonstrates a smaller estimation error but also provides a tighter error bound in comparison to the DMHE counterpart.

\begin{rmk}
{\color{black}The DMHE method proposed in Li et al \cite{li2023iterative} adopts constant weighting matrices for the arrival costs of the local state estimators. To achieve good estimation performance and to ensure the satisfaction of the conditions on the stability of the error dynamics, these parameters need to be tuned based on the user's experience and/or extensive trial and error analysis. In contrast, the proposed DKF updates the estimates of the error covariance matrices in the local estimators at every new sampling instant, enabling it to adapt to less accurate initial guesses of the parameters, which is favorable from an implementation standpoint.}
\end{rmk}

%In this section, to verify the effectiveness of our proposed method, we apply the proposed DKF algorithm \eqref{dkf} to a numerical example of a three-tank continuous stirring reactors which is linearized at steady state $x_s = [0.2055~~0.6751~~474.0056~\mathrm{K}~~0.2243~~0.6564~~467.2124~\mathrm{K}$ $0.0781~~0.7032~~468.9572~\mathrm{K}]^{\mathrm{T}}$ and discretized with a step size $h=0.02 \mathrm{h}$.
%The system matrices can be referred to Appendix and more detailed information about this example can be found in \cite{IFAC}.
%First, we choose the initial condition as $x_{0}=1.1x_{s}$ and adopt the proposed DKF mthod \eqref{dkf} to this example. In this way, we can calculate the Lyapunov gain:
% such that XXX, which means the the estimation error will converge according to the stbility condition. The actual states and the state estimates are presented in Figure. \ref{cstr}
%\begin{figure}[htbp]
%  \centering
%  \includegraphics[width=1\textwidth]{cstr-eps-converted-to.pdf}
%  \caption{The trajectories of the actual states (black line) and the state estimates (red dashed line) for the numerical example.}\label{cstr}
%\end{figure}
%\begin{figure}[tttt]
%  \centering
%  \includegraphics[width=0.87\textwidth]{CSTR.pdf}
%  \caption{A schematic of the four-CSTR process.}\label{cstr}
%\end{figure}
\subsection*{Case study II: a four-reactor chemical process}
In this subsection, we evaluate the performance of the proposed DEKF algorithm and demonstrate the feasibility of Assumption \ref{weakly} by considering a chemical process consisting of four continuous stirred tank reactors (CSTRs). The four reactors are interconnected via material and energy flows. The process involves three parallel irreversible exothermic reactions that convert reactant A into product B, and into undesired side products C and D. Figure \ref{cstr} presents a schematic of the process for reference. We apply the proposed DEKF and develop four local estimators for the four subsystems within a distributed framework.

The chemical process under consideration involves a feed stream containing pure reactant A, which enters the first reactor at flow rate $F_{01}$, temperature $T_{01}$, and molar concentration $C_{A01}$. The outlet stream from the first reactor, along with additional pure reactant A, is then fed into the second reactor, with flow rates $F_{1}$ and $F_{02}$, temperatures of $T_{1}$, and $T_{02}$, and molar concentrations of $C_{A1}$ and $C_{A02}$, respectively. A portion of the effluent from the second reactor is recycled back to the first reactor at a flow rate of $F_{r1}$, temperature of $T_{2}$, and concentration of $C_{A2}$. The remaining effluent enters the third reactor at a flow rate of $F_{2}$, temperature of $T_{2}$, and concentration of $C_{A2}$. Feed streams that contain pure A are also introduced in the third and fourth reactors, with flow rates $F_{03}$ and $F_{04}$, temperatures of $T_{03}$ and $T_{04}$, and molar concentrations of $C_{A03}$ and $C_{A04}$, respectively. Additionally, a portion of the effluent from the fourth reactor is recycled back to the first reactor at a flow rate of $F_{r2}$, temperature of $T_{4}$, and concentration of $C_{4}$. Each reactor is equipped with a jacket, which can be used to provide or remove heat to or from the corresponding reactor. A more comprehensive description of the process, the definitions of the parameters, and a dynamical process model can be found in Rashedi et al \cite{rashedi2017triggered} and Yin and Liu \cite{yin2020distributed}.

In the simulation, we consider a steady-state operating point $x_{s}=[310.84~\mathrm{K}~~3.03~\mathrm{kmol/m^{3}}$ $310.83~\mathrm{K}~~2.80~\mathrm{kmol/m^{3}}~~312.47~\mathrm{K}~~2.84~\mathrm{kmol/m^{3}}~~311.16~\mathrm{K}~~3.01~\mathrm{kmol/m^{3}}]$, which is corresponding to open-loop time-varying heating inputs applied to the four reactors $Q_1 = (1+2\sin(0.2\pi t))\times 10^4~\mathrm{kJ/h}$, $Q_2=(2+4\sin(0.2\pi t)\times 10^4~\mathrm{kJ/h}$, $Q_3=(2.5+5\sin(0.2\pi t))\times 10^4~\mathrm{kJ/h}$, and $Q_4=(1+2\sin(0.2\pi t))\times 10^4~\mathrm{kJ/h}$. The initial state $x_{0}$ is shown in Table \ref{tbl:initial_states}, where $C_{Ai}$ and $T_{i}$ represent the molar concentration of A and the temperature in the $i$th reactor, $i=1,2,3,4$.
Unknown additive system disturbances and measurement noise that follow Gaussian distribution with zero mean and standard deviation of $0.001\times x_{s}$ and $0.001\times Cx_{s}$ are introduced to the states and measured outputs, respectively.

\subsubsection*{Results on the four-CSTR chemical process}
%\begin{table}
%  \centering
%    \caption{The actual value $x_{0}$ and the initial guess $\hat{x}_{0}$ of the initial stage of the process}\label{tbl:initial_states}
%  \begin{tabular}{cccccccccc}
%   \toprule[1pt]
%   & $T_{1}$   & $C_{A1}$ & $T_{1}$  & $C_{A2}$ & $T_{3}$  & $C_{A3}$  & $T_{4}$  & $C_{A4}$ \\
%   \midrule
%   $x_{0}$ & 341.9213 & 3.3349 & 341.9161 & 3.0803 & 343.7129 & 3.1286 & 342.2733 & 3.3156 \\
%    $\hat{x}_{0}$ & 362.7388 & 3.5215 & 362.8758 & 3.2521 & 365.0446  & 3.3023 & 362.5779 & 3.5010 \\
%   \bottomrule[1pt]
%\end{tabular}
%\end{table}
The process is partitioned into four subsystems according to the physical topology: each subsystem accounts for one reactor. The proposed DEKF is applied to the process, and four estimators are developed based on the four subsystems.
The weighting matrix $Q_{i}$ of each estimator for subsystem $i$, $i=1,2,3,4$, is chosen as a diagonal matrix with the main diagonal elements being 150; the weighting matrix $R$ is selected as a diagonal matrix with the main diagonal elements being 1; $P_{i,0|-1}$, $i=1,2,3,4$, is a diagonal matrix with the main diagonal elements being 0.01. The initial guess of the eight process states that will be adopted by the proposed DEKF is given in Table \ref{tbl:initial_states}. The trajectories of the actual state and the state estimates provided by the proposed DEKF method are presented in Figure \ref{cstr_results}. The results confirm that the proposed DEKF algorithm can give accurate estimates of the actual states of this nonlinear chemical process.
%\begin{figure}[t]
%  \centering
%  \includegraphics[width=0.79\textwidth]{cstr_result-eps-converted-to.pdf}
%  \caption{The trajectories of the actual states and the state estimates for four reactors.}\label{cstr_results}
%\end{figure}

Through the simulations, we record all {\emph{posteriori}} estimates of the covariance matrices, i.e., $P_{i,k|k}$ for $k\geq0$, and record the matrices of the linearized system and subsystems, i.e., $C_{[:,i],k}$ and $A_{li,k}$. With these matrices, we verify that Assumption \ref{weakly} is satisfied at every sampling instant during the entire operating period, as the developed DEKF is implemented for the nonlinear chemical process. This is important for confirming the validity of the assumption and the associated theoretical findings on the stability of the proposed method for general nonlinear processes.

%\begin{figure}[t]
%  \centering
%  \includegraphics[width=1\textwidth]{WWTP.pdf}
%  \caption{A schematic of the wastewater treatment plant.}\label{WWTP}
%\end{figure}
\subsection*{Case study III: application to wastewater treatment plant}
In this subsection, we apply the proposed DEKF method for the state estimation of a wastewater treatment plant (WWTP) based on the Benchmark Simulation Model No.1 (BSM1) \cite{alex2008benchmark}. A schematic diagram of which is presented in Figure \ref{WWTP}.
The plant comprises a five-chamber biological reactor and a secondary settler. The biological reactor contains two sections: a non-aerated section constituted by the first two anoxic chambers, and an aerated section consisting of the remaining three aerobic chambers. The non-aerated section is designed to facilitate pre-denitrification reactions in which nitrate is converted into nitrogen. At the same time, the nitrification process, which involves the oxidation of ammonium into nitrite, takes place in the aerated section.

%\begin{table}
%  \centering
%    \caption{Definitions and physical meanings of key state variables of the WWTP.}\label{tb1}
%  \begin{tabular}{lll}
%  %\renewcommand\arraystretch{1.5}
%  \toprule
%  State & Physical meaning & Unit\\
%  \midrule
%  $S_{I}$  &  Inert soluble organic matter & $\mathrm{g ~COD ~m^{-3}}$\\
%  %\hline
%  $S_{S}$ &   Readily biodegradable and soluble substrate &$\mathrm{g ~COD ~m^{-3}}$\\
%  %\hline
%  $X_{I}$ &    Inert particulate organic matter  &$\mathrm{g~ COD ~m^{-3}}$\\
%    %\hline
%    $X_{S}$  &  Slowly biodegradable and soluble substrate &$\mathrm{g ~COD ~m^{-3}}$ \\
%  %\hline
%  $X_{B_{H}}$ &   Biomass of active heterotrophs  &$\mathrm{g ~COD~ m^{-3}}$\\
% % \hline
%  $X_{B_{A}}$ &    Biomass of active autotrophs  &$\mathrm{g ~COD ~m^{-3}}$\\
%    %\hline
%    $X_{P}$  &  Particulate generated from decay of organisms & $\mathrm{g~ COD ~m^{-3}}$\\
%  %\hline
%  $S_{O}$ &   Dissolved oxygen  &$\mathrm{g ~(-COD)~ m^{-3}}$\\
%  %\hline
%  $S_{NO}$ &    Nitrate nitrogen and nitrate  &$\mathrm{g~N~m^{-3}}$\\
%    %\hline
%    $S_{NH}$  &  Free and saline ammonia & $\mathrm{g~N~m^{-3}}$\\
%  %\hline
%  $S_{ND}$ &   Biodegradable and soluble organic nitrogen  &$\mathrm{g~N~m^{-3}}$\\
%%  \hline
%  $X_{ND}$ &    Particulate biodegradable organic nitrogen  & $\mathrm{g~N~m^{-3}}$\\
%   % \hline
%      $S_{ALK}$ &    Alkalinity  & $\mathrm{mol~ m^{-3}}$\\
%    %\hline
%  \bottomrule
%\end{tabular}
%\end{table}

In this wastewater treatment process, the inlet wastewater along with two recycle streams, enters the first chamber of the biological reactor at concentration $Z_{0}$ and flow rate $Q_{0}$. A portion of the outlet stream from the fifth aerobic chamber, with a flow rate of $Q_{f}$ and concentration of $Z_{f}$, is directed into the settler while the remaining portion of this outlet stream with concentration $Z_{a}$ is recycled back to the first chamber at flow rate $Q_{a}$. The secondary settler comprises 10 nonreactive layers, among which the fifth layer act as the feed layer. The settler effluent is composed of three portions: 1) the overflow of the settler containing purified water which is removed consecutively through the first layer at flow rate $Q_{e}$ and concentration $Z_{e}$; 2) a portion of the underflow of the settler, which is recycled back to the first chamber of the biological reactor at concentration $Z_{r}$ and flow rate $Q_{r}$; 3) the rest of the underflow, which is discharged through the bottom layer of the settler at flow rate $Q_{w}$ and concentration $Z_{w}$.

In the biological reactor of this plant, eight biological reaction processes and 13 major compounds are considered. The concentrations of these compounds in the five chambers constitute the state variables of the five-chamber biological reactor. The definitions of state variables in each chamber of the reactor are given in Table \ref{tb1}. In terms of the settler, the dynamical behaviors in each of the ten layers are described by 8 state variables, including $S_{O}$, $S_{ALK}$, $S_{NH}$, $S_{NO}$, $S_{S}$, $S_{I}$, $S_{ND}$ and $X$. The concentration of suspended solids in each layer, denoted by $X$, is calculated as the sum of $X_{s}$, $X_{I}$, $X_{B_{A}}$, $X_{B_{H}}$, $X_{P}$, and $X_{ND}$ in the corresponding layer. Therefore, in total 145 states variables are used to describe the dynamic behaviors of the entire wastewater treatment process. Detailed descriptions of this wastewater treatment process, a dynamic process model, and the definitions of the parameters involved are referred to \cite{alex2008benchmark}.

In each chamber of the reactor, eight physical sensors are deployed to collect output measurements, including the concentrations of dissolved oxygen, free and saline ammonia (i.e., NH3 and $\text {NH}4^{+}$), nitrate and nitrate nitrogen, alkalinity, chemical oxygen demand (COD), filtered chemical oxygen demand ($\mathrm{COD}_{f}$), biological oxygen demand (BOD) and the concentration of suspended solids \cite{yin2018subsystem, yin2019subsystem}. A more detailed explanation of the output measurements for the biological reactor is presented in Table~\ref{tb2}. In the secondary settler, the states in the top layer and the bottom layer are measured.
%The WWTP model can be described by a compact nonlinear state-space model as follows:
%\begin{align}\label{WWTP}
%  \dot{x}(t) &= f(x(t), u(t)) \\
%  y(t) & =Cx(t)
%\end{align}
%where $x\in\mathbb{R}^{145}$ denotes the state vector, $y\in\mathbb{R}^{49}$ represents the vector of all the measurements, $u\in\mathbb{R}^{3}$ denotes the input vector consisting of both the manipulated inputs and the uncontrolled input to the WWTP plant.

%\begin{table}
%  \centering
%    \caption{Relationship between the state variables output measurements for each chamber of the biological reactor. }\label{tb2}
%  \begin{tabular}{lll}
%  %\renewcommand\arraystretch{1.5}
%  \toprule
%  Measurement & Expression in the form of state variables  \\
%  \midrule
%  Concentration of dissolved oxygen & $S_{O}$ \\
%  %\hline
%   Concentration of free and saline ammonia & $S_{NH}$   \\
%  %\hline
%   Concentration of nitrate and nitrate nitrogen &  $S_{NO}$  \\
%    %\hline
%     Concentration of alkalinity &  $S_{ALK}$ \\
%  %\hline
%  COD &  $S_{S}+S_{I}+X_{S}+X_{I}+X_{B_{A}}+X_{B_{H}}$ \\
% % \hline
%  $\mathrm{COD}_{f}$ & $S_{S}+S_{I}$  \\
%    %\hline
%    BOD  & $S_{S}+X_{S}$ \\
%  %\hline
%  Concentration of suspended solids &  $X_{S}+X_{I}+X_{B_{A}}+X_{B_{H}}+X_{P}+X_{ND}$\\
%  %\hline
%  \bottomrule
%\end{tabular}
%\end{table}

\subsubsection*{Simulation setting}
%\begin{figure}[t]
%  \centering
%  \includegraphics[width=0.89\textwidth]{dry-eps-converted-to.pdf}
%  \caption{The trajectories of the actual states and the state estimates in dry weather.}\label{dry}
%\end{figure}
We consider the operation of the wastewater treatment process under three different weather conditions.
The data profiles of the wastewater fed into the WWTP under different weather conditions were obtained from the International Water Association website \cite{wastewater}. The data profiles include information on the inlet flow rate $Q_{0}$, and the corresponding concentration $Z_{0}$ under dry, rainy, and stormy weather conditions. The key model parameters for the process are set to be identical to those reported in Alex et al \cite{alex2008benchmark}. The process is decomposed into three subsystems for distributed state estimation, following the decomposition recommended by Yin et al \cite{yin2018subsystem}.
The sampling period for output measurement collection is selected as 15 min. Let $x^{i}_{s}$ denote the steady-state of the $i$th subsystem, $i = 1,2,3$.
In the simulation of the process model, the initial condition for each subsystem $i$ is chosen as $x_{0}^{i}=1.02\times x^{i}_{s}$.
%The initial guess supplied to each estimator $i$ is selected as $\hat{x}_{0|-1}^{i}=1.01\times x_{0}^{i}$, $i=1,2,3$, under the dry weather, while the initial guesses used by the estimators under the other two weather conditions are made more distant from the initial condition.
Unknown additive system disturbances following Gaussian distribution with zero mean and a standard deviation of $0.001\times x^{i}_{0}$ are made bounded within $[-0.005\times x^{i}_{0},~0.005\times x^{i}_{0}]$ and
added to the state equations for the $i$th subsystem; random measurement noise for subsystem $i$, $i=1,2,3$, is considered to be Gaussian white noise with zero mean and standard deviation of $0.001\times C_{ii}x^{i}_{0}$ and is bounded within $[-0.005\times C_{ii}x^{i}_{0},~0.005\times C_{ii}x^{i}_{0}]$.
%and added to the output measurement equation.
The weighting matrices $Q_{i}$ and $R_{i}$ of estimator $i$ are chosen as the diagonal matrices with the main diagonal elements being 0.5; $P_{i,0|-1}$ of the $i$th estimator, $i=1,2,3$, is a diagonal matrix with main diagonal elements being 0.01. With the selected parameters, a DEKF scheme consisting of three local estimators based on the three decomposed subsystem models is developed for the wastewater treatment process.

%\begin{figure}[t]
%  \centering
%  \includegraphics[width=0.71\textwidth]{error_update_time-eps-converted-to.pdf}
%  \caption{The trajectories of the estimation error given by the proposed DEKF algorithm and the distributed moving horizon estimation in Yin et al \cite{yin2018subsystem}.}\label{fig:error}
%\end{figure}

\subsubsection*{Results under dry weather condition}
The trajectories of the actual states and corresponding state estimates, for selected state variables, provided by the proposed DEKF for the dry weather condition are presented in Figure \ref{dry}. The results confirm that the proposed DEKF method can provide accurate estimates of the process states under dry weather.

%\begin{figure}[t]
%  \centering
%  \includegraphics[width=0.89\textwidth]{rain_update.pdf}
%  \caption{The trajectories of the actual states and the state estimates under rainy weather.}\label{rain}
%\end{figure}
In addition, we compared the performance of the proposed DEKF algorithm with a distributed moving horizon estimation (DMHE) approach presented in Yin et al \cite{yin2018subsystem} under dry weather condition.
In the proposed distributed estimation design, the local estimators receive and leverage output measurement information from the interacting subsystems when computing to provide state estimates of the corresponding local subsystem, while in DMHE design \cite{yin2018subsystem}, only the output measurements of the corresponding subsystem are used by each local estimator. In terms of the DMHE scheme based on Yin et al \cite{yin2018subsystem}, the length of the estimation horizon for each of the local estimators is set as $N_{1}=N_{2}=10$ and $N_{3}=20$; the weighting matrices $Q_{i}$ and $R_{i}$ are selected as the diagonal matrices with the main diagonal elements being 0.01 in a scaled setting; the weighting matrix $P_{i}$ is chosen as a diagonal matrix with the main diagonal elements being 0.5. The estimators of the DMHE use the same initial guesses $\hat{x}_{0|-1}$, $i=1,2,3$, as adopted by the proposed DEKF algorithm. The trajectories of the estimates given by the DMHE approach in Yin et al \cite{yin2018subsystem} are also shown in Figure \ref{dry}. The trajectories of the RMSE for estimates for the proposed DEKF approach and the DMHE approach in Yin et al \cite{yin2018subsystem} are presented in Figure \ref{fig:error}.
The proposed DEKF approach outperforms the DMHE scheme in terms of estimation accuracy. This is attributed to the fact that the DMHE approach in Yin et al \cite{yin2018subsystem} relies only on the output measurements of the local subsystem to generate subsystem state estimates, yet it disregards all the information contained in sensor measurements from the interacting subsystems. It is also worth mentioning that the computation time needed for executing the proposed DEKF has been significantly reduced as compared to the DMHE, since no optimization is involved in this recursive approach.

\subsubsection*{Results under rainy and storm weather conditions}

The proposed DEKF approach is further applied for estimating the process states under rainy and stormy weather conditions.
%The initial guess is considered as $\hat{x}^{i}_{0|-1}=1.1\times x_{0}^{i}$, which is more distant from the initial condition compared with that adopted in the dry weather condition.
The parameters of the estimators used for the dry weather condition are adopted for the rainy and stormy weather. Additional simulations are conducted, and the corresponding state estimation results are obtained. The trajectories of the actual states and the obtained state estimates for selected variables under the two weather conditions are presented in
Figures \ref{rain} and \ref{storm}, respectively. The proposed DEKF method can provide accurate estimates that can track the actual states in the presence of unknown disturbances.

%\begin{figure}[t]
%  \centering
%  \includegraphics[width=0.89\textwidth]{storm_update.pdf}
%  \caption{The trajectories of the actual states and the state estimates under stormy weather.}\label{storm}
%\end{figure}
\section*{Data Availability and Reproducibility Statement}
The numerical data for Figures \ref{numerical}, \ref{error_numerical}, \ref{cstr_results}, and \ref{dry}-\ref{storm} are provided in the Supplementary Material. The compressed file contains the steady-state values for the examples, the estimation results (including simulated process states and the estimates of these states), and the data profiles for the inlet flow rate $Q_{0}$, and the corresponding concentration $Z_{0}$ under dry, rainy, and stormy weather conditions that are related to the application to the wastewater treatment process. The data can be used to generate the figures presented in this paper.
Particularly, Figure \ref{error_numerical} presents the RMSEs for the proposed DKF algorithm, DMHE in Li et al \cite{li2023iterative}, and DMHE in Li et al \cite{li2023iterative} with fine-tuned hyper-parameters. We provide data on the estimation results for the three methods obtained from 500 simulation runs. Figure 3 can be generated based on the RMSEs for the 500 simulation runs. The data for Figure \ref{dry}, Figure \ref{rain}, and Figure \ref{storm} can be obtained by using the steady-state and input data profiles provided in the Supplementary Material, and following the simulation settings described in this paper.
Figure \ref{fig:error} shows the trajectories of the RMSE of estimation error for the proposed DEKF algorithm and the distributed moving horizon estimation method in Yin et al\cite{yin2018subsystem}.
%The data for Figure \ref{fig:error} can be obtained by calculating the estimation error from the data for Figure \ref{dry}.

\section*{Conclusions}
We proposed a partition-based distributed extended Kalman filter (DEKF) algorithm that enables efficient and scalable distributed state estimation for general nonlinear systems involving interconnected subsystems. A distributed Kalman filter design based on the subsystem models of a general linear system was established by exploring the equivalence between Kalman filtering and full-state information within the distributed context. Then, the linear design was extended to nonlinear systems, and a DEKF method was formulated. The stability of the proposed method was proven under mild and verifiable assumptions in a nonlinear setting. Extensive simulations were conducted to demonstrate the efficacy and superiority of the proposed method. Specifically, an open-loop unstable numerical example was used to verify the effectiveness of the proposed method in the linear setting and showcase its advantage over a distributed moving horizon estimation design. The effectiveness of the proposed DEKF, and the viability of a key assumption made in this work were confirmed by using a chemical process example consisting of four interconnected reactors. The proposed approach was also applied to a large-scale water treatment plant, and good estimates of all the 145 state variables were obtained. Through a comparative analysis, the proposed DEKF method was demonstrated to outperform a DMHE scheme developed for the state estimation of WWTP.
\section*{Acknowledgment}
This research is supported by Ministry of Education, Singapore, under its Academic Research Fund
Tier 1 (RG63/22), and Nanyang Technological University, Singapore (Start-Up Grant).

\newpage
\renewcommand\refname{Literature Cited}

\newpage~
\begin{figure}[htbp]
  \centering
  \includegraphics[width=0.65\textwidth]{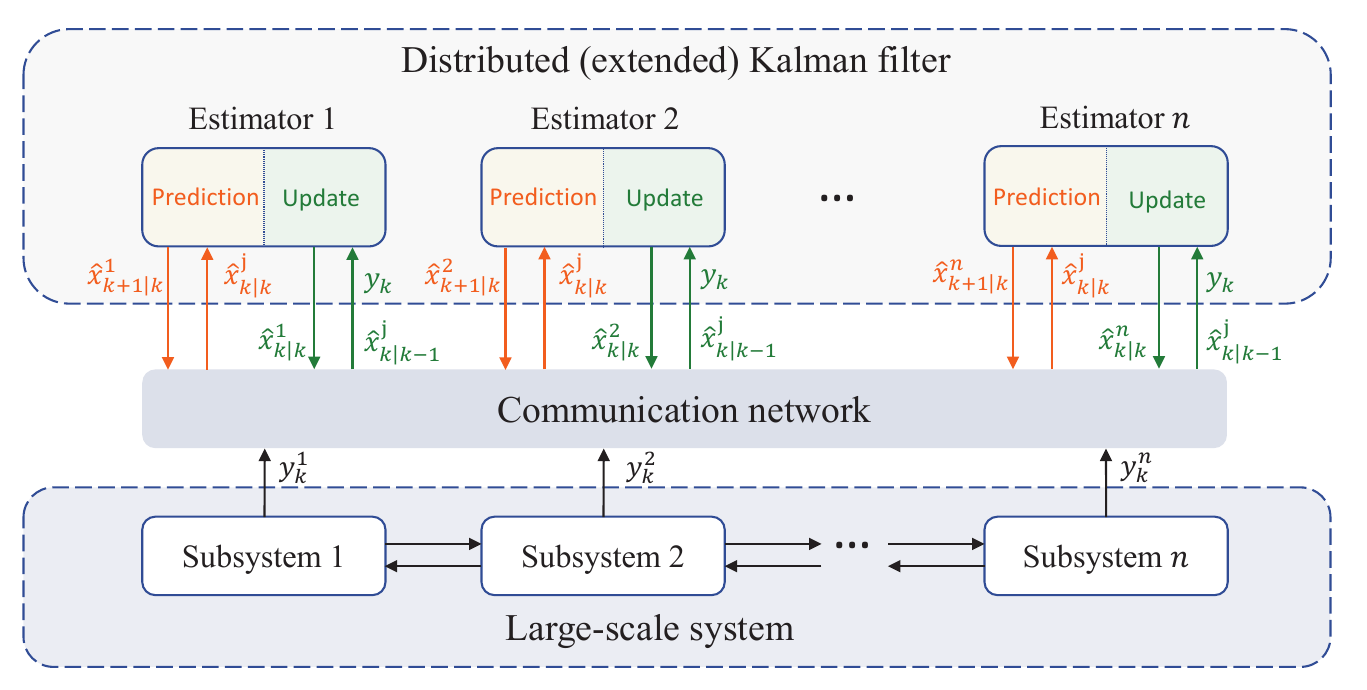}
  \caption{A schematic diagram of proposed partition-based distributed (extended) Kalman filter scheme}\label{fig:dkf}
\end{figure}

\newpage~
\begin{figure}[t]
  \centering
  \includegraphics[width=0.79\textwidth]{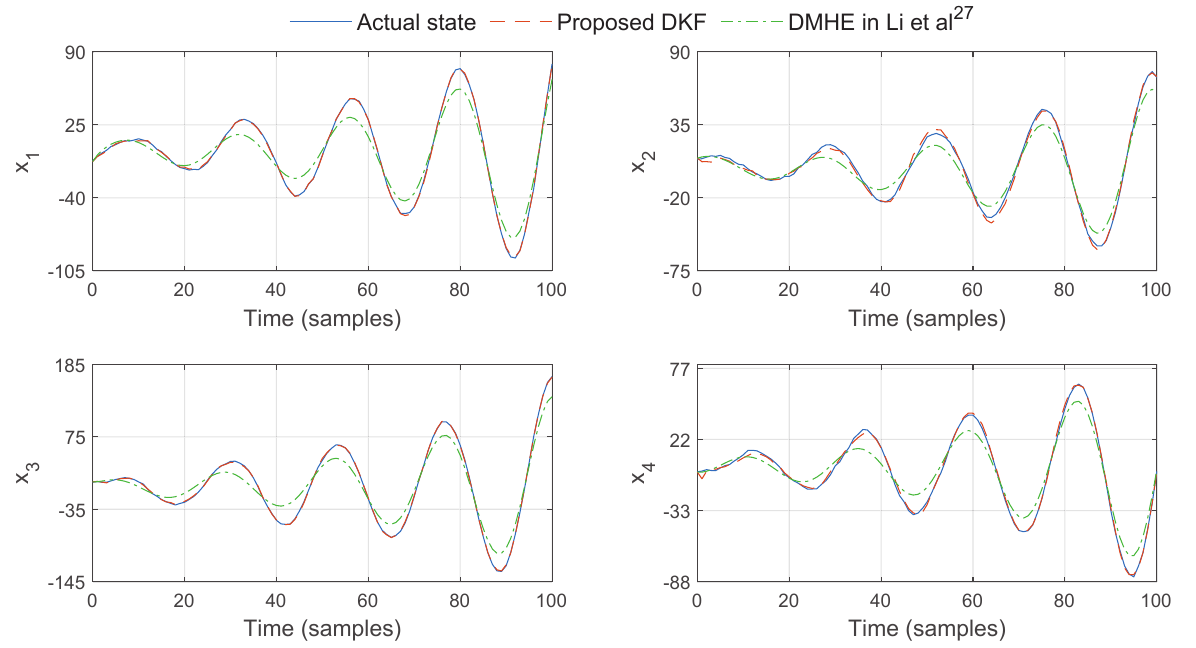}
  \caption{The trajectories of the actual states and the state estimates for the numerical example.}\label{numerical}
\end{figure}

\newpage~
\begin{figure}[htbp]
  \centering
  \includegraphics[width=0.75\textwidth]{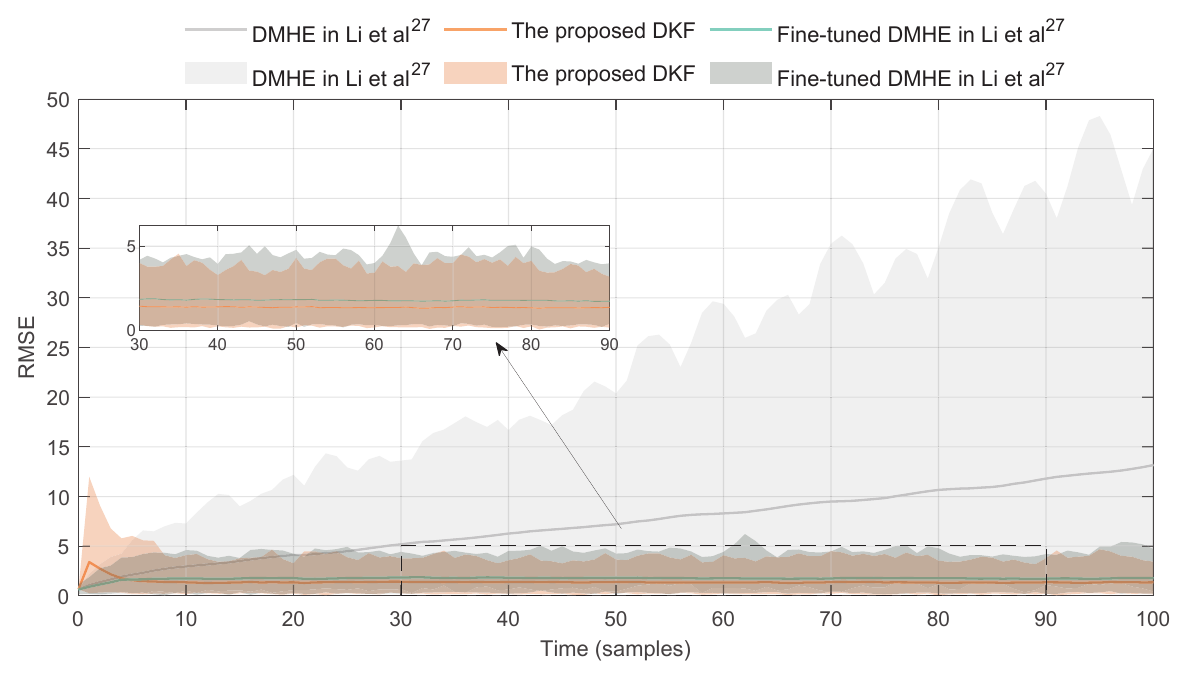}
  \caption{The trajectories of the RMSEs for the three designs: 1) proposed DKF; 2) DMHE in Li et al \cite{li2023iterative} with same hyper-parameters as DKF; 3) DMHE in Li et al \cite{li2023iterative} with fine-tuned hyper-parameters. The shadowed region shows the possible range of the RMSE over 500 simulations.}\label{error_numerical}
\end{figure}

\newpage~
\begin{figure}[tttt]
  \centering
  \includegraphics[width=0.87\textwidth]{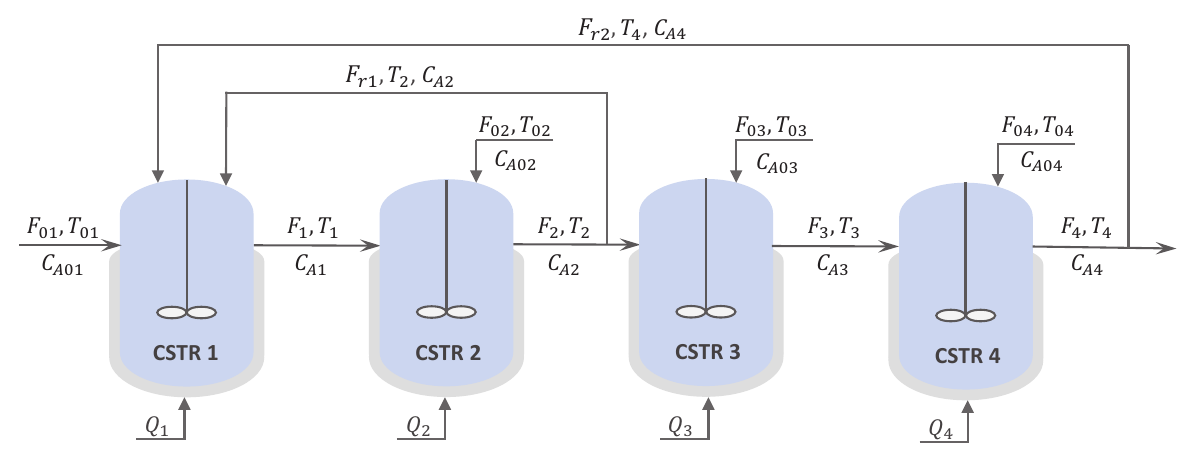}
  \caption{A schematic of the four-CSTR process.}\label{cstr}
\end{figure}

\newpage~
\begin{figure}[t]
  \centering
  \includegraphics[width=0.79\textwidth]{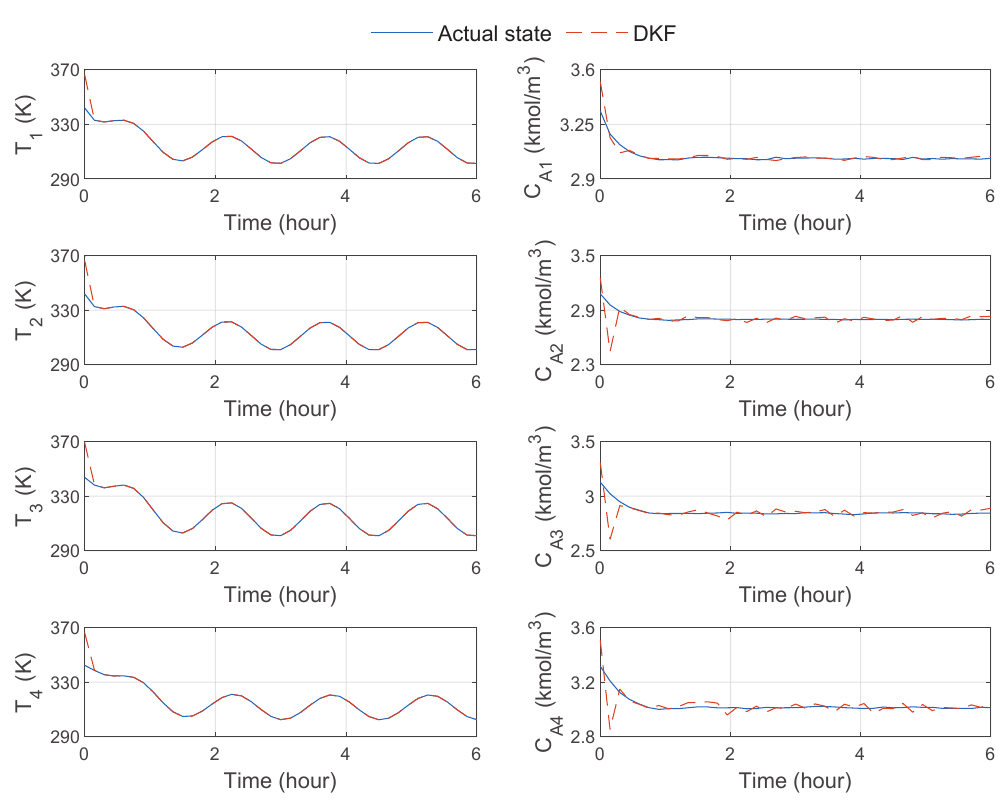}
  \caption{The trajectories of the actual states and the state estimates for four reactors.}\label{cstr_results}
\end{figure}

\newpage~
\begin{figure}[t]
  \centering
  \includegraphics[width=1\textwidth]{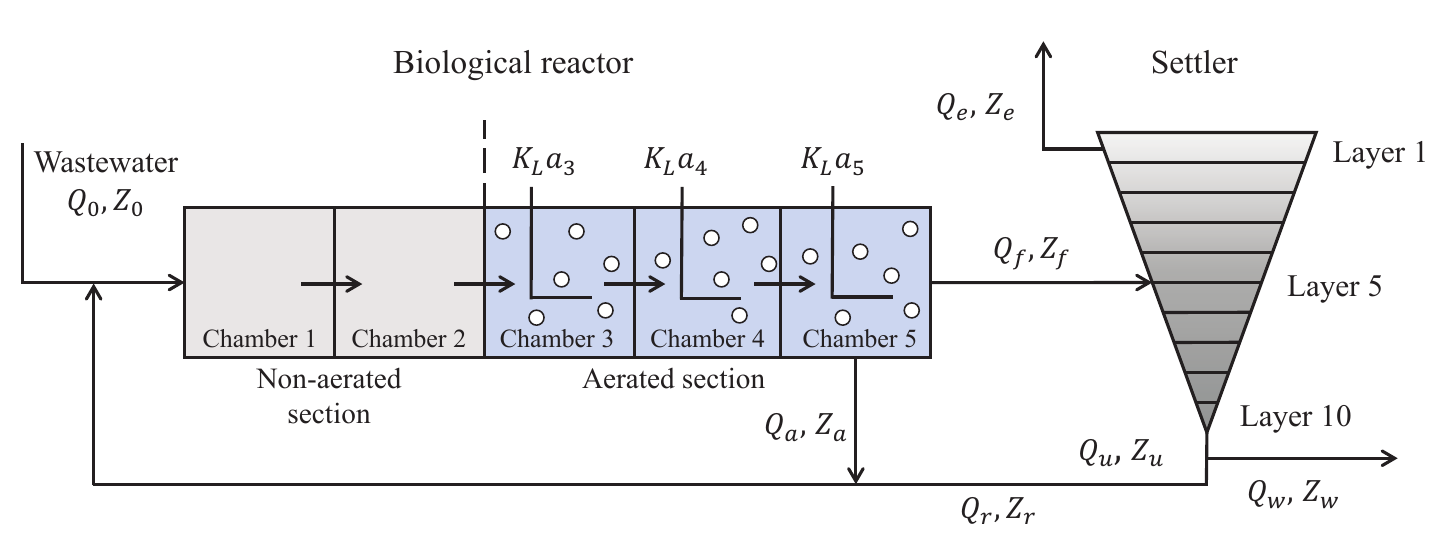}
  \caption{A schematic of the wastewater treatment plant.}\label{WWTP}
\end{figure}

\newpage~
\begin{figure}[t]
  \centering
  \includegraphics[width=0.89\textwidth]{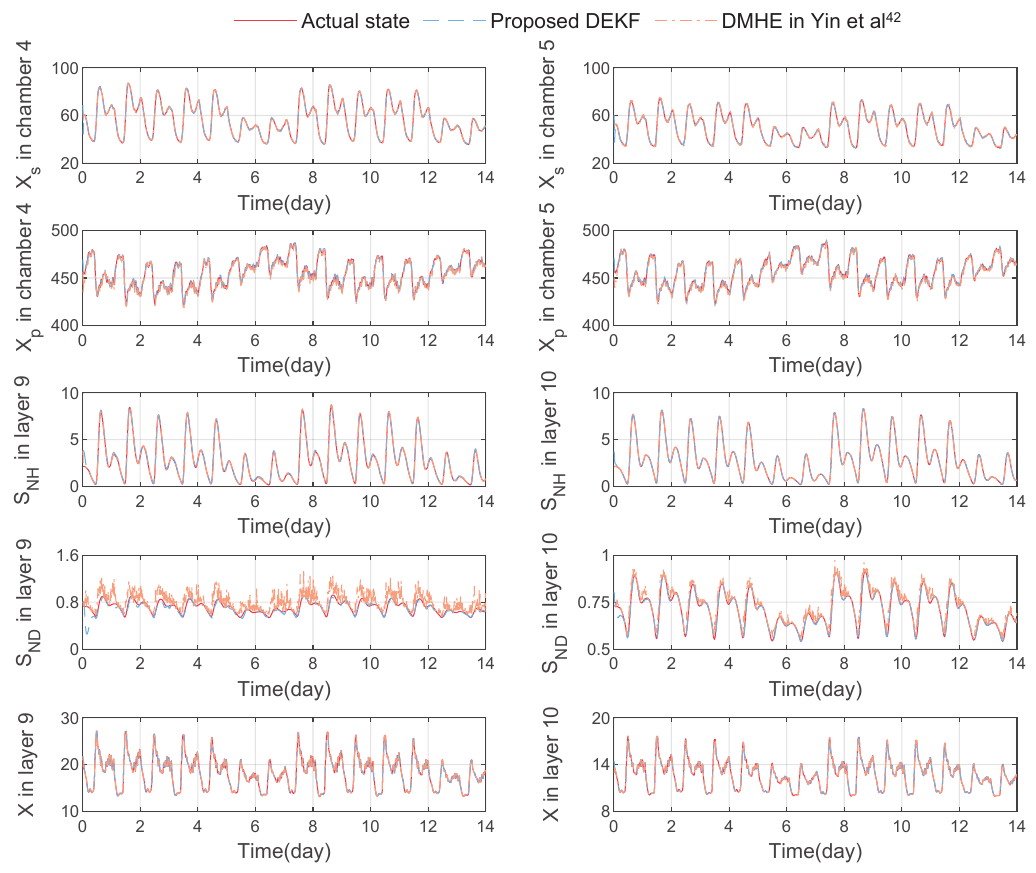}
  \caption{The trajectories of the actual states and the state estimates in dry weather.}\label{dry}
\end{figure}

\newpage~
\begin{figure}[t]
  \centering
  \includegraphics[width=0.71\textwidth]{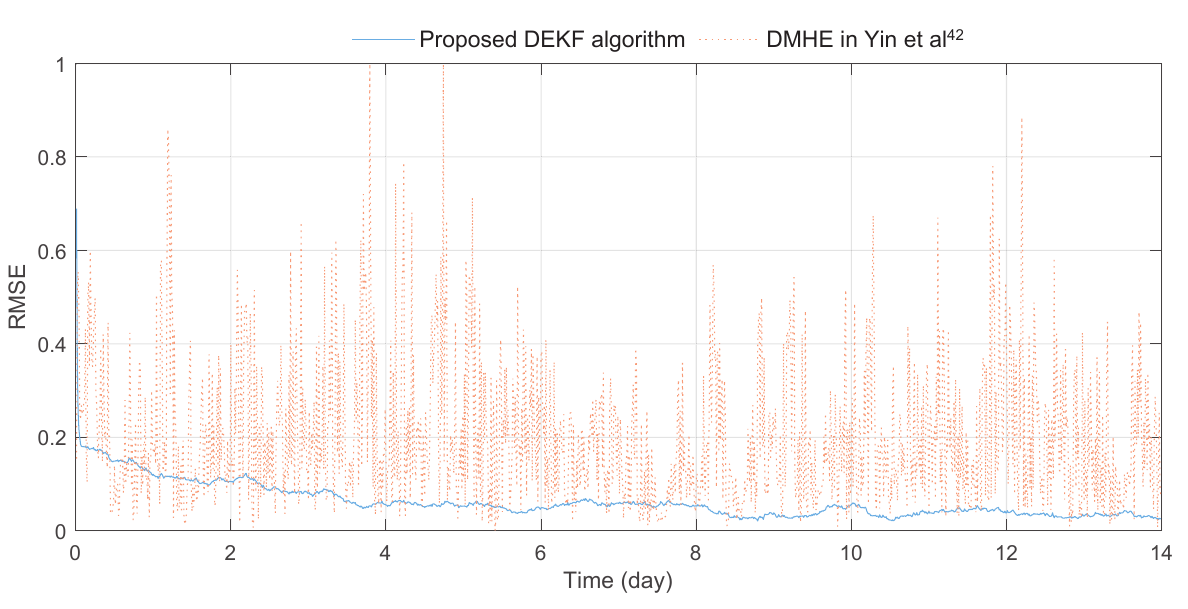}
  \caption{The trajectories of the estimation error given by the proposed DEKF algorithm and the distributed moving horizon estimation in Yin et al \cite{yin2018subsystem}.}\label{fig:error}
\end{figure}

\newpage~
\begin{figure}[t]
  \centering
  \includegraphics[width=0.92\textwidth]{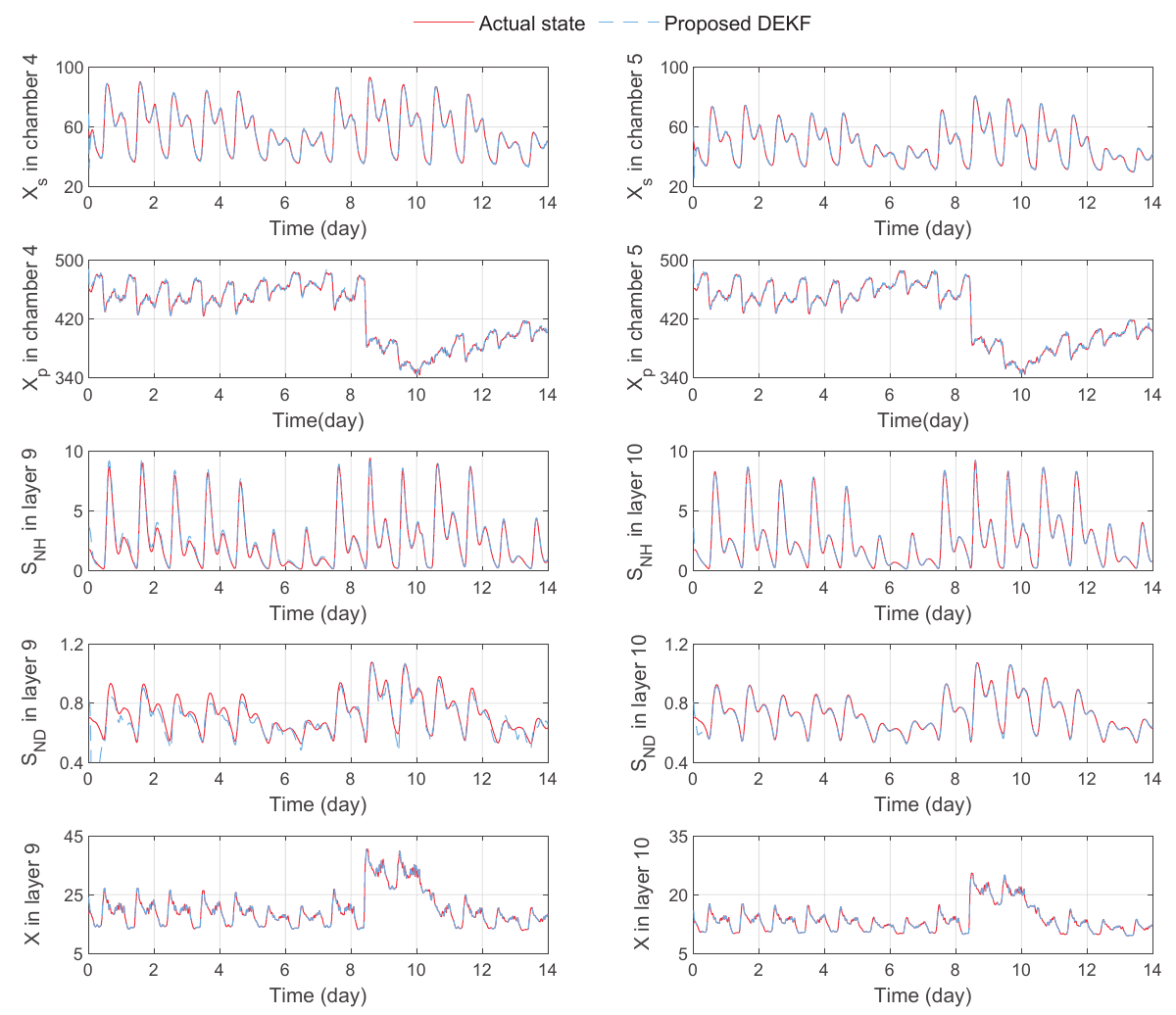}
  \caption{The trajectories of the actual states and the state estimates under rainy weather.}\label{rain}
\end{figure}

\newpage~
\begin{figure}[t]
  \centering
  \includegraphics[width=0.89\textwidth]{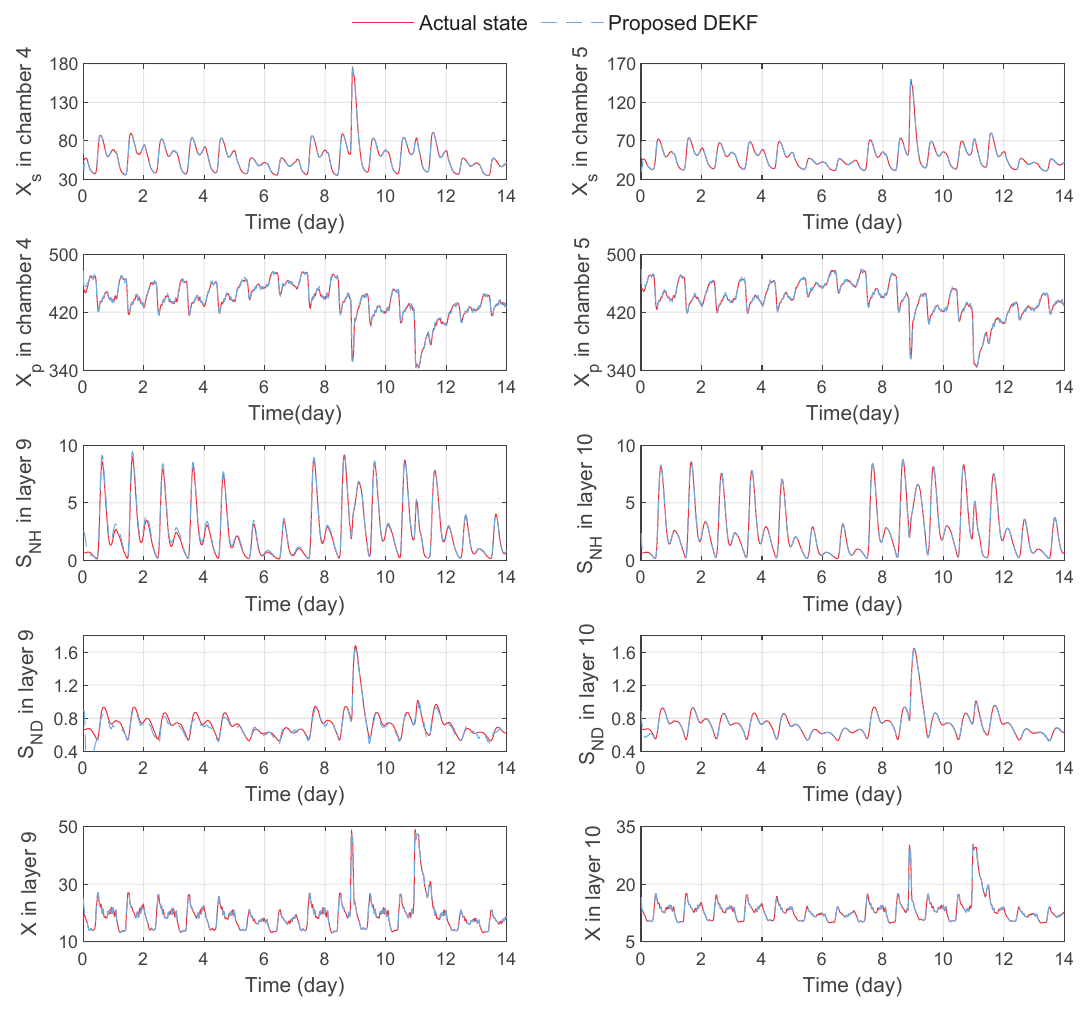}
  \caption{The trajectories of the actual states and the state estimates under stormy weather.}\label{storm}
\end{figure}

\newpage~
\begin{table}
  \centering
    \caption{The actual value $x_{0}$ and the initial guess $\hat{x}_{0}$ of the initial stage of the process}\label{tbl:initial_states}
  \begin{tabular}{cccccccccc}
   \toprule[1pt]
   & $T_{1}$   & $C_{A1}$ & $T_{1}$  & $C_{A2}$ & $T_{3}$  & $C_{A3}$  & $T_{4}$  & $C_{A4}$ \\
   \midrule
   $x_{0}$ & 341.9213 & 3.3349 & 341.9161 & 3.0803 & 343.7129 & 3.1286 & 342.2733 & 3.3156 \\
    $\hat{x}_{0}$ & 362.7388 & 3.5215 & 362.8758 & 3.2521 & 365.0446  & 3.3023 & 362.5779 & 3.5010 \\
   \bottomrule[1pt]
\end{tabular}
\end{table}

\newpage~
\begin{table}
  \centering
    \caption{Definitions and physical meanings of key state variables of the WWTP.}\label{tb1}
  \begin{tabular}{lll}
  \toprule
  State & Physical meaning & Unit\\
  \midrule
  $S_{I}$  &  Inert soluble organic matter & $\mathrm{g ~COD ~m^{-3}}$\\
  %\hline
  $S_{S}$ &   Readily biodegradable and soluble substrate &$\mathrm{g ~COD ~m^{-3}}$\\
  %\hline
  $X_{I}$ &    Inert particulate organic matter  &$\mathrm{g~ COD ~m^{-3}}$\\
    %\hline
    $X_{S}$  &  Slowly biodegradable and soluble substrate &$\mathrm{g ~COD ~m^{-3}}$ \\
  %\hline
  $X_{B_{H}}$ &   Biomass of active heterotrophs  &$\mathrm{g ~COD~ m^{-3}}$\\
 % \hline
  $X_{B_{A}}$ &    Biomass of active autotrophs  &$\mathrm{g ~COD ~m^{-3}}$\\
    %\hline
    $X_{P}$  &  Particulate generated from decay of organisms & $\mathrm{g~ COD ~m^{-3}}$\\
  %\hline
  $S_{O}$ &   Dissolved oxygen  &$\mathrm{g ~(-COD)~ m^{-3}}$\\
  %\hline
  $S_{NO}$ &    Nitrate nitrogen and nitrate  &$\mathrm{g~N~m^{-3}}$\\
    %\hline
    $S_{NH}$  &  Free and saline ammonia & $\mathrm{g~N~m^{-3}}$\\
  %\hline
  $S_{ND}$ &   Biodegradable and soluble organic nitrogen  &$\mathrm{g~N~m^{-3}}$\\
%  \hline
  $X_{ND}$ &    Particulate biodegradable organic nitrogen  & $\mathrm{g~N~m^{-3}}$\\
   % \hline
      $S_{ALK}$ &    Alkalinity  & $\mathrm{mol~ m^{-3}}$\\
    %\hline
  \bottomrule
\end{tabular}
\end{table}

\newpage~
\begin{table}
  \centering
    \caption{Relationship between the state variables output measurements for each chamber of the biological reactor. }\label{tb2}
  \begin{tabular}{lll}
  \toprule
  Measurement & Expression in the form of state variables  \\
  \midrule
  Concentration of dissolved oxygen & $S_{O}$ \\
  %\hline
   Concentration of free and saline ammonia & $S_{NH}$   \\
  %\hline
   Concentration of nitrate and nitrate nitrogen &  $S_{NO}$  \\
    %\hline
     Concentration of alkalinity &  $S_{ALK}$ \\
  %\hline
  COD &  $S_{S}+S_{I}+X_{S}+X_{I}+X_{B_{A}}+X_{B_{H}}$ \\
 % \hline
  $\mathrm{COD}_{f}$ & $S_{S}+S_{I}$  \\
    %\hline
    BOD  & $S_{S}+X_{S}$ \\
  %\hline
  Concentration of suspended solids &  $X_{S}+X_{I}+X_{B_{A}}+X_{B_{H}}+X_{P}+X_{ND}$\\
  %\hline
  \bottomrule
\end{tabular}
\end{table}
\end{document}